\documentclass[11pt,letterpaper,twoside,final,thmsb,notitlepage]{article}
\usepackage[latin9]{inputenc}
\usepackage{fancyhdr}
\usepackage{float}
\usepackage{amsmath}
\usepackage{amsthm}
\usepackage{amssymb}
\usepackage{graphicx}
\usepackage{setspace}
\usepackage[unicode=true,
 bookmarks=true,bookmarksnumbered=false,bookmarksopen=true,bookmarksopenlevel=1,
 breaklinks=false,pdfborder={0 0 1},backref=section,colorlinks=false]
 {hyperref}
\hypersetup{
 pdftex,pdfstartview=Fit}
\usepackage{breakurl}

\makeatletter

\theoremstyle{plain}
\newtheorem{lem}{\protect\lemmaname}
\theoremstyle{plain}
\newtheorem{prop}{\protect\propositionname}
\theoremstyle{plain}
\newtheorem{cor}{\protect\corollaryname}
\theoremstyle{definition}
 \newtheorem{example}{\protect\examplename}

\usepackage{amsfonts}
\usepackage{latexsym}
\usepackage{color}
\usepackage{fancyhdr}
\usepackage{amsthm}
\providecommand{\U}[1]{\protect \rule{.1in}{.1in}}

\makeatletter
\let\old@maketitle\maketitle
\renewcommand{\maketitle}{%
  \old@maketitle
  \begin{center}
    \vspace{-0.5em}   
    {\normalsize\itshape First Draft}
    \vspace{1.2em}    
  \end{center}
}
\makeatother

\newcommand{\notinanymore}[1]{}

\makeatother

\providecommand{\corollaryname}{Corollary}
\providecommand{\examplename}{Example}
\providecommand{\lemmaname}{Lemma}
\providecommand{\propositionname}{Proposition}

\begin{document}

\title{\textbf{Competitive Credit and Present Bias: A Stochastic Discounting
Approach}\thanks{For useful discussions, the authors are grateful to Daniele Condorelli,
Albin Erlanson, Marco Mariotti, and Arunava Sen. Chatterjee and Garrett
are grateful for support from the UKRI Frontier Research grant with
grant number EP/Z001528/1. }}

\author{\textbf{Siddharth Chatterjee}\thanks{Chatterjee: University of Essex, 123sidch@gmail.com.}\textbf{
\ and\ Daniel F. Garrett}\thanks{Garrett: University of Essex, d.garrett@essex.ac.uk.}}

\date{This draft: February, 2026}
\maketitle
\begin{abstract}
A prominent theme in behavioural contract theory is the study of present-biased
agents represented through quasi-hyperbolic discounting. In a model
of competitive credit provision, we study an alternative to this framework
in which the agent has a private stochastic discount factor and may
overestimate the likelihood of more patient values. Agent preferences,
however, are time-consistent. While a limiting case of our model corresponds
to a ``fully naive'' agent in work on quasi-hyperbolic discounting,
another case is where the agent has correct beliefs about future discounting.
In equilibrium, the agent selects options with earlier consumption
in case of less patient discount factor realisations, but is penalised
by receiving worse terms. Our model thus accounts for an important
feature of equilibrium contracts identified in Heidhues and K\H{o}szegi
(2010). Unlike Heidhues and K\H{o}szegi, our framework often predicts
excessively backloaded consumption, including when the agent holds
correct beliefs about future discounting.


\emph{JEL classification}: D82, D86

\emph{Keywords}: \ credit, lending, quasi-hyperbolic discounting,
stochastic discount factor, dynamic mechanism design
\end{abstract}
\bigskip{}

\thispagestyle{empty}
\pagebreak{}

\setcounter{page}{1}

\section{Introduction\label{Sec-Intro}}

A growing literature in behavioural contract theory studies present-biased
agents, with agents often either naive or partially naive about their
future behaviour.\footnote{Papers that include the study of such naive or partially naive agents
include O'Donoghue and Rabin (1999), DellaVigna and Malmendier (2004),
Gottlieb (2008), Heidhues and K\H{o}szegi (2010), Murooka and Schwarz
(2018), Bubb and Warren (2020), Gottlieb and Zhang (2021), Citanna
and Siconolfi (2022), Heidhues et al. (2024), Sulka (2024), and Englmaier
et al. (2026). } Present bias is modeled using quasi-hyperbolic discounting following
Laibson (1997) (who draws in turn on Phelps and Pollak, 1968). Agents
then have time-inconsistent preferences: an agent is a collection
of distinct ``selves'' which favour present over future utility. Naivety
means that agents mispredict the preferences of future selves. In
effect, they overestimate future patience.

In this paper, we suggest an alternative though complementary approach.
We propose agents with time-consistent preferences, but where discounting
is stochastic. While agents anticipate stochastic discount factors,
they may hold misspecified beliefs and overestimate future patience.
Behaviour is dynamically consistent in the sense that agents plan
contingent on future discounting realisations and follow through on
these plans. Nonetheless, a fully naive present-biased agent as studied
in existing work on quasi-hyperbolic discounting arises as a particular
limiting case in which beliefs over discounting are degenerate. Another
important case is an agent with correct, non-degenerate beliefs. We
apply this framework to competitive credit markets, building on Heidhues
and K\H{o}szegi (2010, henceforth HK) and Gottlieb and Zhang (2021,
henceforth GZ).

We follow closely the original motivation of HK, who sought to understand
how present bias of the agent interacts with competitive credit markets,
although our formal model is closest to GZ. HK pursues two related
sets of results: positive predictions about features of credit contracts
we might expect to see, and normative findings about welfare and contractual
distortions. A central positive prediction is that equilibrium credit
contracts offer the option of early repayment on good terms, while
present-biased agents ultimately repay later and are penalised for
doing so. A central normative finding is that agents borrow too much
and repay too late, as evaluated under the preferences of a long-run
self that does not exhibit present bias.

Our analysis addresses both positive and normative dimensions. On
the positive side, our model can capture similar contractual features,
and in particular penalties for early consumption. On the normative
side, time-consistent preferences seem to call for a different welfare
benchmark. As explained below, we then find that consumption often
occurs too late rather than too early. These implications arise not
only when agent beliefs about future discounting are misspecified,
but also when they are correct.

\textbf{Model. }Our model features an agent and at least two firms
that provide credit. The agent first learns a discount factor that
determines how consumption in the initial period is valued relative
to the future. At this point, firms compete by offering dynamic mechanisms
that determine agent consumption over time. The agent chooses a mechanism
to accept and then both parties are fully committed. The agent learns
a new discount factor in each period, taken to be i.i.d., again determining
how future consumption is valued relative to the present. After learning
the discount factor in a period, the agent sends a message to the
mechanism and consumption is determined. In equilibrium, the agent's
strategy results in a process for consumption that satisfies an optimality
condition. Namely, it maximises the agent's discounted expected payoff
at the initial date given (possibly misspecified) agent beliefs. This
is subject to incentive compatibility constraints and a non-negative
profit condition for the firm, where expected profits are evaluated
according to firm beliefs which we view as correct and which therefore
potentially differ from the agent's.

\textbf{Cases closer to the existing literature. }We begin in Section
3 by studying the setting that is closest both in terms of the environment
and the analysis to the earlier work. The agent learns a discount
factor in each period that he believes to be drawn from two possible
values (``patient'' and ``impatient''). Firms, however, correctly
believe that the agent always draws the impatient value. Our view
that firms hold correct beliefs is consistent with much of the behavioural
industrial organisation literature (e.g., Eliaz and Spiegler, 2006,
Grubb, 2009). The agent therefore persistently overestimates the likelihood
of being patient. In the limiting case where the agent believes he
will be patient for sure, our agent has the same preferences and beliefs
as in the quasi-hyperbolic discounting model of GZ where the agent
is ``fully naive''.

Our two-type model predicts contracts similar to those in GZ and HK:
in equilibrium, firms offer the agent the option of highly backloaded
consumption, or more front-loaded consumption but on worse terms.
The agent evaluates firm offers on the basis that backloaded consumption
will be chosen with positive probability, but ultimately always chooses
the more front-loaded consumption. The proximity of our findings to
the earlier literature (HK and GZ) suggests some initial observations
about what is required in order for models to obtain the earlier predictions
on contracts and behaviour. Although our model generates the same
basic behaviour, preferences are time consistent because, at each
date, the agent agrees on the preference ordering conditional on the
realisation of the state (i.e., discount factors). Behaviour is also
dynamically consistent: the agent makes a plan of action for each
state and follows through on the plan, although beliefs about the
likelihood of each state are incorrect. Given that the agent assigns
positive probability to the low discount factor in each period, the
realisation of this discount factor of itself does not suggest the
possibility that the discounting process is misspecified.\footnote{Here, we have in mind the perspective of Galperti (2019), in which
observations that a decision maker deems impossible leads to abandonment
of the maintained model. In our setting, since the agent assigns positive
probability to low discount factors, observing them would not lead
the agent to abandon his understanding of the stochastic process governing
discounting. This contrasts with models of quasi-hyperbolic discounting
by a naive agent, where the agent repeatedly encounters discounting
preferences that differ from his prior understanding. } This compares with the existing literature which views contractual
form as driven by quasi-hyperbolic discounting and (possibly partial)
naivety. There, agents are a sequence of ``multiple selves'' who would
like to maintain more patient behaviour in the future, but are tempted
to make decisions favouring the present. Naivety (including partial
naivety) generally means that the agent has a point belief that the
agent will discount the future less strongly than he actually does.\footnote{See, however, Citanna and Siconolfi (2022), where agent preferences
are time-inconsistent, but where beliefs about future self-control
problems are non-degenerate. We expand on the comparison to this paper
in the literature review below.}

As noted, our alternative conceptualisation suggests a different approach
to welfare, which we use throughout the paper. Given time-consistent
preferences, there is a single decision-making self in our model.
While there is then no decision about which self to prioritise in
welfare analysis, there remains the question (when the agent holds
misspecified beliefs) of the appropriate choice of beliefs. We follow
other industrial organisation studies with misspecified consumer beliefs
by measuring welfare against firm beliefs which are assumed correct.
As Grubb (2009) argues (footnote 10), this approach seems easy to
rationalise on the basis of a firm that has the benefit of encountering
many similar consumers and thus understanding their preferences, while
individual consumers lack this advantage. Then equilibrum consumption
is excessively backloaded (Proposition \ref{Prop:More_Backloaded}),
and in particular it is inefficient as the number of periods under
examination becomes large (Proposition \ref{prop:LONG_RUN_INEFFICIENT}).
The latter claim stands in contrast to GZ (Theorem 1) and Citanna
et al. (2023, Theorem 1), which find approximately efficient timing
of consumption for long horizons. The reason relates to the choice
of efficient benchmark: efficiency is judged against discount factor
realisations equal to the low value, which are viewed as the true
realisations. Our finding is perhaps intuitive: since firms' contracts
cater to an agent that believes he will be more patient in future
than he actually is, equilibrium consumption turns out to be excessively
backloaded (we find that this intuition also carries over to richer
settings -- see below). 

\textbf{Richer settings. }Section 4 then extends the model to settings
with multiple values of the discount factor, and where firm and agent
beliefs have full support over these values. We assume that agents
weakly overestimate their degree of patience, now nesting the case
of correct beliefs. We explain that it is useful to view equilibrium
mechanisms as shifting utility towards more patient agent types --
indeed incentive constraints pointing from lower to higher discount
factors bind (Lemma \ref{lem:UPWARD_BINDS_GEN}). As a consequence,
we find quite generally that more impatient type realisations receive
more front-loaded consumption but on worse terms, where the terms
of the contract are judged against the firms' rate of time preference
(Corollary \ref{cor:WORSE_TERMS}). This includes the case where the
agent has correct beliefs. The result contrasts with HK's (p 2300)
discussion of their alternative ``neoclassical'' model, which they
argue would yield ``essentially the opposite qualitative contract
features'' to the ones predicted by the model with quasi-hyperbolic
discounting.

The similarity in the form of contracts for the setting with correct
agent beliefs is a key motivation for studying our framework. While
a limiting case of our model corresponds to a fully-naive agent in
existing work on quasi-hyperbolic discounting (see GZ), some of the
key forces in the model are preserved as we move towards a model where
agent beliefs are correct. We are therefore better able to evaluate
which behaviours are truly driven by departures from neoclassical
assumptions in the existing literature. For instance, we discuss a
possible difference of our model with correct beliefs and many discounting
values (i.e., many types): penalties for early consumption can be
smaller as the agent smoothly adjusts contractual terms in response
to lower discount factor realisations.

We also show (in Proposition \ref{prop:BACKLOADING}) that excessive
backloading of consumption can arise, including when agent beliefs
are correct. The reason for this finding is that backloading of consumption
relaxes upward incentive constraints and therefore allows higher payoffs
to be delivered in case of higher discount factor realisations. Delivering
higher payoffs for higher discount factors is in fact something called
for by efficiency (see Proposition \ref{prop:EFFICIENT_POLICY}).
In this sense, equilibrium mechanisms can feature distortions in two
directions: too little consumption after high discount factors, but
also consumption that is too backloaded. The use of backloading of
consumption to relax upward incentive constraints is similar to the
backloading of the option that the agent ultimately does not choose
in the baseline models of GZ and HK. The difference when the agent
has correct beliefs is that the agent is correct in anticipating the
probability of selecting options with backloaded consumption.

Although the version of the model with correct agent beliefs also
features distortions, the implications for the role of regulation
are quite different from existing work.\footnote{See HK and Citanna et al. (2025) who consider regulatory interventions
in competitive credit markets with present-biased agents.} Here, the appropriate welfare criterion is unambiguous. Moreover,
the agent's expected payoffs are maximal, subject to firms breaking
even and to incentive compatibility. Since regulatory intervention
would be subject to the same incentive and firm participation constraints,
a regulator cannot improve agent welfare relative to the laissez-faire
outcome.

Finally, we note that our model with stochastic discount factors can
help to understand the role of agent misprediction of future discounting
quite generally. We derive (in Proposition \ref{prop:CONSTANT_PERTURBATION})
inverse Euler equations that characterise equlibrium and efficient
consumption, using a perturbation analysis related to Rogerson (1985).
In the case of logarithmic utility (see Corollary \ref{Cor:LOG-CASE}),
this has an interpretation in terms of expected consumption. When
agent beliefs are correct, the rate of growth in expected consumption
is the same in equilibrium as for the efficient benchmark. Otherwise,
when the agent overestimates patience, expected consumption grows
more quickly in equilibrium. The intuition is that equilibrium mechanisms
pander to more patient discount factor realisations, because the agent
mistakenly places excessive weight on the probability of these realisations. 

\subsection{Further literature}

This paper is related to the literature on dynamic mechanism design.
While many papers (see Pavan, Segal, and Toikka, 2014) consider persistent
types, we specialise here to the case of i.i.d. types. This is common,
for instance, in early literature on incentive-compatible intertemporal
lending and insurance, including Green (1987), Thomas and Worrall
(1990), and Atkeson and Lucas (1992). Our work differs from these
earlier papers especially due to private information on discount factors
rather than income/endowments, and because we allow agent beliefs
to be misspecified.

Work involving stochastic discount factors is quite common in macroeconomics
and finance -- see Stachurski and Zhang (2021) for a review. In our
model of the agent, discount factors can be interpreted as subjective
states, in the sense of Kreps (1979) and Dekel et al. (2001). Gul
and Pesendorfer (2004) and Higashi et al. (2009) provide applications
of this framework to dynamic decision problems. Notably, Higashi et
al. provide a decision theoretic foundation for modeling an agent
with an i.i.d. stochastic discount factor, as in our paper.

Our application to dynamic credit markets warrants further discussion
of Citanna and Siconolfi (2022). This paper, like ours, allows the
agent to have non-degenerate but possibly misspecified beliefs about
future discounting. However, key differences are that the agent has
time-inconsistent preferences and (as the authors remark, p 11) there
is disagreement between the preferences of each self and those anticipated
for future selves (ruling out the case often descibed in the literature
as ``fully naive'', see for instance GZ, p 797). Citanna and Siconolfi
show that, unlike in the model of GZ, inefficiencies need not vanish
asymptotically. They show that there may be too little or too much
consumption at the initial date (``underborrowing'' or ``overborrowing'').
Our findings on asymptotic inefficiency in Section \ref{sec:Degenerate_Impatience}
are driven by different considerations, especially related to the
choice of welfare benchmark. Our results on excessive backloading
of consumption go beyond statements about consumption at the initial
date.

\section{General Model\label{sec:MODEL}}

\textbf{Environment and payoffs. }A number $J\geq2$ of firms, $j=1,2,\dots,J$,
offer long-term lending contracts to a single representative agent
over a finite number $T\ge3$ periods, $t=1,2,\dots,T$. The firm
whose contract is accepted can be viewed as making payments to and
receiving payments from the agent, with the agent able to make payments
out of available income that he receives as an endowment in each period.
This arrangement can equally be represented as the contract determining
a non-negative agent consumption $c_{t}$ in period $t$, and the
lender being claimant on all remaining income. Let $R\geq1$ be the
gross interest rate of firms (all firms face the same interest rate).
Then we let $I_{T}>0$ denote a firm's date-1 value of the agent's
deterministic income stream when the horizon length is $T$. It is
natural to view $I_{T}$ as increasing in $T$ and converging to some
$I_{\infty}<\infty$ with $T$. A firm's profit in a setting with
horizon length $T$ is 
\[
I_{T}-\sum_{t=1}^{T}\frac{c_{t}}{R^{t-1}}.
\]
We sometimes refer to the date-$t$ NPV of a consumption stream $\left(c_{\tau}\right)_{\tau=t}^{T}$
as the firm's cost of the agent's consumption.

The agent enjoys a payoff from a consumption stream $\left(c_{t}\right)_{t=1}^{T}$
which is determined through a per-period utility function $u:\mathbb{R}_{+}\rightarrow\mathbb{R\cup\left\{ -\infty\right\} }$
and the realisation of a sequence of discount factors $\left(\delta_{t}\right)_{t=1}^{T-1}$,
with $\delta_{t}>0$ for each $t$. In particular, the agent's realised
utility is 
\[
u\left(c_{1}\right)+\sum_{t=2}^{T}\Pi_{s=1}^{t-1}\delta_{s}u\left(c_{t}\right).
\]
We assume that $u$ is continuous, strictly increasing, strictly concave,
twice continuously differentiable on $(0,\infty)$, and that $\lim_{c\rightarrow\infty}u\left(c\right)=\infty$
while $\lim_{c\rightarrow\infty}u'\left(c\right)=0$. In Section \ref{sec:Degenerate_Impatience},
we impose that $u\left(0\right)=0$, so there is a finite lower bound
on the utility that may be obtained (that the minimum utility is zero
is a convenient normalisation, without loss of generality). Our motivation
here is to bring our environment as close as possible to HK and especially
GZ, to show that our framework captures similar economic logic and
to highlight differences. In Section \ref{sec:RICHER}, we then allow
also the case where $\lim_{c\downarrow0}u\left(c\right)=-\infty$
(and in such cases we may as well specify $u\left(0\right)=-\infty$).
This will have the advantage of guaranteeing that there are no corner
solutions.

\textbf{Agent discount factors (types). }Our general model of agent
preferences (although markedly specialised in Section \ref{sec:Degenerate_Impatience})
is as follows. In each period $t$, the agent privately learns his
discount factor $\delta_{t}$ that determines how future payoffs are
discounted relative to the present. There is a finite number $N\geq2$
of possible values for $\delta_{t}$ in each period (the agent's date-$t$
``type''), given by $\delta^{(n)}$ for $n=1,\dots,N$. Here, $\delta^{(n)}$
is strictly increasing in $n$. Let $\Delta$ be the collection of
the $N$ possible types. The value $\delta_{t}$ is the realisation
of a random variable $\tilde{\delta}_{t}$ that is commonly known
to be distributed i.i.d. across periods. Firms correctly perceive
the probability of type $\delta^{(n)}$ to be $p_{n}\geq0$ whereas
the agent perceives the probability to be $q_{n}\geq0$ (with agent
beliefs known to firms). Naturally, $\sum_{n=1}^{N}p_{n}=\sum_{n=1}^{N}q_{n}=1$.
We have in mind an agent who believes all types are possible and so
assume $q_{n}>0$ for all $n$ (we refer to this requirement as agent
beliefs having ``full support''). We assume that the agent is ``more
optimistic'' than the firms about patience in the sense of first-order
stochastic dominance: i.e., for all $m$, $\sum_{n=1}^{m}p_{n}\geq\sum_{n=1}^{m}q_{n}$.
We say the agent is ``strictly more optimistic'' if he is more optimistic
and the two distributions differ. The reason we focus on cases where
the agent is more optimistic than the firms is consistency with the
focus of the literature on present bias, the agent believes he will
on average act more patiently in future than he actually does. However,
because we allow that $p_{n}=q_{n}$ for all $n$, we nest the case
with common (and hence correct) beliefs.

A history of types up to date $t\leq T-1$ is denoted $H^{t}\in\Delta^{t}$.
We identify with any history $H^{T}$ the $T-1$-tuple corresponding
to the first $T-1$ realisations. In general, we write $H_{t}=\delta_{t}$
for the $t^{th}$ element of the history and write $H_{T}=\delta_{T}=\emptyset$,
understanding that $\left(H_{1},H_{2},\dots,H_{T-1},\emptyset\right)=\left(H_{1},H_{2},\dots,H_{T-1}\right)$.
We write for $\tau\geq t$, $H_{t}^{\tau}=\left(H_{t},H_{t+1},\dots,H_{\tau}\right)$
(and again $H_{t}^{T}=H_{t}^{T-1}$, the restriction to the first
$T-t$ elements). 

\textbf{Feasible mechanisms.}\footnote{In defining the available mechanisms, and subsequently restricting
equilibria, we ensure that equilibria are characterised through optimisation
problems that have a similar structure to the ones in HK and GZ. In
particular, the agent's date-1 expected payoff is maximised (for each
initial type $\delta_{1}=\delta^{\left(n\right)}$) subject to incentive
constraints and a firm break-even condition. This has the apparent
benefit of aiding comparability with HK and GZ. }\textbf{ }Each firm $j$ makes a one-time simultaneous initial move
by offering a dynamic mechanism $M_{j}$. The agent can accept only
one mechanism, and if he accepts the mechanism of firm $j$, then
both are committed to it. The agent sends a message in each period
and enjoys the resulting consumption. We assume that contractual offers
are private to the agent and not contractible/verifiable, so mechanisms
that are not accepted play no further role.\footnote{The assumption ensures, in particular, that the terms of the contract
under a given contractual offer do not depend on the set of other
contractual offers he receives. Therefore, the payoffs available to
the agent from a given offer will not depend on the other offers he
receives. } We also restrict attention to deterministic mechanisms with a finite
set of messages in each period. While we believe these latter restrictions
can be relaxed without affecting the results, they simplify the analysis
and seem justifiable on grounds of realism. 

\textbf{Equilibrium restrictions. }We restrict attention to equilibria
in which the agent plays a pure message strategy and breaks indifferences
in favour of the firm: If, at any history, the agent is indifferent
over a set of messages in the mechanism he has accepted from a given
firm, then he selects a message which maximises the firm's continuation
payoff (expected NPV of profits). We also restrict attention to equilibria
in which firms choose pure strategies. 

Given the restriction of feasible mechanisms and the equilibrium restrictions
above, there is no loss of generality in restricting attention to
equilibria involving only deterministic incentive-compatible direct
mechanisms, with the set of messages in each period $t\leq T-1$ equal
to $\Delta$. By this, we mean that, for any equilibrium in which
a firm or firms offer a mechanism that is not direct, there is another
equilibrium with the same distribution over outcomes where all offered
mechanisms are direct.\footnote{\label{fn:DIRECT_MECH}For all $j$ and $n$, any collection of mechanisms
$\left(M_{1},\dots,M_{J}\right)$, let $\sigma_{j,n}^{orig}\left(M_{1},\dots,M_{J}\right)$
denote the probability the agent of type $\delta_{1}=\delta^{\left(n\right)}$
accepts the mechanism of Firm $j$ in the original equilibrium. Supposing
Firm 1 chooses some (possibly indirect) mechanism $M_{1}^{orig}$
in the original equilibrium, define a new equilibrium in which Firm
1 chooses the outcome equivalent direct mechanism $M_{1}^{new}$ (i.e.,
it induces, for each agent type history, the same values of consumption
given that the agent reports truthfully), while all other firms' choices
remain unchanged. Then define a modified agent strategy such that,
for all $j$ and $n$, and all $\left(M_{1},\dots,M_{J}\right)$,
$\sigma_{j,n}^{new}\left(M_{1},\dots,M_{J}\right)=\sigma_{j,n}^{orig}\left(M_{1},\dots,M_{J}\right)$
whenever $M_{1}\neq M_{1}^{new}$, while $\sigma_{j,n}^{new}\left(M_{1}^{new},\dots,M_{J}\right)=\sigma_{j,n}^{orig}\left(M_{1}^{orig},\dots,M_{J}\right)$.
Moreover, the agent reports truthfully in Firm 1's mechanism $M_{1}^{new}$,
while his reporting strategy is otherwise unchanged. Then we have
a new equilibrium in which Firm 1's mechanism is direct. The same
argument can be applied iteratively to all firms, establishing the
claim.} A deterministic direct mechanism $M^{D}$ is simply (though slightly
abusively) a sequence of functions $\left(c_{t}\right)_{t=1}^{T}$,
with $c_{t}:\Delta^{t}\rightarrow\mathbb{R}_{+}$ for $t\leq T-1$
and $c_{T}:\Delta^{T-1}\rightarrow\mathbb{R}_{+}$. Since our interest
is in direct mechanisms in which the agent reports truthfully in equilibrium,
we narrow the usual definition of a direct mechanism, requiring the
following. For a mechanism to be called direct, at any history such
that the agent is indifferent over messages, truthful revelation must
be a report that maximises firm profits.\footnote{Put differently, if there are multiple optimal reporting strategies
for the agent, truth-telling must maximise firm profits among them.} It is then without loss of generality to restrict attention to equilibria
in which the agent reports truthfully in any incentive-compatible
direct mechanism. 

\textbf{Timing. }The timing of the game can then be summarised as
follows. In the initial period:
\begin{enumerate}
\item First, the agent learns his date-1 type $\delta_{1}=\delta^{(n)}$,
determining the discounting of the future relative to date 1.
\item Next, the $J$ firms simultaneously offer a mechanism subject to the
feasibility requirements.
\item The agent of realised initial type $\delta^{(n)}$ chooses among the
$J$ offered mechanisms, accepting at most one of them, and sends
an initial message.
\item The mechanism determines the agent's date-$1$ consumption.
\end{enumerate}
In a generic period $t\geq2$, the mechanism has already been determined
and a history of reports $H^{t-1}$ have been made. The timing is
then as follows: 
\begin{enumerate}
\item The agent learns his date $t$ type $\delta_{t}\in\Delta$ (unless
$t=T$, in which case he learns nothing).
\item The agent makes a report $\hat{\delta}_{t}$ to the mechanism (unless
$t=T$, in which case there is no report).
\item The mechanism provides the agent with consumption $c_{t}\left(H^{t-1},\hat{\delta}_{t}\right)$
(or consumption $c_{t}\left(H^{t-1}\right)$ if $t=T$). 
\end{enumerate}
\textbf{Equilibrium concept.} Formally, the equilibrium notion we
apply is Perfect Bayesian Equilibrium, but since firms make an initial
one-shot move, posterior beliefs never need to be calculated (the
updating according to Bayes' rule requirement is trivially satisfied).
Recall that we are restricting to equilibria where:
\begin{itemize}
\item The agent chooses a pure message strategy and breaks indifferences
among messages in a firm's mechanism by selecting the most profitable.
\item The agent reports truthfully in any incentive-compatible direct mechanism.
\item Firms make pure-strategy offers of deterministic incentive-compatible
direct mechanisms.
\end{itemize}

\subsection{Discussion of relationship to quasi-hyperbolic discounting}

An important motivation for studying models with present-biased agents
is apparent ``choice reversals'' of the kind described by Ainslie
(1991, p 334): ``a majority of adults report that they would rather
have \$50 immediately than \$100 in 2 years, but almost no one prefers
\$50 in 4 years over \$100 in 6 years, even though this is the same
choice seen at 4 years' greater distance''.\footnote{Ainslie's example comes from research by Ainslie and Haendel (1983).}
These choice reversals can be explained using quasi-hyperbolic discounting,
in which the agent is viewed as maximising preferences at each date
$t$ given by 
\[
u\left(c_{t}\right)+\beta\bar{\delta}\sum_{\tau>t}\bar{\delta}^{\tau-1-t}u\left(c_{\tau}\right)
\]
where $\bar{\delta}\in\left(0,1\right)$ represents long-run discounting
and $\beta\in\left(0,1\right)$ represents present bias.

But quasi-hyperbolic discounting is not the only model of preferences
that can account for the choice reversals above, and in fact our model
can do so. To see this, consider our model with two discount factors
($N=2$), and with $\delta^{\left(1\right)}=0.4$ and $\delta^{\left(2\right)}=0.9$.
Suppose initially that $q_{2}=p_{2}=1/4$ so that $q_{1}=p_{1}=3/4$.
Suppose the agent, having learned his discount factor $\delta_{t}$
at date $t$, has to choose between utility of consumption equal to
$50$ at date $t$ or utility equal to $100$ at date $t+1$ (Problem
A). From the same perspective (i.e., at date $t$ and given knowledge
of $\delta_{t}$), for some $s\geq1$, the agent has to choose between
utility of $50$ at date $t+s$ or utility of $100$ at date $t+s+1$
(Problem B). Then, the agent chooses the immediate reward $50$ in
Problem A when $\delta_{t}=0.4$ and hence with probability $3/4$.
But he always chooses the later reward of $100$ in Problem B. The
possibility that the agent overestimates future patience strengthens
the reversal in choice: e.g. if instead $p_{2}=0$ and $p_{1}=1$,
then the agent certainly has the low discount factor at date $t$
($\delta_{t}=0.4$) and so the agent chooses the immediate reward
for sure in Problem A, while it does not affect his choice in Problem
B. Conversely, increasing $q_{2}$ strengthens the agent's preference
for the delayed reward in Problem B.

It helps to understand our model of preferences through the lens of
Lu and Saito (2018), which provides an axiomatic foundation for a
``Random Discounting Representation'' underlying consumer choice.
They focus on a one-shot ``ex ante'' decision problem, with stochastic
choice over possibly random consumption streams over an infinite horizon.
Extending our model of preferences to the infinite horizon, and considering
choice at any fixed date $t$, our model of agent preferences generates
stochastic choice that is representable by random discounting in the
sense of Lu and Saito.

For choice at date $t$, we need to define a distribution over discount
factors $D\left(\tau\right)$, for $\tau\geq t$, where $D\left(\tau\right)$
denotes the total discounting of date-$\tau$ utility and $D\left(t\right)=1$.
For $i\in\left\{ 1,2\right\} $, let $D^{(i)}\left(t\right)=1$; and,
for $s\geq1$, let
\[
D^{(i)}\left(t+s\right)=\beta^{\left(i\right)}\bar{\delta}^{s},
\]
where $\bar{\delta}=q_{1}\delta^{\left(1\right)}+q_{2}\delta^{\left(2\right)}$
and $\beta^{\left(i\right)}=\frac{\delta^{\left(i\right)}}{\bar{\delta}}$.
Let $D^{(i)}$ be the discounting that occurs with probability $p_{i}$.
The date-$t$ choice behaviour of our agent -- restricted to choices
over consumption streams as in Lu and Saito (2018) -- is generated
by this stochastic discounting representation. 

The representation here is reminsicent of what Lu and Saito define
(p 787) as a ``random quasi-hyperbolic representation'', involving
a distribution over deterministic quasi-hyperbolic discounting functions.
However, if $p_{2}\in\left(0,1\right)$, then their definition is
not satisfied, as $\beta^{\left(2\right)}>1$. This conclusion extends
immediately to the case of $N>2$ discount factors when each occurs
with positive probability (i.e., $p_{n}>0$ for all $n$). In contrast,
in Section \ref{sec:Degenerate_Impatience}, we assume that $p_{2}=0$,
and in this case date-$t$ choice \emph{is} representable by deterministic
quasi-hyperbolic discounting as $\beta^{(1)}<1$. 

Relative to the above discussion, it is worth emphasizing that the
focus of Lu and Saito (2018) is one-shot choice over streams of (possibly
random) consumption. In our contracting model, by contrast, the agent
makes decisions that affect future choices under later discount factor
realisations. An important determinant of this dynamic setting is
the agent's beliefs about how discounting will evolve and affect future
behaviour. Therefore, agent behaviour in our model would generally
differ if beliefs about future discounting were instead taken directly
from the stochastic discounting representation above. For this reason,
even when the stochastic discounting representation introduced above
is quasi-hyperbolic (as in Section \ref{sec:Degenerate_Impatience}),
our model differs from quasi-hyperbolic models in the contracting
literature.

Finally, consider $N=2$ and suppose, contrary to what we have assumed,
that $q_{2}=1$. Then if, in addition, $p_{1}=1$, agent preferences
and beliefs coincide with those of the ``fully naive'' agent in the
baseline environment of GZ. Therefore, quasi-hyperbolic discounting
in the sense understood by the contracting literature does emerge
as a limiting case of our model. 

\section{Agent with degenerate impatience\label{sec:Degenerate_Impatience}}

The central objective of this section is to argue that our model with
stochastic discount factors can accommodate a logic close to the one
articulated in HK and GZ: the agent is offered the possibility of
strongly backloaded consumption plans which he may believe he will
likely use, but it turns out that he in fact opts for more front-loaded
options. The difference in our setting is that the agent is uncertain
about future levels of patience and understands that both patient
and impatient future states are possible. In our model, there appears
to be no obvious notion of the preferences of a more patient ``long-run
self'', and we discuss the implications of using realised preferences
(equivalently, the firm's beliefs which are understood to be correct)
as the welfare benchmark. 

Our analysis in this section makes several restrictions of the model
to bring it closest to the earlier work. There are two types: $N=2$.
The agent is certainly impatient: $p_{1}=1$. As noted above, agent
utility satisfies $u\left(0\right)=0$.  (Other restrictions on the
environment and on equilibria are as above.)

In this setting, firms compete for an agent whose initial type $\delta_{1}$
is certainly $\delta^{(1)}$. There is then no loss of generality
in defining direct mechanisms to omit the date-1 report. The possible
message histories in a direct mechanism are then ${\cal H}^{R}$,
all $t$-tuples $(\delta_{1},\delta_{2},...,\delta_{t})$, $t\leq T-1$,
such that $\delta_{1}=\delta^{(1)}$.\footnote{Noting that agent beliefs have full support, mechanisms are thus assumed
to be defined at all type histories the agent believes possible. } 

Suppose the maximum expected discounted payoff that can be delivered
to the agent at date 1 in an incentive-compatible direct mechanism
that generates non-negative profits for the firm exists and is finite
(as will be shown below -- see Lemma \ref{lem:Objective_Naive}).
Then, by a familiar undercutting argument, in any equilibrium, the
agent must accept with probability one a mechanism that solves this
optimisation. Moreover, an equilibrium exists. A full proof follows
the arguments in the proof of Proposition \ref{prop:EQUILIBRIUM_GENERAL}.\footnote{The key arguments are as follows. Maintaining the equilibrium restrictions
in Section \ref{sec:MODEL}, now with direct mechanisms omitting date-1
reports, it is easy to see that no firm can earn strictly positive
equilibrium profits. Indeed, the firm earning the lowest profit could
deviate by offering a mechanism that is accepted with positive probability,
but with slightly higher date-1 consumption. The agent must then earn
the maximum expected payoff in an incentive-compatible direct mechanism
generating non-negative profits for the firm (see, e.g., Problem I
below). Otherwise, given continuity of the value to this problem in
the income $I_{T}$ (from Lemma \ref{lem:Objective_Naive} and the
Theorem of the Maximum), a firm can deviate to a mechanism that provides
the agent with an expected payoff slightly smaller than the maximised
value, while earning positive expected profit. }

Understanding equilibrium therefore reduces to solving the problem
(call it Problem I) of maximising by choice of $c_{t}\left(H^{t}\right)\geq0$,
$1\leq t\leq T$ and $H^{t}\in{\cal H}^{R}$, the agent expected utility
\[
\mathbb{E}_{A}\left[u\left(c_{1}\left(\delta^{(1)}\right)\right)+\delta^{(1)}\sum_{t=2}^{T}\Pi_{s=2}^{t-1}\tilde{\delta}_{s}u\left(c_{t}\left(\delta^{(1)},\tilde{\delta}_{2},\dots,\tilde{\delta}_{t}\right)\right)\right]
\]
(where we follow the convention that $\Pi_{s=2}^{1}\delta_{s}\equiv1$)
subject to agent incentive compatibility constraints and to the firm
break-even condition (introduced formally below).\footnote{The expectation $\mathbb{E}_{A}$ is the expectation given the agent
beliefs, i.e. placing probability $q_{n}$ on each type $\delta^{(n)}$.} 

Now consider the agent incentive constraints. These are constraints
at dates $t\in\left\{ 2,\dots,T-1\right\} $ at histories in $H^{t-1}\in{\cal H}^{R}$.
The first kind of constraint ensures that high types do not imitate
low types:

\begin{align*}
 & u\left(c_{t}\left(H^{t-1},\delta^{(2)}\right)\right)+\delta^{(2)}\mathbb{E}_{A}\left[\sum_{\tau=t+1}^{T}\Pi_{s=t+1}^{\tau-1}\tilde{\delta}_{s}u\left(c_{\tau}\left(H^{t-1},\delta^{(2)},\tilde{H}_{t+1}^{\tau}\right)\right)\right]\\
\geq & u\left(c_{t}\left(H^{t-1},\delta^{(1)}\right)\right)+\delta^{(2)}\mathbb{E}_{A}\left[\sum_{\tau=t+1}^{T}\Pi_{s=t+1}^{\tau-1}\tilde{\delta}_{s}u\left(c_{\tau}\left(H^{t-1},\delta^{(1)},\tilde{H}_{t+1}^{\tau}\right)\right)\right].
\end{align*}
The second kind ensures that low types do not imitate high types:
\begin{align}
 & u\left(c_{t}\left(H^{t-1},\delta^{(1)}\right)\right)+\delta^{(1)}\mathbb{E}_{A}\left[\sum_{\tau=t+1}^{T}\Pi_{s=t+1}^{\tau-1}\tilde{\delta}_{s}u\left(c_{\tau}\left(H^{t-1},\delta^{(1)},\tilde{H}_{t+1}^{\tau}\right)\right)\right]\nonumber \\
\geq & u\left(c_{t}\left(H^{t-1},\delta^{(2)}\right)\right)+\delta^{(1)}\mathbb{E}_{A}\left[\sum_{\tau=t+1}^{T}\Pi_{s=t+1}^{\tau-1}\tilde{\delta}_{s}u\left(c_{\tau}\left(H^{t-1},\delta^{(2)},\tilde{H}_{t+1}^{\tau}\right)\right)\right].\label{eq:LOW-IC}
\end{align}
These incentive constraints are necessary for an incentive-compatible
direct mechanism.\footnote{Noting that agent beliefs have full support, an incentive-compatible
direct mechanism is defined to provide incentives for truth-telling
at all histories the agent believes possible. From the usual replication
argument, this is without loss of generality. } That they are also sufficient follows a standard backward induction
argument. If the agent knows he will be truthful at all histories
at dates $t+1,\dots,T-1$, then the specified incentive constraints
ensure he is also willing to be truthful at date $t$.

Now, for $1\leq t\leq T-1$, let $L^{t}=\left(\delta^{(1)},\delta^{(1)},\dots,\delta^{(1)}\right)$
be the $t-$tuple comprising only elements equal to $\delta^{(1)}$.
Let $L^{T}=L^{T-1}$. Because the firm has degenerate beliefs on histories
$H^{t}=L^{t}$, the non-negative profit constraint is
\[
\sum_{t=1}^{T}\frac{c_{t}\left(L^{t}\right)}{R^{t-1}}\leq I_{T}.
\]

Following the approach also to be used in Section \ref{sec:RICHER},
it convenient to state Problem I as maximising instead over the utilities,
i.e. the values $v_{t}\left(H^{t}\right)\equiv u\left(c_{t}\left(H^{t}\right)\right)\geq u\left(0\right)$,
$1\leq t\leq T$, where $H^{t}\in{\cal H}^{R}$. Letting $\phi=u^{-1}$
denote the inverse of $u$, a direct mechanism can now be written
as the collection of functions $M^{D}=\left(\phi\left(v_{t}\right)\right)_{t=1}^{T}$.
Since a mechanism can be represented by the utilities it provides,
we will generally also write simply $M^{D}=\left(v_{t}\right)_{t=1}^{T}$.

The updated (but equivalent) problem is now Problem II, where the
objective is 
\[
\mathbb{E}_{A}\left[v_{1}\left(\delta^{(1)}\right)+\delta^{(1)}\sum_{t=2}^{T}\Pi_{s=2}^{t-1}\tilde{\delta}_{s}v_{t}\left(\delta^{(1)},\tilde{\delta}_{2},\dots,\tilde{\delta}_{t}\right)\right]
\]
and the incentive constraints are (all $2\leq t\leq T-1$, all $H^{t-1})$

\begin{align}
 & v_{t}\left(H^{t-1},\delta^{(2)}\right)+\delta^{(2)}\mathbb{E}_{A}\left[\sum_{\tau=t+1}^{T}\Pi_{s=t+1}^{\tau-1}\tilde{\delta}_{s}v_{\tau}\left(H^{t-1},\delta^{(2)},\tilde{H}_{t+1}^{\tau}\right)\right]\nonumber \\
\geq & v_{t}\left(H^{t-1},\delta^{(1)}\right)+\delta^{(2)}\mathbb{E}_{A}\left[\sum_{\tau=t+1}^{T}\Pi_{s=t+1}^{\tau-1}\tilde{\delta}_{s}v_{\tau}\left(H^{t-1},\delta^{(1)},\tilde{H}_{t+1}^{\tau}\right)\right],\label{eq:HIGH-IC}
\end{align}
and
\begin{align}
 & v_{t}\left(H^{t-1},\delta^{(1)}\right)+\delta^{(1)}\mathbb{E}_{A}\left[\sum_{\tau=t+1}^{T}\Pi_{s=t+1}^{\tau-1}\tilde{\delta}_{s}v_{\tau}\left(H^{t-1},\delta^{(1)},\tilde{H}_{t+1}^{\tau}\right)\right]\nonumber \\
\geq & v_{t}\left(H^{t-1},\delta^{(2)}\right)+\delta^{(1)}\mathbb{E}_{A}\left[\sum_{\tau=t+1}^{T}\Pi_{s=t+1}^{\tau-1}\tilde{\delta}_{s}v_{\tau}\left(H^{t-1},\delta^{(2)},\tilde{H}_{t+1}^{\tau}\right)\right].\label{eq:LOW-IC-NEW}
\end{align}
The firm's non-negative profit constraint is
\[
\sum_{t=1}^{T}\frac{\phi\left(v_{t}\left(L^{t}\right)\right)}{R^{t-1}}\leq I_{T}.
\]
 In relation to this problem, it is worth noting that $\phi$ inherits
convenient properties from $u$. It is strictly convex, strictly increasing
and twice continuously differentiable on $\mathbb{R}_{+}$, with $\lim_{v\rightarrow\infty}\phi^{\prime}\left(v\right)=\infty$.

In solving the above problem, a key observation is that we can focus
on mechanisms such that (\ref{eq:LOW-IC-NEW}) holds with equality.
In particular, we show the following.
\begin{lem}
\label{lem:IC_Section3}Consider any direct mechanism $M^{D}=\left(v_{t}\right)_{t=1}^{T}$
that, if chosen by the agent, results in non-negative profits for
the firm. Then there is another mechanism $\bar{M}^{D}=\left(\bar{v}_{t}\right)_{t=1}^{T}$
that gives the firm non-negative profits in which, for all $t\geq2$
and all $H^{t-1}$, (\ref{eq:LOW-IC-NEW}) holds as equality. Moreover,
the agent's expected payoff is no lower than under $M^{D}$.
\end{lem}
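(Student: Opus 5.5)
The plan is to build $\bar M^{D}$ from $M^{D}$ by raising (never lowering) utilities only at histories off the path $L^{t}$, so that the firm's profit is unchanged. The relevant observation is that the firm's non-negative profit constraint involves only the values $v_{t}(L^{t})$. Any history $(H^{t-1},\delta^{(2)})$, and every continuation of it, lies off this path. So $v_{t}(H^{t-1},\delta^{(2)})$ can be changed freely without affecting profits. I also take $M^{D}$ to be incentive compatible, as the paper does for direct mechanisms. Then (\ref{eq:HIGH-IC}) and (\ref{eq:LOW-IC-NEW}) hold at every node. Subtracting them gives the monotonicity property $C_{2}\geq C_{1}$, where $C_{i}$ is the agent's expected discounted continuation utility from date $t+1$ after report $\delta^{(i)}$.

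I will carry out the construction by backward induction on $t=T-1,T-2,\dots,2$. Suppose that at all dates $s>t$ the mechanism has been modified so that (\ref{eq:LOW-IC-NEW}) binds and (\ref{eq:HIGH-IC}) holds. At each node $H^{t-1}$ I then redefine $v_{t}(H^{t-1},\delta^{(2)})$ as the unique value making (\ref{eq:LOW-IC-NEW}) hold with equality, namely
\[
\bar v_{t}(H^{t-1},\delta^{(2)})=v_{t}(H^{t-1},\delta^{(1)})+\delta^{(1)}\left(C_{1}-C_{2}\right).
\]
This modification leaves all nodes at dates $s>t$ untouched, so their constraints are preserved. When (\ref{eq:LOW-IC-NEW}) binds, (\ref{eq:HIGH-IC}) is equivalent to $(\delta^{(2)}-\delta^{(1)})(C_{2}-C_{1})\geq0$. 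So the high-type constraint at $t$ survives provided the monotonicity $C_{2}\geq C_{1}$ is still true after the later-date modifications. Once all dates are done, the backward-induction sufficiency argument stated above the lemma shows that $\bar M^{D}$ is incentive compatible.

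Three properties then remain to be checked. The first is that the new value is weakly larger than the old one, so that the agent's payoff (which puts weight $q_{2}>0$ on these nodes) does not fall and the constraint $v\geq u(0)=0$ is respected. The second is the monotonicity $C_{2}\geq C_{1}$ used above. The third is that the profit constraint is unchanged, which is immediate because only off-path values move.

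For the first two properties, the argument I would try is as follows. Because (\ref{eq:LOW-IC-NEW}) binds at every later node, the low type's continuation value at any node equals his value from always reporting $\delta^{(1)}$ thereafter. Raising utilities only at $\delta^{(2)}$-children therefore weakly raises both continuation values. Here the decomposition of a continuation value into its $\delta^{(1)}$ and $\delta^{(2)}$ parts, weighted by $q_{1}$ and $q_{2}$, needs care.

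The main obstacle is this interaction between dates. Adjustments at later dates change $C_{1}$ and $C_{2}$ at date $t$, possibly by different amounts, so the original slack in (\ref{eq:LOW-IC-NEW}) does not directly tell us the sign of the required change at date $t$. I would first try to show by induction that the increase in $C_{2}$ is no larger than the original slack plus the increase in $C_{1}$, using that every later increment equals that node's original slack. If this fails, I would instead lower utilities in the $\delta^{(2)}$-branch by an equal constant at all later dates. This shifts the low type's payoff and not the firm's cost, and it would avoid negativity problems because the values there are above the low-branch values.
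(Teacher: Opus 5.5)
\textbf{There is a genuine gap.} It is the ``interaction between dates'' that you flag but leave open. It is not only a sign problem: your construction can destroy incentive compatibility. Raising $v_{t+1}(H^{t},\delta^{(2)})$ raises continuation values at every ancestor node. If $H^{t}$ ends in a $\delta^{(1)}$ report, it raises $C_1$, not $C_2$, at the date-$t$ node. Here is a concrete example.
\begin{itemize}
\item Take $T=4$. At $L^{2}$ set $v_3(L^2,\cdot)=(b+\delta^{(2)}e,\,b)$ and $v_4(L^2,\cdot)=(c,\,c+e)$ with $b,c,e>0$. Then (\ref{eq:HIGH-IC}) binds at $L^2$ and (\ref{eq:LOW-IC-NEW}) is slack by $x=(\delta^{(2)}-\delta^{(1)})e$.
\item Pool in the subtree after $(\delta^{(1)},\delta^{(2)})$, chosen so that $C_1=C_2$ at $L^1$. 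Set $v_2(L^1,\delta^{(1)})=v_2(L^1,\delta^{(2)})$, so both date-2 constraints bind. Scale everything down so the firm breaks even.
\item Your date-3 step raises $v_3(L^2,\delta^{(2)})$ by $x$. This raises $C_1$ at $L^1$ by $q_2x$ and leaves $C_2$ unchanged.
\item At date 2 your only instrument is $v_2(L^1,\delta^{(2)})$. As you note yourself, once (\ref{eq:LOW-IC-NEW}) binds, (\ref{eq:HIGH-IC}) is equivalent to $C_2\geq C_1$. That is now false whatever value you choose, so the patient date-2 type strictly prefers to report $\delta^{(1)}$.
\end{itemize}
Your fallback of lowering the $\delta^{(2)}$-branch pushes $C_2$ further down, which is the wrong direction here. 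There is also a mirror problem. Suppose at some node $x=0$, $v_t(H^{t-1},\delta^{(2)})=0$, and only the $\delta^{(2)}$-subtree had slack, inducing a rise $y>0$ in $C_2$. Then your formula requires $\bar v_t(H^{t-1},\delta^{(2)})=-\delta^{(1)}y<u(0)$.

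\textbf{The missing idea} is to make each adjustment invisible to the rest of the tree, at the price of touching an on-path value. At a node where (\ref{eq:LOW-IC-NEW}) is slack by $x$, raise $v_t(H^{t-1},\delta^{(2)})$ by $q_1x$ and lower $v_t(H^{t-1},\delta^{(1)})$ by $q_2x$.
\begin{itemize}
\item This makes (\ref{eq:LOW-IC-NEW}) bind. It leaves $C_1$ and $C_2$ at that node untouched, so (\ref{eq:HIGH-IC}) survives.
\item It keeps $\mathbb{E}_A[v_t(H^{t-1},\tilde\delta_t)]$ unchanged, so continuation values at all ancestor nodes, and hence all other constraints, are unaffected. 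All slack nodes can therefore be fixed simultaneously, with no induction.
\item The agent's expected payoff is unchanged. The only on-path effect is a decrease in $v_t(L^t)$ when $H^{t-1}=L^{t-1}$, which raises profit.
\item Nonnegativity holds because afterwards $\bar v_t(H^{t-1},\delta^{(1)})-\bar v_t(H^{t-1},\delta^{(2)})=\delta^{(1)}(C_2-C_1)\geq 0$ and $\bar v_t(H^{t-1},\delta^{(2)})\geq v_t(H^{t-1},\delta^{(2)})\geq 0$.
\end{itemize}
This is the paper's argument.
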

The perturbation to the mechanism $M^{D}$ to arrive at $\bar{M}^{D}$
involves decreasing the agent's utility $v_{t}\left(H^{t-1},\delta^{(1)}\right)$
and increasing $v_{t}\left(H^{t-1},\delta^{(2)}\right)$ whenever
(\ref{eq:LOW-IC-NEW}) is slack at $H^{t-1}$ ($2\le t\leq T-1$).
This is done in such a way as to leave agent expected payoffs, from
the perspective of dates before $t$, unchanged. We then find that
firm profits increase if $H^{t-1}$ is the sequence of $\delta^{(1)}$
realisations and are unchanged otherwise. The increase in profits
can be transferred back to the agent through an increase in $v_{1}\left(H_{1}\right)$.
Hence, in any solution to Problem II, the inequality (\ref{eq:LOW-IC-NEW})
must hold as equality when $H^{t-1}=L^{t-1}$ (for $2\leq t\leq T-1$).

Our next result writes the objective in Problem II after substituting
in the incentive constraints that hold with equality.
\begin{lem}
\label{lem:Objective_Naive}Any solution $M^{D}=\left(v_{t}\right)_{t=1}^{T}$
to Problem II must be such that, for all $t\in\left\{ 2,3,\dots,T-1\right\} $,
$v_{t}\left(L^{t-1},\delta^{(2)}\right)=u\left(0\right)$. Moreover,
for $t\in\left\{ 1,2,\dots,T\right\} $, the values $v_{t}\left(L^{t}\right)$
are the unique maximisers of the expression
\begin{align}
\sum_{t=1}^{T-1}\left(q_{1}\delta^{(1)}+q_{2}\delta^{(2)}\right)^{t-1}v_{t}\left(L^{t}\right)+\left(q_{1}\delta^{(1)}+q_{2}\delta^{(2)}\right)^{T-2}\delta^{(1)}v_{T}\left(L^{T}\right)\label{eq:EQUILIBRIUM_PAYOFF_SEC3}
\end{align}
subject to the firm breaking even. The maximised value represents
the agent's equilibrium payoff. Moreover, any mechanism the agent
accepts in equilibrium yields him the aforementioned utilities at
the specified histories (i.e., respectively, at $\left(L^{t-1},\delta^{(2)}\right)$
with $t\in\left\{ 2,3,\dots,T-1\right\} $, and at $L^{t}$ with $t\in\left\{ 1,2,\dots,T\right\} $).
\end{lem}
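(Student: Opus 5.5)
The plan is to compute the agent's expected payoff node by node along the impatient path $L^{t}$. The goal is an upper bound equal to (\ref{eq:EQUILIBRIUM_PAYOFF_SEC3}) that holds for every incentive-compatible mechanism, to show the bound is attained, and to read off the uniqueness claims from when it is tight. Normalise $u(0)=0$, so $v\geq0$, and write $\bar{\delta}=q_{1}\delta^{(1)}+q_{2}\delta^{(2)}$.

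For $2\leq t\leq T-1$, let $A_{t}$ be the agent's expected value, before $\delta_{t}$ is drawn, of $\sum_{\tau\geq t}\Pi_{s=t}^{\tau-1}\tilde{\delta}_{s}v_{\tau}$ at history $L^{t-1}$, and set $A_{T}=v_{T}(L^{T})$. The objective of Problem II is then $v_{1}(L^{1})+\delta^{(1)}A_{2}$.

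\textbf{Step 1: a one-node bound.} Fix $t\leq T-1$ and abbreviate $a=v_{t}(L^{t})$ and $b=v_{t}(L^{t-1},\delta^{(2)})\ge0$. Let $C_{1}=A_{t+1}$ be the continuation after reporting $\delta^{(1)}$, and $C_{2}$ the analogous expected continuation after reporting $\delta^{(2)}$. Then
\[
A_{t}=q_{1}\left(a+\delta^{(1)}C_{1}\right)+q_{2}\left(b+\delta^{(2)}C_{2}\right).
\]
Write $b+\delta^{(2)}C_{2}=\left(b+\delta^{(1)}C_{2}\right)+(\delta^{(2)}-\delta^{(1)})C_{2}$. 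Constraint (\ref{eq:LOW-IC-NEW}) bounds the first bracket by $a+\delta^{(1)}C_{1}$. Combining the same constraint with $b\geq0$ gives $C_{2}\leq C_{1}+a/\delta^{(1)}$. Hence
\[
A_{t}\leq\frac{\bar{\delta}}{\delta^{(1)}}\,v_{t}(L^{t})+\bar{\delta}A_{t+1},
\]
with equality if and only if (\ref{eq:LOW-IC-NEW}) binds at $L^{t-1}$ and $b=0$. Since $\delta^{(2)}>\delta^{(1)}$ and (\ref{eq:LOW-IC-NEW}) binds, $b>0$ forces $C_{2}<C_{1}+a/\delta^{(1)}$, which makes the inequality strict.

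\textbf{Step 2: unroll the recursion.} Iterating the bound from $t=2$ to $T-1$, with $A_{T}=v_{T}(L^{T})$, gives
\[
v_{1}+\delta^{(1)}A_{2}\leq\sum_{t=1}^{T-1}\bar{\delta}^{t-1}v_{t}(L^{t})+\bar{\delta}^{T-2}\delta^{(1)}v_{T}(L^{T}).
\]
This is exactly (\ref{eq:EQUILIBRIUM_PAYOFF_SEC3}). The break-even constraint involves only the on-path values $v_{t}(L^{t})$, so the agent's payoff in Problem II is at most the maximum of (\ref{eq:EQUILIBRIUM_PAYOFF_SEC3}) subject to break-even. That reduced problem maximises a linear objective over the convex set cut out by the strictly convex $\phi$ with $v\geq0$. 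Because $\phi'\to\infty$, the feasible set is compact, so a maximiser exists, and it is unique by strict convexity. This step also establishes the finiteness used earlier in the section.

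\textbf{Step 3: attainability.} Take the maximising on-path values. At each $L^{t-1}$, set $v_{t}(L^{t-1},\delta^{(2)})=0$ and choose the off-path continuation after $\delta^{(2)}$ so that $C_{2}=A_{t+1}+v_{t}(L^{t})/\delta^{(1)}$. This is feasible using a report-independent constant utility stream, which is trivially incentive compatible and can reach any nonnegative expected value. The high-type constraint (\ref{eq:HIGH-IC}) then reduces to $\delta^{(2)}v_{t}(L^{t})/\delta^{(1)}\geq v_{t}(L^{t})$, which holds. Constraint (\ref{eq:LOW-IC-NEW}) binds by construction. Off-path histories impose no profit cost, since firms assign them probability zero. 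Hence the bound is attained. Step 1's equality conditions, together with uniqueness in Step 2, then force every solution to have $v_{t}(L^{t-1},\delta^{(2)})=u(0)$ and the stated on-path values.

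\textbf{Step 4: equilibrium.} The undercutting argument preceding Lemma \ref{lem:IC_Section3} shows that any accepted mechanism solves Problem II. The properties just derived therefore apply to it.

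The main obstacle is the strictness claim in Step 1, which is what delivers "must". It requires showing that (\ref{eq:LOW-IC-NEW}) binds at every solution, using Lemma \ref{lem:IC_Section3} together with the transfer-to-$v_{1}$ argument, or equivalently directly from the decomposition above. It also requires carefully checking that $C_{2}$ can be set independently of the on-path values without violating deeper incentive constraints.
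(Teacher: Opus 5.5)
Your proposal is correct and follows essentially the paper's route. The paper likewise substitutes (\ref{eq:LOW-IC-NEW}) node by node along $L^{t-1}$, rewriting the objective as (\ref{eq:EQUILIBRIUM_PAYOFF_SEC3}) minus $\sum_{t=2}^{T-1}\bar{\delta}^{t-2}q_{2}(\delta^{(2)}-\delta^{(1)})v_{t}(L^{t-1},\delta^{(2)})$. It then relaxes to break-even plus lower bounds, and attains the optimum by the same backward construction with report-independent constant streams after $\delta^{(2)}$.

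The only difference is that the paper first imposes binding via Lemma~\ref{lem:IC_Section3} and then substitutes. Your inequality form instead carries the slack $s$ explicitly: $A_t$ equals your bound minus $q_2\frac{\delta^{(2)}}{\delta^{(1)}}s+q_2\frac{\delta^{(2)}-\delta^{(1)}}{\delta^{(1)}}b$. So the ``obstacle'' you flag at the end is already settled by Step 1, and Lemma~\ref{lem:IC_Section3} is not needed.
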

The result can be compared to the findings of GZ, especially their
Lemma 2 (which applies given that $u$ is unbounded, see Citanna et
al. (2023)), which indeed follows from a similar argument. Although
we deliberately exclude this case, our arguments and findings apply
also when $q_{2}=1$:\footnote{Excluding this case clarifies the distinction of our model from GZ,
with interpretation discussed in the Introduction. } indeed, agent preferences and beliefs then coincide with those of
the ``fully naive'' agent in GZ. In this sense, our findings can be
viewed as generalising GZ's for the case where the agent is fully
naive.

It is worth highlighting that the values $v_{t}\left(L^{t}\right)$
found in Lemma \ref{lem:Objective_Naive} vary continuously in $q_{2}$
by an application of the Theorem of the Maximum, and hence our predictions
for the realised agent utilities approach the ones predicted by GZ
for the limiting case as $q_{2}\rightarrow1$. Moreover, the mechanism
has the same structure in which, at histories $\left(L^{t-1}\right)$
with $t\in\left\{ 2,3,\dots,T-1\right\} $, the agent has the option
of highly backloaded consmption, with zero consumption in the present
period. This shows how our model can approximate the behavioural predictions
of GZ, but in a setting where the agent assigns positive probability
to being impatient.

The form of the agent's payoff derived in Lemma 2, suggests that the
timing of consumption will often be close to a benchmark contract
that maximises the agent's payoff subject to the firm breaking even,
and where the agent's commonly known discount factor is deterministic
and constant at $q_{1}\delta^{(1)}+q_{2}\delta^{(2)}$. That is, the
timing of consumption is similar in equilibria of our model with mis-specified
beliefs to what would be predicted in a simpler setting with correct
beliefs where all parties know the agent's discount factor is $q_{1}\delta^{(1)}+q_{2}\delta^{(2)}$.
Notably, the agent is more patient in the benchmark than the true
preferences of the agent in our model, whose intertemporal preferences
turn out to be given by the discount factor $\delta^{(1)}$.

To make the above argument precise, we can follow the argument in
the Theorem 1 of Citanna et al. (2023) as follows. Recall that $I_{T}$
is increasing and converges to $I_{\infty}$. We say that the problem
of maximising the agent's payoff with the benchmark discount factor
(i.e., $q_{1}\delta^{(1)}+q_{2}\delta^{(2)}$) is ``well-posed'' if:
\[
\sup_{\left(v_{t}\right)_{t=1}^{\infty}}\left\{ \sum_{t=1}^{\infty}\left(q_{1}\delta^{(1)}+q_{2}\delta^{(2)}\right)^{t-1}v_{t}:\sum_{t=1}^{\infty}\frac{\phi\left(v_{t}\right)}{R^{t-1}}\leq I_{\infty}\right\} <\infty.
\]
If the length of the horizon is $T$, then let $W_{T}^{A}$ denote
the maximum agent welfare, subject to the firm breaking even, and
given that the agent has the benchmark discount factor. We have
\[
W_{T}^{A}=\max_{\left(v_{t}\right)_{t=1}^{T}}\left\{ \sum_{t=1}^{T}\left(q_{1}\delta^{(1)}+q_{2}\delta^{(2)}\right)^{t-1}v_{t}:\sum_{t=1}^{T}\frac{\phi\left(v_{t}\right)}{R^{t-1}}\leq I_{T}\right\} .
\]
Let $v_{t}^{E,T}\left(L^{t}\right)$, $t=1,\dots,T$, be the values
referred to in Lemma \ref{lem:Objective_Naive} and let 
\[
W_{T}^{E}=\sum_{t=1}^{T}\left(q_{1}\delta^{(1)}+q_{2}\delta^{(2)}\right)^{t-1}v_{t}^{E,T}\left(L^{t}\right).
\]
Then $W_{T}^{E}$ is the agent's discounted payoff in the setting
with the benchmark discount factor, but evaluated at the equilibrium
choices of our original model, as characterised in Lemma \ref{lem:Objective_Naive}.
We have the following result:
\begin{prop}
\label{Prop:AVERAGE_DF}Suppose that the problem of maximising the
agent's payoff with the benchmark discount factor $q_{1}\delta^{(1)}+q_{2}\delta^{(2)}$
is well-posed. Then $\lim_{T\rightarrow\infty}\left(W_{T}^{A}-W_{T}^{E}\right)=0$.
\end{prop}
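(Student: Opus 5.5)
Let $\bar\delta\equiv q_{1}\delta^{(1)}+q_{2}\delta^{(2)}$. By Lemma \ref{lem:Objective_Naive}, the equilibrium values $v_{t}^{E,T}(L^{t})$ maximise
\[
\sum_{t=1}^{T-1}\bar\delta^{t-1}v_{t}+\bar\delta^{T-2}\delta^{(1)}v_{T}
\]
subject to $\sum_{t=1}^{T}\phi(v_{t})/R^{t-1}\leq I_{T}$ and $v_t\geq u(0)$. This objective differs from the benchmark objective in $W_{T}^{A}$ only in the weight on the final period: $\bar\delta^{T-2}\delta^{(1)}$ instead of $\bar\delta^{T-1}$. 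Since $\delta^{(1)}<\bar\delta$, the final-period weight in the equilibrium problem is smaller.

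The plan is to sandwich $W_{T}^{E}$ between $W_{T}^{A}$ and $W_{T}^{A}$ minus a vanishing term, following Citanna et al. (2023, Theorem 1).

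\textbf{Upper bound.} $W_{T}^{E}\leq W_{T}^{A}$ is immediate, because $(v_{t}^{E,T})$ is feasible for the benchmark problem.

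\textbf{Lower bound.} Let $(v_{t}^{A,T})$ solve the benchmark problem. It is feasible for the equilibrium problem, so optimality of $v^{E,T}$ gives
\[
W_{T}^{E}-\bar\delta^{T-1}v_{T}^{E,T}+\bar\delta^{T-2}\delta^{(1)}v_{T}^{E,T}\;\geq\; W_{T}^{A}-\bar\delta^{T-1}v_{T}^{A,T}+\bar\delta^{T-2}\delta^{(1)}v_{T}^{A,T}.
\]
Rearranging,
\[
W_{T}^{A}-W_{T}^{E}\leq\bar\delta^{T-2}(\bar\delta-\delta^{(1)})\left(v_{T}^{A,T}-v_{T}^{E,T}\right)\leq\bar\delta^{T-2}(\bar\delta-\delta^{(1)})\,v_{T}^{A,T},
\]
where the last step uses $v_{T}^{E,T}\geq u(0)=0$. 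It therefore suffices to show that $\bar\delta^{T-1}v_{T}^{A,T}\to0$, i.e.\ that the discounted final-period utility in the benchmark solution vanishes.

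\textbf{Main step: $\bar\delta^{T-1}v_{T}^{A,T}\to0$.} This is the main obstacle. I would argue by contradiction using well-posedness. Suppose $\bar\delta^{T-1}v_{T}^{A,T}\geq\varepsilon$ along a subsequence.

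First I would derive the optimality conditions. The first-order conditions of the benchmark problem give $\phi'(v_{t}^{A,T})=\lambda_{T}\bar\delta^{t-1}R^{t-1}$ at interior points. This pins down the shape of the allocation, and $u(0)=0$ with $\phi\geq0$ handles corners.

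Next I would construct a profitable reallocation. Take the utility $v_{T}^{A,T}$ at date $T$ and spread it over dates $T,T+1,\dots,T+K$, each receiving roughly $v_{T}^{A,T}/(K+1)$ after suitable rescaling. By convexity of $\phi$ with $\phi(0)=0$, this costs at most a fraction of the original date-$T$ cost. Then $\phi(v/K)\leq\phi(v)/K$ frees resources while discounted utility is roughly preserved up to the factors $\bar\delta^{k}$.

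Iterating such constructions across many $T$ would produce infinite-horizon feasible plans (using $I_{T}\leq I_{\infty}$) with unbounded value, contradicting well-posedness.

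A cleaner alternative is to use directly that, under well-posedness, the infinite-horizon supremum $W_{\infty}$ is finite and $W_{T}^{A}\uparrow W_{\infty}$. Truncating optimal infinite-horizon sequences shows that the tail contribution $\sum_{t\geq T}\bar\delta^{t-1}v_{t}$ of near-optimal plans vanishes. Combining this with the concavity structure, a plan whose final-period discounted utility stays above $\varepsilon$ could be improved by $\varepsilon$-order amounts, contradicting $W_{T}^{A}\to W_{\infty}$.

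I would expect to carry out this second route, with care over continuity of the value in $I_{T}$ (via the Theorem of the Maximum) to handle $I_{T}\neq I_{\infty}$.
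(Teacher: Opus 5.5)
Your reduction is correct. Feasibility of $v^{E,T}$ in the benchmark problem gives $W_T^E\le W_T^A$. Optimality of $v^{E,T}$ in the equilibrium problem gives $W_T^A-W_T^E\le\bar\delta^{T-2}(\bar\delta-\delta^{(1)})v_T^{A,T}$, with your $\bar\delta=q_1\delta^{(1)}+q_2\delta^{(2)}$. However, this moves the whole difficulty into the claim $x_T:=\bar\delta^{T-1}v_T^{A,T}\to0$, and neither sketch proves it.

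Spreading $v_T^{A,T}$ as $v_T^{A,T}/(K+1)$ over dates $T,\dots,T+K$ does lower cost. But it also cuts discounted value from $x_T$ to $x_T\frac{1}{K+1}\sum_{k=0}^{K}\bar\delta^{k}$, a large loss when $\bar\delta<1$. Nothing in the sketch shows the freed resources recover this, let alone produce unbounded value when iterated.

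The second route asserts the key step rather than proving it. Since $v^{A,T}$ is optimal for horizon $T$, it cannot be improved within that horizon. The fact that tails of near-optimal infinite plans are small says nothing about the last coordinate of the horizon-$T$ optima, which is exactly what is at issue: a priori, resources could pile up at the final date as $T$ grows. The convergence $W_T^A\uparrow W_\infty$ is also asserted, not shown.

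The missing idea is that you need no asymptotics of $v_T^{A,T}$ at all. Compare $W_T^E$ with the benchmark value one period \emph{shorter}:
\begin{itemize}
\item Append $v_T=0$ to the optimum defining $W_{T-1}^A$. Since $\phi(0)=0$ and $I_{T-1}\le I_T$, this plan is feasible for the $T$-period equilibrium problem.
\item The weights on dates $1,\dots,T-1$ coincide, and the differing final weight multiplies zero, so the plan's value there is $W_{T-1}^A$.
\item Hence $W_T^E\ge\sum_{t=1}^{T-1}\bar\delta^{t-1}v_t^{E,T}+\bar\delta^{T-2}\delta^{(1)}v_T^{E,T}\ge W_{T-1}^A$, so $0\le W_T^A-W_T^E\le W_T^A-W_{T-1}^A$.
\item The same appending argument shows $W_T^A$ is non-decreasing, and well-posedness bounds it, so the right side tends to zero.
\end{itemize}
This is the paper's proof.

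If you prefer to keep your route, $x_T\to0$ is true but needs different tools:
\begin{itemize}
\item When $\bar\delta R\le1$, an exchange argument (using uniqueness of the optimum) shows the present-value cost $\phi(v_t^{A,T})/R^{t-1}$ is non-increasing in $t$. Hence $x_T\le(\bar\delta R)^{T-1}u(I_\infty/T)\to0$.
\item When $\bar\delta R>1$, well-posedness forces $\bar\delta<1<R$. Delaying the date-$T$ consumption bundle by $i$ periods multiplies its cost by $R^{-i}$ and its discounted value by only $\bar\delta^{i}$. So infinitely many $T$ with $x_T\ge\varepsilon$, together with delays $i_k$ satisfying $\sum_k R^{-i_k}<\infty=\sum_k\bar\delta^{i_k}$, would yield a feasible infinite plan with unbounded value.
\end{itemize}
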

Proposition \ref{Prop:AVERAGE_DF} generalises in a specific way the
finding of GZ and Citanna et al. (2023) by providing a benchmark discount
factor ($q_{1}\delta^{(1)}+q_{2}\delta^{(2)}$) against which equilibrium
consumption streams perform well, at least when the horizon is sufficiently
long. (Their result implies the statement in the proposition for $q_{2}=1$.)
As noted, this benchmark is higher than the agent's true discount
factor $\delta^{\left(1\right)}.$ As noted in the Introduction, use
of the latter would appear consistent with a number of papers on models
with mis-specified beliefs, so we now investigate a comparison with
this benchmark in detail. 

We now term the ``efficient policies'' those which maximise the agent's
payoff subject to firm break-even when the agent's discount factor
is commonly known to be equal to $\delta^{(1)}$. We index these policies
with a $B$ (for the second \emph{benchmark}), so that the efficient
path of utilities is given by $\left(v_{t}^{B,T}\left(L^{t}\right)\right)_{t=1}^{T}$.
Using standard Lagrangian arguments and the result in Lemma \ref{lem:Objective_Naive},
we find that the equilibrium policies are more backloaded than the
efficient policies in a sense made precise in the following proposition.
\begin{prop}
\label{Prop:More_Backloaded}Consider the efficient utility path $\left(v_{t}^{B,T}\left(L^{t}\right)\right)_{t=1}^{T}$
and equilibrium utility path $\left(v_{t}^{E,T}\left(L^{t}\right)\right)_{t=1}^{T}$.
Then $v_{1}^{B,T}\left(L\right)\geq v_{1}^{E,T}\left(L\right).$ Moreover,
if $v_{t}^{E,T}\left(L^{t}\right)\geq v_{t}^{B,T}\left(L^{t}\right)$
and $v_{t}^{E,T}\left(L^{t}\right)>0$ for some $t$, then $v_{s}^{E,T}\left(L^{s}\right)\geq v_{s}^{B,T}\left(L^{s}\right)$
for all $s>t$. The inequality is strict where $t\leq T-2$ and $v_{s}^{E,T}\left(L^{s}\right)>0$.
In addition, if $v_{t}^{B,T}\left(L^{t}\right)>v_{t}^{E,T}\left(L^{t}\right)$
for some $t$, then $v_{t'}^{B,T}\left(L^{t'}\right)<v_{t'}^{E,T}\left(L^{t'}\right)$
for some $t'>t$, and hence indeed $v_{s}^{E,T}\left(L^{s}\right)\geq v_{s}^{B,T}\left(L^{s}\right)$
for all $s\geq t'$.
\end{prop}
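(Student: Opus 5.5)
The plan is to reduce both the equilibrium and the efficient problem to concave programs over the single path $(v_t)_{t=1}^T$, compare their Kuhn--Tucker conditions, and let everything follow from a single monotonicity property of the ratio of objective weights.

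\textbf{Setup.} By Lemma \ref{lem:Objective_Naive}, $(v_t^{E,T}(L^t))_t$ maximises $\sum_t w_t v_t$ subject to $\sum_t \phi(v_t)/R^{t-1}\le I_T$ and $v_t\ge 0$. Here $w_t=\bar\delta^{t-1}$ for $t\le T-1$ and $w_T=\bar\delta^{T-2}\delta^{(1)}$, where $\bar\delta=q_1\delta^{(1)}+q_2\delta^{(2)}>\delta^{(1)}$. The efficient path solves the same program with weights $b_t=(\delta^{(1)})^{t-1}$.

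Both problems maximise a linear objective under a strictly convex constraint, so each has a unique solution characterised by Kuhn--Tucker conditions. For the equilibrium there is a multiplier $\lambda_E>0$ with $\phi'(v^E_t)\ge w_tR^{t-1}/\lambda_E$, with equality whenever $v^E_t>0$. Analogously, $\lambda_B>0$ satisfies $\phi'(v^B_t)\ge b_tR^{t-1}/\lambda_B$, with equality whenever $v^B_t>0$. The budget binds in both problems because the objectives are strictly increasing, and this gives $\lambda>0$.

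The key observation concerns the ratio $r_t=w_t/b_t$. It equals $(\bar\delta/\delta^{(1)})^{t-1}$ for $t\le T-1$, and $r_T=r_{T-1}$. Hence $r_t$ is weakly increasing in $t$, and strictly increasing from any $t\le T-2$.

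\textbf{Monotonicity step.} Suppose $v^E_t\ge v^B_t$ and $v^E_t>0$. Then
\[
w_tR^{t-1}/\lambda_E=\phi'(v^E_t)\ge\phi'(v^B_t)\ge b_tR^{t-1}/\lambda_B,
\]
so $\lambda_B/\lambda_E\ge 1/r_t$. For $s>t$ this gives
\[
w_sR^{s-1}/\lambda_E=r_s b_sR^{s-1}/\lambda_E\ge (r_s/r_t)\,b_sR^{s-1}/\lambda_B\ge b_sR^{s-1}/\lambda_B ,
\]
and the last inequality is strict when $t\le T-2$.

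I then split into cases. If $v^E_s>0$ and $v^B_s>0$, we get $\phi'(v^E_s)\ge\phi'(v^B_s)$, hence $v^E_s\ge v^B_s$, strictly in the strict case, since $\phi'$ is strictly increasing. If $v^B_s=0$, the inequality is trivial, and it is strict when $v^E_s>0$. If $v^E_s=0$ but $v^B_s>0$, then
\[
\phi'(v^B_s)=b_sR^{s-1}/\lambda_B\le w_sR^{s-1}/\lambda_E\le\phi'(0),
\]
which contradicts strict convexity. This proves the second claim.

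\textbf{First and last claims.} For the first claim, suppose $v^E_1>v^B_1$. Then $v^E_1>0$, and the monotonicity step with $t=1$ gives $v^E_s\ge v^B_s$ for all $s$, strictly at $s=1$. The equilibrium path would then cost strictly more than the binding efficient budget $I_T$, a contradiction.

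For the last claim, suppose $v^B_t>v^E_t$ but $v^E_s\le v^B_s$ for every $s>t$. Consider any $s<t$ with $v^E_s\ge v^B_s$ and $v^E_s>0$. The monotonicity step would force $v^E_t\ge v^B_t$, which is false. So for every $s<t$, either $v^E_s<v^B_s$ or $v^E_s=0\le v^B_s$. Hence $v^E\le v^B$ componentwise, strictly at $t$. Both budgets bind, so this is impossible. Therefore some $t'>t$ has $v^E_{t'}>v^B_{t'}\ge0$, and applying the monotonicity step at $t'$ gives the final statement.

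\textbf{Main obstacle.} The delicate part is the corner handling ($v=0$) in the Kuhn--Tucker comparison, since the weak inequality $\phi'(0)\ge\cdot$ must be converted into statements about $v^B$. The contradiction via strict convexity in the case $v^E_s=0<v^B_s$ is where I expect the most care to be needed. A secondary check is that the budget binds in both problems, which also rests on $u$ being unbounded and increasing.
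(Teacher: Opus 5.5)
Your proposal is correct and follows essentially the same route as the paper: both compare the Kuhn--Tucker conditions of the two single-path programs. Your ratio $r_t=w_t/b_t$ is a compact way of recording the paper's comparison of $(q_1\delta^{(1)}+q_2\delta^{(2)})^{t-1}R^{t-1}/\lambda^{E,T}$ with $(\delta^{(1)})^{t-1}R^{t-1}/\lambda^{B,T}$. Your write-up is more explicit than the paper's proof in three places: you treat the corner case $v_s^{E,T}(L^s)=0<v_s^{B,T}(L^s)$ directly; you derive $v_1^{B,T}(L)\geq v_1^{E,T}(L)$ by a componentwise-dominance contradiction rather than asserting $\lambda^{B,T}\leq\lambda^{E,T}$; and you supply an argument for the final claim (existence of $t'>t$), which the paper's proof does not spell out.
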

The characterisation is more straightforward when utilities are always
strictly greater than zero.
\begin{cor}
\label{cor:Backloading}Suppose that $v_{t}^{B,T}\left(L^{t}\right)>0$
for all $t$, which is guaranteed if $u_{+}^{\prime}\left(0\right)=\infty$.
Then there is a $t^{*}\in\left\{ 2,3,\dots,T-1\right\} $ such that
$v_{t}^{E,T}\left(L^{t}\right)<v_{t}^{B,T}\left(L^{t}\right)$ for
all $t<t^{*}$ and $v_{t}^{E,T}\left(L^{t}\right)>v_{t}^{B,T}\left(L^{t}\right)$
for all $t>t^{*}$. 
\end{cor}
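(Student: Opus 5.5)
We prove Corollary~\ref{cor:Backloading} by combining the first-order conditions of the two problems with the single-crossing statements of Proposition~\ref{Prop:More_Backloaded}.

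\textbf{Step 1: interior efficient path when $u_{+}^{\prime}(0)=\infty$.} Let $\bar\delta=q_1\delta^{(1)}+q_2\delta^{(2)}$. The efficient problem maximises $\sum_t(\delta^{(1)})^{t-1}v_t$ subject to $\sum_t\phi(v_t)/R^{t-1}\le I_T$. Its Kuhn--Tucker conditions read
\[
(\delta^{(1)})^{t-1}\le\lambda\,\phi'(v_t)/R^{t-1},
\]
with equality whenever $v_t>0$. Since $\phi'(0)=1/u_{+}^{\prime}(0)=0$, the right-hand side vanishes at $v_t=0$ while the left-hand side is positive. Hence $v_t=0$ is impossible, and $v_t^{B,T}>0$ for all $t$.

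\textbf{Step 2: interior equilibrium path.} The equilibrium path solves the analogous problem~(\ref{eq:EQUILIBRIUM_PAYOFF_SEC3}), whose weights $\bar\delta^{t-1}$ (and $\bar\delta^{T-2}\delta^{(1)}$ at $t=T$) are all positive. The same argument therefore gives $v_t^{E,T}>0$ for all $t$. Because both paths are strictly positive, every strict-inequality clause of Proposition~\ref{Prop:More_Backloaded} applies wherever its premises hold.

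\textbf{Step 3: the sign pattern.} Proposition~\ref{Prop:More_Backloaded} gives $v_1^B\ge v_1^E$.
\begin{itemize}
\item Suppose some $t\le T-2$ has $v_t^E\ge v_t^B$. Then $v_s^E>v_s^B$ for all $s>t$ (strictly, since $v_s^E>0$).
\item Suppose some $t$ has $v_t^B>v_t^E$. Then there is a $t'>t$ with $v_{t'}^E>v_{t'}^B$, and the inequality $v^E\ge v^B$ persists from $t'$ onward.
\end{itemize}
Let $t_0$ be the first date with $v_{t}^E\ge v_{t}^B$. Such a date exists: if $v_1^E<v_1^B$, the second bullet supplies one; if $v_1^E=v_1^B$, take $t_0=1$.
\begin{itemize}
\item If $t_0\le T-2$, the first bullet gives strict inequality $v_s^E>v_s^B$ for all $s>t_0$.
\item Before $t_0$ we have $v_s^E<v_s^B$ by definition of $t_0$.
\end{itemize}
We then choose $t^*$ so that all dates strictly before it are ``$E<B$'' and all dates strictly after it are ``$E>B$'', with $t^*$ itself allowed to be ambiguous. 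If $v_{t_0}^E>v_{t_0}^B$, take $t^*=t_0-1$; if $v_{t_0}^E=v_{t_0}^B$, take $t^*=t_0$.

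\textbf{Step 4: the range $t^*\in\{2,\dots,T-1\}$.} This is the main obstacle.

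\emph{Ruling out $t_0=1$.} Suppose $v_1^E\ge v_1^B$. Then $v_1^E=v_1^B$, and by the first bullet $v_s^E>v_s^B$ for every $s\ge2$. The efficient path satisfies the budget with equality, so the equilibrium path would strictly exceed the budget. This contradicts firm break-even. Hence $v_1^E<v_1^B$ and $t_0\ge2$.

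\emph{Ruling out a late crossing.} By the second bullet, $t_0\le T$. We must also exclude $t_0=T$, and $t_0=T-1$ followed by an equality at $T$. The same budget argument does this: both paths exhaust $I_T$, so some date must have $E>B$ strictly, and that date lies at or after $t_0$. This forces at least one strict ``$E>B$'' date at some $t\ge2$.

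\emph{A direct FOC comparison for the endpoints.} Comparing the two first-order conditions gives
\[
\frac{\phi'(v_t^E)}{\phi'(v_t^B)}=\frac{\lambda_B}{\lambda_E}\Big(\frac{\bar\delta}{\delta^{(1)}}\Big)^{t-1},
\]
which is strictly increasing in $t$ for $t\le T-1$. Since $\phi'$ is strictly increasing, this ratio exceeds $1$ exactly when $v_t^E>v_t^B$. So there is a single crossing on $\{1,\dots,T-1\}$. At $t=T$ the ratio equals its $t=T-1$ value multiplied by $\bar\delta^{T-2}\delta^{(1)}/\bar\delta^{T-1}\cdot\bar\delta/\delta^{(1)}=1$. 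Hence periods $T-1$ and $T$ have the same sign.

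\emph{Placing $t^*$.} The ratio is below $1$ at $t=1$, and at least one date must have ratio above $1$. By monotonicity, period $T-1$ (and hence $T$) is strictly ``$E>B$''. Consequently the crossing index $t^*$ can be taken in $\{2,\dots,T-1\}$, with $t^*\ge2$ because $t=1$ is strictly ``$E<B$''. If the crossing falls between $1$ and $2$, with period $2$ strictly ``$E>B$'', we set $t^*=2$; this is permitted because only $t<t^*$ and $t>t^*$ are constrained. This FOC ratio argument directly yields the corollary and also streamlines Step 3.
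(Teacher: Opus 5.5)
The only real problem is Step 2. The corollary assumes only that $v_{t}^{B,T}(L^{t})>0$ for all $t$; the Inada condition is mentioned merely as one sufficient condition. Your ``same argument'' for $v_{t}^{E,T}(L^{t})>0$ needs $\phi_{+}^{\prime}(0)=0$. Under the stated hypothesis, the equilibrium path can sit at a corner while the efficient path is interior.

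Here is a concrete case. Take $u(c)=\ln(1+c)$ (so $\phi_{+}^{\prime}(0)=1$), $T=3$, $R=2$, $\delta^{(1)}=1/2$, $q_{1}\delta^{(1)}+q_{2}\delta^{(2)}=9/10$ and $I_{3}=1/2$. The efficient path gives consumption $2/7$ in every period. The equilibrium path is $c_{1}=0$, $c_{2}=c_{3}=2/3$, with $\lambda^{E,3}=27/25$ and the date-1 nonnegativity constraint binding strictly. At such a date the equilibrium first-order condition is only an inequality. So the identity $\phi'(v_{t}^{E,T})/\phi'(v_{t}^{B,T})=(\lambda^{B,T}/\lambda^{E,T})(\bar\delta/\delta^{(1)})^{t-1}$, on which your Step 4 rests, fails there. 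As written, you have proved the corollary only under $u_{+}^{\prime}(0)=\infty$.

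The repair is short and keeps your argument intact. Work with the shadow values $a_{t}^{E}=\bar\delta^{t-1}R^{t-1}/\lambda^{E,T}$ for $t\leq T-1$, $a_{T}^{E}=\bar\delta^{T-2}\delta^{(1)}R^{T-1}/\lambda^{E,T}$, and $a_{t}^{B}=(\delta^{(1)})^{t-1}R^{t-1}/\lambda^{B,T}$. By the KKT characterisation (\ref{eq:E_Optimality})--(\ref{eq:B_Optimality}), $v_{t}^{E,T}(L^{t})=\rho(\max\{a_{t}^{E},\phi_{+}^{\prime}(0)\})$ and $v_{t}^{B,T}(L^{t})=\rho(a_{t}^{B})$, with $a_{t}^{B}>\phi_{+}^{\prime}(0)$ by hypothesis.

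Hence $v_{t}^{E,T}(L^{t})$ is above, equal to, or below $v_{t}^{B,T}(L^{t})$ exactly as $a_{t}^{E}$ compares with $a_{t}^{B}$. Moreover, $a_{t}^{E}/a_{t}^{B}$ is precisely your ratio: strictly increasing on $\{1,\dots,T-1\}$ (since $q_{2}>0$) and equal at $T-1$ and $T$. The rest of Step 4 then goes through verbatim.

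With that fix, your route is essentially the paper's, which uses only $v^{B,T}>0$. There, positivity of $v^{E,T}$ where Proposition \ref{Prop:More_Backloaded} is invoked is automatic from $v_{t}^{E,T}\geq v_{t}^{B,T}>0$. The case $v_{1}^{E,T}\geq v_{1}^{B,T}$ is excluded by the budget as you do, and the residual case at $T-1,T$ is read off the first-order conditions. Your explicit observation that dates $T-1$ and $T$ always share a sign is a cleaner treatment of that residual case than the paper's.

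Two minor points. First, the budget argument in your ``late crossing'' paragraph does not by itself exclude $t_{0}=T$; it only forces $v_{T}^{E,T}>v_{T}^{B,T}$, so it is the same-sign fact that does the work. Second, taking $t^{*}=t_{0}$ throughout removes the need for the $t^{*}=t_{0}-1$ choice and its later patch.
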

Proposition \ref{Prop:More_Backloaded} and Corollary \ref{cor:Backloading}
appears in conflict with HK, who suggest that equilibrium involves
overborrowing. Naturally, the source of this conflict is the different
welfare criterion, which is based on the agent's realised preferences
(equivalently, firms' beliefs about those preferences). It is also
suggests a potential conflict with GZ, who as noted above find approximate
efficiency in long horizons.

To demonstrate the divergence from GZ, let

\[
V_{T}^{B}=\max_{\left(v_{t}\right)_{t=1}^{T}}\left\{ \sum_{t=1}^{T}\left(\delta^{(1)}\right)^{t-1}v_{t}:\sum_{t=1}^{T}\frac{\phi\left(v_{t}\right)}{R^{t-1}}\leq I_{T}\right\} =\sum_{t=1}^{T}\left(\delta^{(1)}\right)^{t-1}v_{t}^{B,T}\left(L^{t}\right)
\]
be the maximised (efficient) surplus: i.e. the agent's maximal utility
subject to the firm breaking even given true discount factor $\delta^{\left(1\right)}$.
Let
\[
V_{T}^{E}=\sum_{t=1}^{T}\left(\delta^{(1)}\right)^{t-1}v_{t}^{E,T}\left(L^{t}\right),
\]
i.e. the value of surplus evaluated with the discount factor $\delta^{(1)}$
for the utilities realised in equilibrium. To simplify, we restrict
attention to the case where $\delta^{(1)}R\leq1$, and hence (from
Equation (\ref{eq:B_Optimality}) in the proof of Proposition \ref{Prop:More_Backloaded})
$\left(v_{t}^{B,T}\left(L^{t}\right)\right)_{t=1}^{T}$ is a weakly
decreasing sequence. We also assume that, as $T$ increases, $I_{T}$
grows according to a fixed additional income $w>0$ per period; i.e.,
$I_{T}=\sum_{t=1}^{T}\frac{w}{R^{t-1}}$. We show the following.
\begin{prop}
\label{prop:LONG_RUN_INEFFICIENT}Suppose that $\delta^{(1)}R\leq1$
and $\left(q_{1}\delta^{(1)}+q_{2}\delta^{(2)}\right)R\phi^{\prime}\left(u\left(I_{\infty}\right)\right)>\phi_{+}^{\prime}\left(0\right)$,
and that $I_{T}$ is determined by a constant income $w$. Then there
is $\varepsilon>0$ and $\bar{T}>0$ such that $V_{T}^{B}-V_{T}^{E}\geq\varepsilon$
for all $T\geq\bar{T}$. 
\end{prop}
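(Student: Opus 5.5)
The plan is to show that the equilibrium utility path, which by Lemma \ref{lem:Objective_Naive} maximises an objective discounted at $\bar\beta\equiv q_{1}\delta^{(1)}+q_{2}\delta^{(2)}>\delta^{(1)}$, stays uniformly away from the efficient path in early periods. Concavity of the efficient problem would then turn this into a uniform loss in the $\delta^{(1)}$-discounted surplus. I am least sure of the first step below; steps two and three are more routine.

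\emph{Step 1: the Kuhn--Tucker conditions.} I would write the conditions for both problems. For the efficient problem there is a multiplier $\lambda_{T}^{B}$ with
\[
(\delta^{(1)})^{t-1}\leq\lambda_{T}^{B}\phi'(v_{t}^{B,T})/R^{t-1},
\]
with equality when $v_{t}^{B,T}>0$. This is the content of (\ref{eq:B_Optimality}). For the equilibrium problem, Lemma \ref{lem:Objective_Naive} gives a multiplier $\lambda_{T}^{E}$ with the analogous condition for $t\leq T-1$, with $\bar\beta^{t-1}$ in place of $(\delta^{(1)})^{t-1}$. (The weight on $v_{T}$ is $\bar\beta^{T-2}\delta^{(1)}$ instead.) Two facts are then available.

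\begin{itemize}
\item Since $\delta^{(1)}R\leq1$, the efficient path is weakly decreasing. Since $I_{T}=\sum_{t}w/R^{t-1}$, budget exhaustion forces $c_{1}^{B,T}\geq w$. Hence $v_{1}^{B,T}\geq u(w)>0$ for every $T$.
\item Feasibility gives $c_{1}^{E,T}\leq I_{T}\leq I_{\infty}$.
\end{itemize}

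\emph{Step 2: a uniform gap in period 1.} This is the main obstacle. I would show that there is $\eta>0$ with $v_{1}^{B,T}-v_{1}^{E,T}\geq\eta$ for all large $T$. The hypothesis $\bar\beta R\phi'(u(I_{\infty}))>\phi'_{+}(0)$ is what should deliver this.

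\begin{itemize}
\item If $v_{1}^{E,T}>0$, then $\lambda_{T}^{E}=1/\phi'(v_{1}^{E,T})\geq1/\phi'(u(I_{\infty}))$. The hypothesis then gives $\lambda_{T}^{E}\phi'_{+}(0)<\bar\beta R$. So $v_{2}^{E,T}$ is interior with
\[
\phi'(v_{2}^{E,T})=\bar\beta R\,\phi'(v_{1}^{E,T}).
\]
Iterating, every later $v_{t}^{E,T}$ ($t\leq T-1$) is interior, and $\phi'(v_{t}^{E,T})=(\bar\beta R)^{t-1}\phi'(v_{1}^{E,T})$.
\item The efficient path satisfies the same relation with the strictly smaller ratio $\delta^{(1)}R$, at least where it is interior.
\item Both paths exhaust the same budget. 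By Proposition \ref{Prop:More_Backloaded}, $v_{1}^{E,T}\leq v_{1}^{B,T}$.
\item To make the gap uniform, I would use the equilibrium growth factor $\bar\beta R/(\delta^{(1)}R)>1$ in marginal cost. For a fixed number $k$ of periods, the marginal-cost ratio $\phi'(v_{k}^{E,T})/\phi'(v_{k}^{B,T})$ exceeds $\phi'(v_{1}^{E,T})/\phi'(v_{1}^{B,T})$ by the factor $(\bar\beta/\delta^{(1)})^{k-1}$. This holds whenever both are interior; corners of the efficient path only help.
\item A gap $v_{1}^{B,T}-v_{1}^{E,T}\to0$ along a subsequence would then force the equilibrium to spend strictly more than the efficient path on periods $2,\dots,k$, by a margin bounded below independently of $T$. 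Since $v_{1}$ is pinned near $v_{1}^{B,T}\geq u(w)$, the equilibrium must then underspend, relative to the efficient path, on later periods. Here I would use the continuity of $\phi'$ on the compact range $[u(w)/2,u(I_{\infty})]$.
\item The contradiction comes from Proposition \ref{Prop:More_Backloaded}: once $v_{t}^{E,T}\geq v_{t}^{B,T}$ at a positive value, it stays so for all later periods. The equilibrium path cannot then spend less in later periods while matching the budget.
\item In the other case, $v_{1}^{E,T}=0$, the gap is at least $u(w)$ directly.
\end{itemize}

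\emph{Step 3: from a utility gap to a surplus gap.} I would define the efficient-problem value with period-1 utility fixed:
\[
G_{T}(v_{1})=v_{1}+\max\Big\{\textstyle\sum_{t\geq2}(\delta^{(1)})^{t-1}v_{t}:\sum_{t\geq2}\phi(v_{t})/R^{t-1}\leq I_{T}-\phi(v_{1})\Big\}.
\]
Then $V_{T}^{E}\leq G_{T}(v_{1}^{E,T})$ and $V_{T}^{B}=G_{T}(v_{1}^{B,T})$. The function $G_{T}$ is concave and maximised at $v_{1}^{B,T}$. Its left derivative at $v_{1}^{B,T}-\eta$ is
\[
1-\phi'(v_{1}^{B,T}-\eta)\,\mu_{T},
\]
where $\mu_{T}=1/\phi'(v_{1}^{B,T})$ is the envelope multiplier of the continuation problem. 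This derivative is bounded below by a positive constant uniformly in $T$, because $\phi$ is strictly convex with $\phi''$ bounded below on the compact range $[0,u(I_{\infty})]$. Concavity then gives $V_{T}^{B}-V_{T}^{E}\geq\varepsilon$ for $T\geq\bar T$, with $\varepsilon$ of order $\eta^{2}\inf\phi''/\sup\phi'$.
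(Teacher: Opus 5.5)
Your architecture matches the paper's: a uniform period-1 gap, then concavity. However, the one step that uses the hypothesis $\bar\beta R\phi'(u(I_\infty))>\phi'_+(0)$ is wrong, and this is a genuine gap.

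In the first bullet of Step 2, $v_1^{E,T}\le u(I_\infty)$ gives only the \emph{lower} bound $\lambda^E_T=1/\phi'(v_1^{E,T})\ge 1/\phi'(u(I_\infty))$. That is a lower bound on $\lambda^E_T\phi'_+(0)$, so nothing follows about $\lambda^E_T\phi'_+(0)<\bar\beta R$; a small $v_1^{E,T}$ means a large $\lambda^E_T$. The conclusion you draw is also false. When $\phi'_+(0)>0$ and $\bar\beta R<1$, the quantity $(\bar\beta R)^{t-1}\phi'(v_1^{E,T})$ eventually falls below $\phi'_+(0)$, so late equilibrium utilities are zero.

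The hypothesis cannot be bypassed. If it fails, which forces $\phi'_+(0)>0$ and $\bar\beta R<1$, then for every $T$ both the efficient and the equilibrium problem put all consumption at date 1, so $V^B_T=V^E_T=u(I_T)$. In exactly that configuration your claim that the equilibrium overspends on periods $2,\dots,k$ by a uniform margin fails, since both paths are zero there. Excluding this configuration was the job of the broken bullet.

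The repair fits your contradiction scheme, with $k=2$ and no global interiority. Take a subsequence with $v_1^{B,T},v_1^{E,T}\to\bar v_1\in[u(w),u(I_\infty)]$, so both multipliers tend to $1/\phi'(\bar v_1)$. Then split on the sign of $\bar\beta R\phi'(\bar v_1)-\phi'_+(0)$.

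\emph{If $\bar\beta R\phi'(\bar v_1)>\phi'_+(0)$:} $\phi'(v_2^{E,T})\to\bar\beta R\phi'(\bar v_1)$, which is strictly above the limit $\max\{\delta^{(1)}R\phi'(\bar v_1),\phi'_+(0)\}$ of $\phi'(v_2^{B,T})$. Hence $v_2^{E,T}>0$ and $\phi(v_2^{E,T})-\phi(v_2^{B,T})$ is bounded below. Proposition \ref{Prop:More_Backloaded} applied at $t=2$ gives $v_s^{E,T}\ge v_s^{B,T}$ for all $s\ge 2$. Since $\phi(v_1^{E,T})-\phi(v_1^{B,T})\to 0$, this contradicts both paths exhausting $I_T$.

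\emph{If $\bar\beta R\phi'(\bar v_1)\le\phi'_+(0)$:} then $\delta^{(1)}R\phi'(\bar v_1)<\phi'_+(0)$, so eventually $v_2^{B,T}=0$. Because $\delta^{(1)}R\le 1$ makes the efficient path decreasing, all efficient consumption is at date 1 and $\bar v_1=u(I_\infty)$. This contradicts the hypothesis.

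So the hypothesis enters at the limit point, not through a per-$T$ bound on $\lambda^E_T$. Repaired this way, your Step 2 is a genuinely different route from the paper's. The paper takes $T\to\infty$ limits of the multipliers and paths and treats $\bar\beta R<1$, $=1$, $>1$ separately, using the hypothesis to exclude $v_1^{E,\infty}=u(I_\infty)$ in the first case. Your single-crossing argument handles all three cases at once.

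Step 3 is sound in substance, but the paper's version is shorter. $v^{B,T}$ maximises the separable Lagrangian with multiplier $\lambda^{B,T}$ term by term, and $v^{E,T}$ exhausts the budget. So $V^B_T-V^E_T$ is at least the loss in the period-1 term alone.

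If you keep your version, two details need fixing. First, the continuation multiplier at $v_1^{B,T}-\eta$ is at most $\mu_T$, by concavity in the budget, so your derivative formula is a lower bound; that is the direction you need. Second, strict convexity does not give $\inf\phi''>0$. Use instead the positive lower bound on $\phi'(a)-\phi'(a-\eta/2)$ over $a\in[u(w),u(I_\infty)]$, which follows from continuity and strict monotonicity of $\phi'$.
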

\medskip{}
The above result show that although the agent's initial discount factor
is commonly known to be $\delta^{(1)}$, and although equilibrium
involves a contract that maximises the agent's expected payoff conditional
on the firm breaking even, equilibrium often involves inefficiency.
In particular, the agent consumes inefficiently late. While the divergence
with the earlier literature reflects the choice of benchmark, it is
worth commenting on the reason for our result, focusing on the case
where $q_{2}$ is arbitrarily close to 1, i.e. close to the special
case of GZ. A firm's offer aims at increasing the agent's anticipated
payoff, and hence increasing the date-2 continuation payoff in case
$\delta_{2}=\delta^{\left(2\right)}$, to which the agent attaches
high probability. However, this is subject to the agent's upward date-2
incentive constraint, which requires the agent to value sufficiently
his anticipated consumption when $\delta_{2}=\delta^{\left(1\right)}$.
Anticipated consumption comes largely from two channels: date-2 utility
$v_{2}\left(L^{2}\right)$ and the expected continuation utility from
date $3$ given that $\delta_{3}=\delta^{\left(2\right)}$. Raising
the latter is subject to the date-3 upward incentive constraint at
history $L^{2}$, and in turn calls for a higher value of $v_{3}\left(L^{3}\right)$,
explaining why consumption is more back-loaded than efficient between
dates 2 and 3 given $\delta_{2}=\delta^{\left(1\right)}$.

\section{Findings for the richer model\label{sec:RICHER}}

While Section 3 deliberately considers an environment where the screening
logic is close to HK and GZ, we now consider the results for richer
settings. There are now $N\geq2$ possible discount factors. Firm
beliefs $\left(p_{n}\right)_{n=1}^{N}$ (i.e., true probabilities)
are assumed to have full support. We allow that either $u\left(0\right)$
is finite or $\lim_{c\rightarrow0}u\left(c\right)=-\infty$. While
the backloading logic in the model of Section \ref{sec:Degenerate_Impatience}
necessitates the corner solution of zero consumption in some states
(see Lemma \ref{lem:Objective_Naive}), fully interior equilibrium
mechanisms now become possible and indeed will be guaranteed in case
$\lim_{c\rightarrow0}u\left(c\right)=-\infty$. Interiority is an
important restriction for some results.

We begin by briefly characterising and commenting on what may be viewed
as efficient consumption in this environment. Since we view firms
as having the correct beliefs, this maximises 

\[
\mathbb{E}_{F}\left[\sum_{t=1}^{T}\Pi_{s=1}^{t-1}\tilde{\delta}_{s}u\left(c_{t}\left(\tilde{H}^{t}\right)\right)\right],
\]
where $\mathbb{E}_{F}$ is the expectation under firm beliefs, subject
to the firm break-even condition 
\[
\mathbb{E}_{F}\left[\sum_{t=1}^{T}\frac{c_{t}\left(\tilde{H}^{t}\right)}{R^{t-1}}\right]\leq I_{T}.
\]
We assume that the solution to this problem is interior, i.e. consumption
is always strictly positive, which is guaranteed for example if $\lim_{c\rightarrow0}u\left(c\right)=-\infty$
or, if $u\left(0\right)$ is finite, by the Inada condition $u_{+}^{\prime}\left(0\right)=\infty$.
Following the treatment in the previous section, we write the characterisation
in terms of the utilities: for all $t$, all $H^{t}$, $v_{t}\left(H^{t}\right)=u\left(c_{t}\left(H^{t}\right)\right)$. 
\begin{prop}
\label{prop:EFFICIENT_POLICY}The efficient policy $\left(v_{t}^{B,T}\left(H^{t}\right)\right)_{t=1}^{T}$
exists and is unique. If the solution is interior then, for each $t\geq2$,
$v_{t}^{B,T}\left(H^{t}\right)$ is strictly increasing in each of
the first $t-1$ arguments.
\end{prop}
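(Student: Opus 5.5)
The efficient problem has no incentive constraints, so my plan is to solve it history by history, using a single Lagrange multiplier on the break-even constraint. Write $P_F(H^t)=\Pi_{s=1}^{t}p_{n(s)}$ for the firm's probability of history $H^t$, which is strictly positive by full support. The problem is then to choose $v_t(H^t)\geq u(0)$ to maximise
\[
\sum_{t=1}^{T}\sum_{H^{t}}P_F(H^{t})\,\Pi_{s=1}^{t-1}\delta_{s}\,v_t(H^t)
\]
subject to
\[
\sum_{t=1}^{T}\sum_{H^{t}}P_F(H^{t})\frac{\phi(v_t(H^t))}{R^{t-1}}\leq I_T .
\]
The objective is linear and positive-weighted, and the constraint set is convex because $\phi$ is convex. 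I will prove existence, then uniqueness, then monotonicity.

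\textbf{Existence.} Since $\phi\geq0$, $\phi$ is increasing and $\phi(v)\to\infty$ as $v\to\infty$, each $v_t(H^t)$ is bounded above on the feasible set. Fix any feasible mechanism $\hat v$, for instance constant consumption $I_T/\sum_t R^{1-t}$, and restrict attention to mechanisms whose objective value is at least that of $\hat v$. The weights are strictly positive and all coordinates are bounded above, so on this restricted set each coordinate is also bounded below. This matters when $\lim_{c\to 0}u(c)=-\infty$, since it keeps us away from $-\infty$. The restricted set is closed, by continuity of $\phi$, and bounded, hence compact, and the linear objective attains its maximum there.

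\textbf{Uniqueness.} Suppose $v\neq v'$ are both optimal. Then $\tfrac12(v+v')$ achieves the same objective value. Because $\phi$ is strictly convex and every history has strictly positive weight, the midpoint satisfies the break-even constraint strictly. We can then raise $v_1$ slightly while remaining feasible, which strictly improves the objective, a contradiction. The same argument shows that at the optimum the constraint binds, so the multiplier is strictly positive.

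\textbf{Monotonicity.} Under interiority, every $v_t(H^t)$ lies in the interior of the domain, where $\phi$ is differentiable. The standard KKT conditions then hold with multiplier $\lambda>0$. Dividing the first-order condition at history $H^t$ by $P_F(H^t)>0$ gives
\[
\phi'\left(v_t^{B,T}(H^t)\right)=\frac{R^{t-1}}{\lambda}\,\Pi_{s=1}^{t-1}\delta_s .
\]
The right-hand side depends only on the first $t-1$ arguments of $H^t$, not on $\delta_t$, because the period-$t$ discount factor weights only later utilities. It is strictly increasing in each of those arguments, since all $\delta_s>0$. Moreover $\phi'=1/u'(\phi)$ is strictly increasing, because $u$ is strictly concave and $C^2$. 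Therefore $v_t^{B,T}(H^t)=(\phi')^{-1}\big(R^{t-1}\Pi_{s=1}^{t-1}\delta_s/\lambda\big)$ is strictly increasing in each of the first $t-1$ arguments.

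The main point needing care is the existence step when $u(0)=-\infty$. I would handle it with the lower-bound argument via a feasible comparison mechanism described above, rather than by appealing to compactness of the whole feasible set.
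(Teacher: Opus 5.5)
Your proof is correct and follows the paper's argument essentially step for step: compactness for existence, the strict-convexity midpoint argument with a small increase in $v_1$ for uniqueness, and the interior stationarity condition $\phi'\bigl(v_t^{B,T}(H^t)\bigr)=R^{t-1}\Pi_{s=1}^{t-1}\delta_s/\lambda^{B,T}$ for monotonicity. Your comparison-mechanism lower bound simply makes explicit the paper's remark that utilities may be bounded below without loss when $u(0)=-\infty$. One small correction: $\lambda>0$ does not follow from the constraint binding, but it does follow directly from stationarity, because the objective weights are strictly positive.
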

Proposition \ref{prop:EFFICIENT_POLICY} has the immediate implication,
of interest in light of the discussion in Section \ref{sec:Degenerate_Impatience},
that the efficient policy is not implementable by a dynamic mechanism.
In any mechanism seeking to implement the efficient policy, the agent
will simply send the messages intended for the highest realisations
of the discount factor. Thus inefficiency appears ``hard-wired'' given
that the true discount factor realisations are now uncertain.

Now consider equilibrium mechanisms. A direct mechanism gives rise
to specific outcomes $\left(c_{t}\left(\delta_{1},H_{2}^{t}\right)\right)_{t=1}^{T}$
for an agent with initial type $\delta_{1}$ and continuation histories
$H_{2}^{t}$. It will be seen that, for each initial agent type $\delta_{1}$,
in any equilibrium, these outcomes must maximise
\[
\mathbb{E}_{A}\left[u\left(c_{1}\left(\delta_{1}\right)\right)+\delta_{1}\sum_{t=2}^{T}\Pi_{s=2}^{t-1}\tilde{\delta}_{s}u\left(c_{t}\left(\delta_{1},\tilde{H}_{2}^{t}\right)\right)\right]
\]
subject to the firm break-even condition
\[
\mathbb{E}_{F}\left[\sum_{t=1}^{T}\frac{c_{t}\left(\delta_{1},\tilde{H}_{2}^{t}\right)}{R^{t-1}}\right]\leq I_{T}
\]
and to incentive constraints for the truthful reporting of $\delta_{t}$
at all $t\in\left\{ 2,\dots,T-1\right\} $ and all histories $\left(\delta_{1},H_{2}^{t-1}\right)$.
In particular, for each such history, each $\delta_{t}$ and each
$\hat{\delta}_{t}$, we must have 
\begin{align}
 & u\left(c_{t}\left(\delta_{1},H_{2}^{t-1},\delta_{t}\right)\right)+\delta_{t}\mathbb{E}_{A}\left[\sum_{\tau=t+1}^{T}\Pi_{s=t+1}^{\tau-1}\tilde{\delta}_{s}u\left(c_{\tau}\left(\delta_{1},H_{2}^{t-1},\delta_{t},\tilde{H}_{t+1}^{\tau}\right)\right)\right]\nonumber \\
\geq & u\left(c_{t}\left(\delta_{1},H_{2}^{t-1},\hat{\delta}_{t}\right)\right)+\delta_{t}\mathbb{E}_{A}\left[\sum_{\tau=t+1}^{T}\Pi_{s=t+1}^{\tau-1}\tilde{\delta}_{s}u\left(c_{\tau}\left(\delta_{1},H_{2}^{t-1},\hat{\delta}_{t},\tilde{H}_{t+1}^{\tau}\right)\right)\right].\label{eq:IC_GEN}
\end{align}
Call this Problem III. That the incentive constraints ensure incentive
compatibility from date $2$ onwards follows by the same backwards
induction argument as for the previous section.

The following result confirms that Problem III is central to understanding
equilibrium.
\begin{prop}
\label{prop:EQUILIBRIUM_GENERAL}For each $\delta_{1}$, there is
a unique solution to Problem III. This solution represents the outcomes
in the dynamic mechanism accepted by an agent with initial type $\delta_{1}$
in any equilibrium. An equilibrium with such outcomes for any $\delta_{1}$
exists.
\end{prop}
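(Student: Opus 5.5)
The proof has three parts: existence and uniqueness of a solution to Problem III, the claim that any accepted mechanism yields that solution, and existence of an equilibrium. I would work in utility space throughout, exactly as in Section \ref{sec:Degenerate_Impatience}. Fix $\delta_{1}$ and write the choice variables as $v_{t}(\delta_{1},H_{2}^{t})=u(c_{t}(\cdot))$, with $\phi=u^{-1}$. In these variables the objective is linear and the incentive constraints (\ref{eq:IC_GEN}) are linear inequalities, so the feasible set is convex. The break-even constraint $\mathbb{E}_{F}[\sum_{t}\phi(v_{t})/R^{t-1}]\leq I_{T}$ is convex because $\phi$ is strictly convex.

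\emph{Existence and uniqueness.} For existence, note that since $p_{n}>0$ for every history, break-even bounds each $\phi(v_{t}(H^{t}))$ above, so each $v_{t}$ is bounded above. A lower bound is needed to get compactness.
\begin{itemize}
\item If $u(0)$ is finite, $v_{t}\geq u(0)$ directly, so the feasible set is compact and a maximiser exists by Weierstrass.
\item If $u(0)=-\infty$, the objective is bounded above, and I would restrict to the superlevel set where the objective is at least the value of some feasible point, such as constant consumption. Because agent beliefs $q_{n}>0$ have full support, every history carries positive weight in the objective. Since each $v_{t}$ is bounded above, this set forces a uniform lower bound on every $v_{t}$, which gives compactness again.
\end{itemize}
For uniqueness, suppose two solutions differ. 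Their midpoint satisfies the linear incentive constraints and attains the same objective. By strict convexity of $\phi$ it satisfies break-even with slack at some history, since $p_{n}>0$ everywhere. That slack can be returned to the agent by raising $v_{1}(\delta_{1})$, which appears in no incentive constraint for $t\geq2$, and this strictly raises the objective, a contradiction. The same argument shows break-even binds at any solution. One point to check is that the objective depends on $v_{1}$ only through $v_{1}(\delta_{1})$, and it does.

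\emph{Equilibrium outcomes must solve Problem III.} Let $V^{*}(\delta_{1},I)$ denote the value of Problem III with income $I$. By the Theorem of the Maximum, $V^{*}$ is continuous in $I$, using the compactness above locally uniformly in $I$. It is also strictly increasing in $I$, via the $v_{1}$ shift. I would then run the undercutting argument type by type, noting that $\delta_{1}$ is observed by the agent before the offers but firms offer mechanisms with a date-1 report.
\begin{itemize}
\item \emph{Zero profits.} A firm earning less than the industry total can offer a mechanism that adds $\epsilon$ to the date-1 consumption of every initial report. This preserves incentive compatibility for the date-1 report and for later reports, captures all types, and earns close to total profits. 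Hence no firm earns positive profits.
\item \emph{Agent payoff equals the optimum.} Suppose some type $\delta_{1}$ receives less than $V^{*}(\delta_{1},I_{T})$. A firm can offer, for every $\delta_{1}'$, the Problem III solution computed with income $I_{T}-\eta$ in that branch. The main obstacle is that this menu must also be incentive compatible at date 1 across initial reports. I would handle this by giving each branch a lump-sum adjustment to $v_{1}$ chosen so that no type gains by misreporting. Alternatively, I would argue that the deviation need only attract the target type while earning nonnegative profit on all others, so each other type is given its own existing equilibrium branch, or that branch plus a small adjustment.
\end{itemize}
With the date-1 constraints handled, a type receiving less than $V^{*}$ would strictly prefer the deviation, which earns positive profit, a contradiction. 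Uniqueness from the first part then pins down the outcomes.

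\emph{Existence of an equilibrium.} All firms offer the same mechanism, namely the union over $\delta_{1}$ of the Problem III solutions. I must check date-1 incentive compatibility of this mechanism. By the uniqueness and binding break-even shown above, each branch costs exactly $I_{T}$ under firm beliefs. A type $\delta_{1}$ misreporting $\delta_{1}'$ would obtain a continuation that is feasible for its own Problem III: the date-$\geq2$ incentive constraints do not depend on $\delta_{1}$, and the cost is the same. That continuation is therefore weakly worse than its own solution, so truthful reporting is optimal. Under the tie-breaking and direct-mechanism conventions, any unilateral deviation either is not accepted or earns nonpositive profit, since it would have to give some type weakly more than $V^{*}$. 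I expect the date-1 cross-type consideration in the second part to be the main difficulty.
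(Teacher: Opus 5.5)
Your three-part structure is the paper's. Existence and uniqueness come from compactness plus a strict-convexity midpoint argument with a $v_1$ shift. The middle part undercuts to zero profits and then deviates to the Problem III solution at income slightly below $I_T$, using continuity of the value. Existence uses symmetric offers of the Problem III menu, with date-1 incentive compatibility coming from the fact that the constraints do not depend on $\delta_1$.

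There is, however, a step that does not follow as written. From ``every type gets at least $V^{*}(\delta_1,I_T)$'' you cannot yet invoke uniqueness. Uniqueness concerns points feasible in Problem III, and you have only shown that each \emph{firm's} total profit is zero, not that each \emph{branch} breaks even. A firm could run a loss on one initial type, who then gets more than $V^{*}$, and recoup it on another.

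The paper closes this with a separate deviation: drop every date-1 report that does not earn strictly positive profit, and raise $v_1$ slightly on the rest. This shows that every branch chosen with positive probability earns exactly zero. Only then does ``obtaining more than the value would imply a loss'' rule out the upper side.

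You can also close it with what you already have. A chosen branch with cost $C<I_T$ is feasible for Problem III at income $C$. By strict monotonicity it gives its type at most $V^{*}(\delta_1,C)<V^{*}(\delta_1,I_T)$, contradicting your lower bound. Hence every chosen branch costs at least $I_T$. Each firm's profit is zero; it is nonnegative because a firm can always offer a mechanism all of whose branches cost less than $I_T$. So each chosen branch costs exactly $I_T$, is feasible, and uniqueness applies. But this has to be said.

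Separately, the date-1 issue you single out as the main obstacle is not one. The menu of Problem III solutions computed at the common income $I_T-\eta$ is incentive compatible at date 1 by exactly the argument you give in your existence part: common constraint set, common cost. So no lump-sum adjustment is needed. Splicing in other types' existing equilibrium branches is unnecessary, and could import loss-making branches. The paper avoids the question entirely by offering only the target type's solution, with no date-1 message. Since the date-$\ge 2$ constraints and the firm's expected cost $I_T-\eta$ do not depend on who accepts, every accepting type yields profit $\eta$.

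Where date-1 incentive compatibility genuinely needs an argument is your zero-profit step, which you leave unspecified. The deviating mechanism must splice together, for each initial type, the branch that type actually selects in equilibrium, taking the most profitable firm if the type randomises. It is date-1 incentive compatible for those types only because each type's equilibrium choice was optimal among all offered branches. Non-profitable branches should also be dropped, so that types outside this set cannot impose losses.
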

In equilibrium, each initial type of agent $\delta_{1}$ has his expected
payoff maximised subject to incentive compatibility from date 2 onwards,
and subject to firm break-even. In particular, the agent's consumption
process is given by the solution to Problem III. Moreover, there exists
an equilibrium in which all firms offer the direct mechanism defined
by solving Problem III for every initial type $\delta_{1}$. Note
that, while Problem III does not make reference to incentive compatibility
of the agent's date-1 report, this is satisfied because the constraints
of the maximisation for each initial type $\delta_{1}$ are identical.
Because a firm earns zero discounted profits in the solution to Problem
III (as defined for any initial type $\delta_{1}$), and because these
profits do not depend on the agent's true initial type, incentive
compatibility of the date-1 report follows in particular from the
following observation. If a type $\delta_{1}=\delta^{\left(n\right)}$
can gain by deviating to mimic some initial type $\delta^{\left(n'\right)}$,
then the solution to Problem III for $\delta^{\left(n'\right)}$ must
represent an improvement on the solution to Problem III as parameterised
by $\delta_{1}=\delta^{\left(n\right)}$, a contradiction.

As in the proof of Proposition \ref{prop:EQUILIBRIUM_GENERAL}, we
can rewrite Problem III so that the problem is to determine agent
utilities $v_{t}\left(\delta_{1},H_{2}^{t}\right)=u\left(c_{t}\left(\delta_{1},H_{2}^{t}\right)\right)$,
call this Problem IV. Let $\left(v_{t}^{E,T}\left(\delta_{1},H_{2}^{t}\right)\right)_{t=1}^{T}$
denote the solution. Our objective is to characterise this solution.

We begin by showing that optimal policies satisfy a particular form
of monotonicity, and that only local incentive constraints are relevant.
By local incentive constraints, we mean, at any date $t\in\left\{ 2,\dots,T-1\right\} $,
any history $\left(\delta_{1},H_{2}^{t-1}\right)$, the agent with
discount factor $\delta_{t}=\delta^{\left(n\right)}$ prefers to report
truthfully (and then follow truthful reporting thereafter), rather
than report an adjacent discount factor (i.e., $\delta^{\left(n-1\right)}$
if $n\geq2$ or $\delta^{n+1}$ if $n\leq N-1$).
\begin{lem}
\label{lem:IC_Section4}For any date $t\geq2$, any history $\left(\delta_{1},H_{2}^{t-1}\right)$,
$v_{t}^{E,T}\left(\delta_{1},H_{2}^{t-1},\delta_{t}\right)$ is non-increasing
in $\delta_{t}$, while 
\[
\mathbb{E}\left[\sum_{\tau=t+1}^{T}\Pi_{s=t+1}^{\tau-1}\tilde{\delta}_{s}v_{\tau}^{E,T}\left(\delta_{1},H_{2}^{t-1},\delta_{t},\tilde{H}_{t+1}^{\tau}\right)\right]
\]
is non-decreasing. The solution to Problem IV is unchanged if we omit
all incentive constraints other than local incentive constraints.
\end{lem}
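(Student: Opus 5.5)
The plan is to work history by history. The key observation is that at any date $t\in\{2,\dots,T-1\}$ and history $(\delta_1,H_2^{t-1})$, every constraint (\ref{eq:IC_GEN}) compares only two numbers attached to each report $\delta^{(m)}$. Write
\[
a_m=v_t\left(\delta_1,H_2^{t-1},\delta^{(m)}\right)
\]
for current utility, and let $b_m$ be the expected (under $\mathbb{E}_A$) discounted continuation utility
\[
b_m=\mathbb{E}_A\left[\sum_{\tau=t+1}^{T}\Pi_{s=t+1}^{\tau-1}\tilde{\delta}_{s}v_{\tau}\left(\delta_{1},H_{2}^{t-1},\delta^{(m)},\tilde{H}_{t+1}^{\tau}\right)\right].
\]
This is the quantity in the statement, reading the expectation as the agent's. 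Type $\delta^{(n)}$ then evaluates report $\delta^{(m)}$ as $a_m+\delta^{(n)}b_m$. These payoffs are linear, and the slope on $b$ is strictly increasing in the type, so this is a standard single-crossing screening structure. Both claims of the lemma then reduce to elementary manipulations of these inequalities, applied separately at each history.

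\emph{Step 1 (monotonicity).} I would show that monotonicity holds in any incentive-compatible mechanism, so in particular in the solution $v^{E,T}$ (which exists and is unique by Proposition \ref{prop:EQUILIBRIUM_GENERAL}). Take $n>n'$. Adding the constraint that $\delta^{(n)}$ not mimic $\delta^{(n')}$ to the constraint that $\delta^{(n')}$ not mimic $\delta^{(n)}$ gives $(\delta^{(n)}-\delta^{(n')})(b_n-b_{n'})\geq0$, hence $b_n\geq b_{n'}$. Then the constraint for $\delta^{(n')}$ gives
\[
a_{n'}-a_n\geq\delta^{(n')}(b_n-b_{n'})\geq0,
\]
so $a$ is non-increasing in the type. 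The same addition argument uses only adjacent pairs when $n'=n-1$. Since the local constraints include both the downward and the upward adjacent constraint, the local constraints alone already force $b_{n}\geq b_{n-1}$ for every $n$, and hence force $b$ to be monotone overall.

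\emph{Step 2 (local implies global).} Fix a mechanism satisfying only the local constraints at every history; by Step 1, $b$ is non-decreasing at each history. I would chain downward constraints by induction on $n-m$. Suppose $a_n+\delta^{(n)}b_n\geq a_{m+1}+\delta^{(n)}b_{m+1}$. The local constraint for $\delta^{(m+1)}$ gives $a_{m+1}+\delta^{(m+1)}b_{m+1}\geq a_m+\delta^{(m+1)}b_m$. Adding $(\delta^{(n)}-\delta^{(m+1)})(b_{m+1}-b_m)\geq0$ to the latter yields $a_{m+1}+\delta^{(n)}b_{m+1}\geq a_m+\delta^{(n)}b_m$, and combining with the induction hypothesis gives $a_n+\delta^{(n)}b_n\geq a_m+\delta^{(n)}b_m$. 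Upward deviations are handled symmetrically, using $\delta^{(n)}-\delta^{(m-1)}\le 0$ paired with $b_m\ge b_{m-1}$.

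\emph{Step 3 (conclusion).} Note that the $b_m$ are defined under truthful reporting after date $t$, so the constraint set of Problem IV is, history by history, a finite system of inequalities in the utilities. Step 2 shows that the locally constrained feasible set coincides with the fully constrained one, with the objective and break-even constraint untouched. Hence the relaxed problem has the same unique solution. I do not expect a real obstacle here. The only point needing care is Step 1's use of both adjacent directions: it is essential that the relaxed problem keeps upward as well as downward local constraints. Otherwise monotonicity of $b$ would have to be derived from optimality, which would be much harder.
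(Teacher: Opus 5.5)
Your proposal is correct and follows essentially the paper's argument: monotonicity of current and continuation utility comes from combining the two adjacent incentive constraints, and global incentive compatibility comes from chaining the local constraints with the single-crossing term $(\delta^{(n)}-\delta^{(m)})(b_{m'}-b_m)$. The one difference is cosmetic: you induct on the distance of the misreport for a fixed true type, whereas the paper inducts on the true type using the adjacent type's already-established global constraints.
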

The monotonicity finding in Lemma \ref{lem:IC_Section4} follows from
incentive compatibility of the solution to Problem IV. For higher
values of the discount factor, the agent relatively favours higher
future consumption streams, and is willing to give up current consumption.
Thus the monotonicity properties are crucial to sorting (in fact,
they are necessary for incentive compatibility in general, not only
a feature of the optimal mechanism).

Now considering Problem IV but only with local incentive constraints,
we show that upward incentive constraints hold with equality.
\begin{lem}
\label{lem:UPWARD_BINDS_GEN}In the solution to Problem IV, all local
upward incentive constraints hold with equality.
\end{lem}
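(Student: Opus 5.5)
Throughout, let $U_t(H^{t-1},n)$ denote current utility $v_t(H^{t-1},\delta^{(n)})$ and $W_t(H^{t-1},n)$ the agent-expected continuation value from $t+1$ (the bracketed expectation in Lemma \ref{lem:IC_Section4}). The local upward constraint for type $n$ at $(\delta_1,H_2^{t-1})$ is
\[
U_t(n)+\delta^{(n)}W_t(n)\geq U_t(n+1)+\delta^{(n)}W_t(n+1).
\]
I will argue by contradiction using a perturbation in the spirit of Lemma \ref{lem:IC_Section3}. By Lemma \ref{lem:IC_Section4}, it suffices to work with Problem IV restricted to local constraints. I will also use the fact, from Proposition \ref{prop:EQUILIBRIUM_GENERAL}, that the solution is unique. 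Suppose that at some date $t\in\{2,\dots,T-1\}$, some history, and some $n\leq N-1$, the upward constraint from $n$ to $n+1$ is slack.

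\textbf{Perturbation.} Shift utility from type $n$ toward type $n+1$ by lowering current utility for $n$ and raising it for $n+1$. Concretely, set $U_t(n)\mapsto U_t(n)-\varepsilon q_{n+1}/\bar q$ and $U_t(n+1)\mapsto U_t(n+1)+\varepsilon q_n/\bar q$. The weights are chosen so that the agent-expected value $q_nU_t(n)+q_{n+1}U_t(n+1)$ is unchanged, which leaves all payoffs from the perspective of earlier dates (and the date-1 objective) unchanged. Continuations are left untouched.

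\textbf{Feasibility.} For small $\varepsilon$, the slack upward constraint survives. The downward constraint from $n+1$ to $n$ is relaxed, and the constraints involving $n-1$ and $n+2$ move in harmless directions or are slack:
\begin{itemize}
\item $n-1$'s incentive to mimic $n$ falls.
\item $n+2$'s incentive to mimic $n+1$ rises, but only through the downward constraint, which by monotonicity and single crossing I expect is implied by the binding pattern.
\end{itemize}
If that last point is troublesome, I would instead shift utility uniformly for all types $\geq n+1$, keeping their constraints among themselves intact. This raises $U_t$ equally for types $n+1,\dots,N$ and lowers it for types $\leq n$, with weights keeping the agent expectation fixed. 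Only the $n$ to $n+1$ constraints are then affected: the slack upward one survives and the downward one is relaxed.

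\textbf{Profit.} The firm's cost changes by
\[
\frac{\varepsilon}{R^{t-1}}\Pr_F(\text{history})\left[-\frac{P_{\leq n}}{Q_{\leq n}}\phi'(\cdot)_{\text{low}}+\frac{P_{>n}}{Q_{>n}}\phi'(\cdot)_{\text{high}}\right],
\]
with weights normalised as in the uniform version. First-order stochastic dominance gives $P_{\leq n}/Q_{\leq n}\geq 1\geq P_{>n}/Q_{>n}$. Monotonicity in Lemma \ref{lem:IC_Section4} says current utility is non-increasing in type, so convexity of $\phi$ gives $\phi'_{\text{low}}\geq\phi'_{\text{high}}$ in the appropriate averaged sense. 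Hence cost weakly falls.

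\textbf{Main obstacle: strictness.} The key difficulty is that the saving may be zero, for instance with correct beliefs and equal current utilities across types. In that case I would use a different perturbation that shifts continuation utility rather than current utility. Raise the continuation promise $W_t$ for types $\geq n+1$ via a uniform increase in $v_{t+1}$ along those branches. Compensate by lowering $U_t$ for those types so that their date-$t$ payoff $U+\delta^{(n+1)}W$ is held constant (scaled appropriately for each type). Then combine this with a change for low types that keeps the agent's ex ante value fixed. Because the differing discount factors enter these tradeoffs, strict convexity of $\phi$ yields a strict first-order cost saving. This contradicts optimality, since the saving can be returned to the agent through $v_1$. Alternatively, a zero-cost perturbation producing a distinct optimum would contradict uniqueness from Proposition \ref{prop:EQUILIBRIUM_GENERAL}, which I regard as the cleanest way to close the equality case.
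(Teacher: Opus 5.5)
Your uniform perturbation is exactly the paper's: lower $v_t$ by a common amount for every type $\le n$ and raise it by a common amount for every type $>n$, with weights $1/Q_{\le n}$ and $1/Q_{>n}$ so the agent's expectation is unchanged. Your feasibility check and your weak cost comparison (first-order stochastic dominance, monotonicity, convexity) are also correct. Drop the two-type version: it also tightens the downward constraints of $n$ and of $n+2$, and can fail under pooling.

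The genuine gap is strictness. As written you only show that cost weakly falls to first order, which gives no contradiction. Moreover, the scenario you fear, equal current utilities, cannot occur. The missing observation is that a slack local upward constraint forces a strict gap in current utility. Adding the slack upward constraint of $n$ to the downward constraint of $n+1$ gives $(\delta^{(n+1)}-\delta^{(n)})(W_t(n+1)-W_t(n))>0$. Hence $W_t(n+1)>W_t(n)$, and then $U_t(n)-U_t(n+1)>\delta^{(n)}(W_t(n+1)-W_t(n))>0$. By monotonicity, every type $\le n$ has current utility at least $U_t(n)$ and every type $>n$ at most $U_t(n+1)$. Strict convexity of $\phi$ then gives $\phi'_{\text{low}}>\phi'_{\text{high}}\geq 0$ for the $p$-weighted averages, and therefore $\frac{P_{>n}}{Q_{>n}}\phi'_{\text{high}}\leq\phi'_{\text{high}}<\phi'_{\text{low}}\leq\frac{P_{\le n}}{Q_{\le n}}\phi'_{\text{low}}$. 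So the first-order saving is strictly positive for any beliefs satisfying the dominance assumption, correct beliefs included. The same gap also keeps the lowered utilities above $u(0)$ for small $\varepsilon$.

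Neither of your fallbacks would have closed the gap. The continuation-utility perturbation is only sketched. Its type-specific shifts of $U_t$ and $W_t$ alter the downward constraints among the high types, so its feasibility is unclear. The uniqueness route fails exactly where you want to use it. If the first-order cost change were zero, the exact cost along your perturbation would be strictly convex in $\varepsilon$ with zero derivative at $0$. Every $\varepsilon\neq 0$ would then strictly raise cost, so the perturbed mechanism would violate break-even rather than be a second optimum.

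The paper obtains strictness differently. It shifts by the full slack $x$ rather than infinitesimally, and writes the exact cost change as integrals over $r\in[0,x]$. For $r<x$, the original low-type utilities lie strictly above the new $\check{v}_{\bar{t}}$ at $\delta^{(\bar{n})}$, and the high-type ones strictly below the new $\check{v}_{\bar{t}}$ at $\delta^{(\bar{n}+1)}$. This uses monotonicity of the perturbed mechanism, which is still incentive compatible. Strict convexity then yields a strict inequality directly.
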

That upward incentive constraints hold with equality in the solution
to Problem IV can be explained by the value of increasing the payoffs
of higher types. In particular, from the perspective of date 1, the
agent particularly values consumption in the periods following high
discount factor realisations. But consumption after such realisations
can be increased only subject to incentive constraints, and in particular
that the agent does not want to mimic an upward adjacent discount
factor. Higher types are thus given as much consumption as possible,
subject to upward incentive constraints. 

We now consider how contractual terms vary with changes in the agent's
date-$t$ discount factor $\delta_{t}$. In particular, we are interested
in how the firm's expected discounted continuation cost of the agent's
consumption, at any date $t\in\left\{ 2,\dots,T-1\right\} $ and any
$\left(\delta_{1},H_{2}^{t-1}\right)$, varies with the agent's realized
date-$t$ discount factor. This cost is given by
\begin{equation}
\mathbb{E}_{F}\left[\sum_{\tau=t}^{T}\frac{\phi\left(v_{\tau}\left(\delta_{1},H_{2}^{t-1},\delta_{t},\tilde{H}_{t+1}^{\tau}\right)\right)}{R^{t-1}}\right].\label{eq:CON TINUATION_COST}
\end{equation}
In providing the result, we use the notion of the ``continuation policy
at history $\left(\delta_{1},H_{2}^{t-1}\right)$ for discount factor
$\delta^{\left(n\right)}$'' to mean the sequence $\left(v_{\tau}\left(\delta_{1},H_{2}^{t-1},\delta^{\left(n\right)},\cdot\right)\right)_{\tau=t}^{T}$.
The continuation policies at history $\left(\delta_{1},H_{2}^{t-1}\right)$
for two distinct discount factors $\delta^{\left(n'\right)}$ and
$\delta^{\left(n''\right)}$ are distinct if the continuation policies
are different: $\left(v_{\tau}\left(\delta_{1},H_{2}^{t-1},\delta^{\left(n^{\prime}\right)},\cdot\right)\right)_{\tau=t}^{T}\neq\left(v_{\tau}\left(\delta_{1},H_{2}^{t-1},\delta^{\left(n^{\prime\prime}\right)},\cdot\right)\right)_{\tau=t}^{T}$.
\begin{cor}
\label{cor:WORSE_TERMS}In Problem IV, for any date $t\in\left\{ 2,\dots,T-1\right\} $,
any history $\left(\delta_{1},H_{2}^{t-1}\right)$, the firm's expected
discounted continuation cost (as given by Equation (\ref{eq:CON TINUATION_COST}))
is weakly increasing in $\delta_{t}$. For two discount factors $\delta_{t}=\delta^{\left(n'\right)}$
and $\delta_{t}=\delta^{\left(n''\right)}$, $\delta^{\left(n'\right)}<\delta^{\left(n''\right)}$,
if the continuation policies at $\left(\delta_{1},H_{2}^{t-1}\right)$
are distinct, then the firm's expected discounted continuation cost
is strictly higher for $\delta_{t}=\delta^{\left(n''\right)}$.
\end{cor}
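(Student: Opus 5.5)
The plan is a replacement argument built on Lemma \ref{lem:UPWARD_BINDS_GEN}, using optimality and uniqueness of the solution to Problem IV (Proposition \ref{prop:EQUILIBRIUM_GENERAL}). It suffices to compare adjacent discount factors $\delta^{(n)}$ and $\delta^{(n+1)}$ at a fixed date $t\in\{2,\dots,T-1\}$ and history $(\delta_1,H_2^{t-1})$. The non-adjacent case then follows by chaining: the weak inequalities compose. For strictness, if the continuation policies for $\delta^{(n')}<\delta^{(n'')}$ are distinct, then some adjacent pair between them must have distinct policies, and strictness for that pair carries through the chain.

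\textbf{Step 1 (replacement).} Fix adjacent $n,n+1$. Suppose, for contradiction, that one of the following holds:
\begin{itemize}
\item the continuation cost (\ref{eq:CON TINUATION_COST}) for $\delta^{(n+1)}$ is strictly lower than for $\delta^{(n)}$;
\item the costs are equal but the two continuation policies are distinct.
\end{itemize}
Construct a new mechanism that is identical to the solution except that type $\delta^{(n)}$ at this history receives the continuation policy of type $\delta^{(n+1)}$. By Lemma \ref{lem:UPWARD_BINDS_GEN}, the local upward constraint of $\delta^{(n)}$ binds. So type $\delta^{(n)}$'s date-$t$ continuation value, computed with discount factor $\delta^{(n)}$ under $\mathbb{E}_A$, is unchanged. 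Hence all earlier continuation values are unchanged, and so is the agent's date-1 expected payoff.

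\textbf{Step 2 (incentive compatibility of the modified mechanism).} I would check the full set of constraints (\ref{eq:IC_GEN}) at this history; constraints at other histories are unaffected.
\begin{itemize}
\item Type $\delta^{(n)}$: its own payoff is unchanged, and so is its payoff from any misreport, so its constraints still hold.
\item Any type $\delta^{(m)}$ with $m\neq n$: reporting $\delta^{(n)}$ now yields exactly what reporting $\delta^{(n+1)}$ yields. The original solution satisfies the global constraint against reporting $\delta^{(n+1)}$, since the equilibrium mechanism is fully incentive compatible. This covers $\delta^{(n+1)}$ trivially and every other type, including $\delta^{(n-1)}$.
\end{itemize}
Constraints at later dates are unchanged, because the continuation subtrees are copied wholesale. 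The step I expect to need the most care is precisely this point: I must invoke global (not merely local) incentive compatibility of the original solution, rather than Lemma \ref{lem:IC_Section4}'s reduction to local constraints.

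\textbf{Step 3 (contradiction).} Under $\mathbb{E}_F$ the history $(\delta_1,H_2^{t-1},\delta^{(n)})$ has strictly positive probability, because firm beliefs have full support. So the firm's expected cost changes by a positive multiple of the cost difference.
\begin{itemize}
\item If the cost is strictly lower, the break-even constraint becomes slack. Raising $v_1(\delta_1)$ slightly keeps all incentive constraints intact, since these concern dates $t\ge2$ only, and strictly raises the agent's payoff. This contradicts optimality.
\item If the cost is equal but the policies are distinct, the new mechanism is a different feasible solution achieving the same value. This contradicts uniqueness of the solution to Problem IV from Proposition \ref{prop:EQUILIBRIUM_GENERAL}.
\end{itemize}
Thus the cost is weakly increasing across adjacent types, and strictly increasing whenever the policies differ, which by the chaining above gives the corollary.
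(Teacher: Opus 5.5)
Your proposal is correct and follows essentially the same route as the paper: give type $\delta^{(n)}$ the continuation policy of $\delta^{(n+1)}$, use the binding upward constraint from Lemma \ref{lem:UPWARD_BINDS_GEN} to hold the agent's payoff fixed, and obtain a contradiction with optimality or uniqueness in Problem IV. The differences are only presentational. The paper gets incentive compatibility by treating the reassignment as an alternative optimal reporting strategy in the original mechanism, which your explicit check of the global constraints makes precise. The paper also covers the strictly-cheaper case by uniqueness alone, since that mechanism is a second optimal solution distinct from the original.
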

The result shows that when the agent realises a lower discount factor,
at dates $t\in\left\{ 2,\dots,T-1\right\} $, then he receives worse
terms in the sense of a lower expected NPV payout, where the NPV is
calculated according to the firm gross interest rate $R$. The result
has value when there is at least some separation in the terms received
by the agent when realising different discount factors from date 2
onwards. The following example illustrates in a tractable setting
that indeed full separation can occur.
\begin{example}
\label{exa:Squareroot}Let $T=3$ and $u\left(c\right)=\sqrt{c}$.
Suppose that discount factors are commonly believed to be uniformly
distributed over values $\delta^{\left(n\right)}=\underline{\delta}+\frac{\left(n-1\right)}{N-1}\left(\bar{\delta}-\underline{\delta}\right)$,
for $n=1,\dots,N$, where $0<\underline{\delta}<\bar{\delta}$. Hence,
$q_{n}=p_{n}=1/N$ for all $n$. If $N$ is sufficiently large, then
for all $\delta_{1}$, $v_{2}\left(\delta_{1},\delta_{2}\right)$
is strictly positive and strictly decreasing in $\delta_{2}$, while
$v_{3}\left(\delta_{1},\delta_{2}\right)$ is strictly positive and
strictly increasing in $\delta_{2}$. 
\end{example}
The example confirms that, for a plausible setting with agreement
on beliefs -- a ``neoclassical'' setting in the language of HK --
the agent finds reporting to be less patient, and thus taking earlier
consumption, more expensive. This higher expense is judged using firms'
intertemporal preferences as captured by the gross interest rate $R$.
As noted in the Introduction, this differs from the implications discussed
by HK of a possible alternative neoclassical model (p 2300). 

To illustrate Example \ref{exa:Squareroot}, we compute numerically
a particular case, displayed in Figures \ref{fig:CONTINUATION_COST}-\ref{fig:DATE-3-CONSUMPTION}.
Figure \ref{fig:CONTINUATION_COST} shows that when the agent has
a higher realised discount factor $\delta_{2}$, he receives in equilibrium
better terms in the sense that the NPV of consumption from date-2
onwards, using the firm's gross interest rate $R$, is higher. Thus,
if the agent chooses front-loaded consumption at date 2 (as he does
when $\delta_{2}$ is small), he is penalised through worse terms
(according to the gross interest rate $R$). Consumption is naturally
then lower at date 3. Compared to the efficient policy (as studied
at the beginning of this section), the dispersion in consumption is
reduced due to the need to respect incentive constraints. In other
words, higher types consume too little and lower types consume too
much. This reflects that incentive compatibility limits the extent
to which the firm can favour more patient types (as demanded by the
efficient policy, see Proposition \ref{prop:EFFICIENT_POLICY}). 
\begin{figure}[H]
\caption{NPV of firm's date-2 continuation cost\label{fig:CONTINUATION_COST}}

\begin{centering}
\includegraphics[scale=0.5]{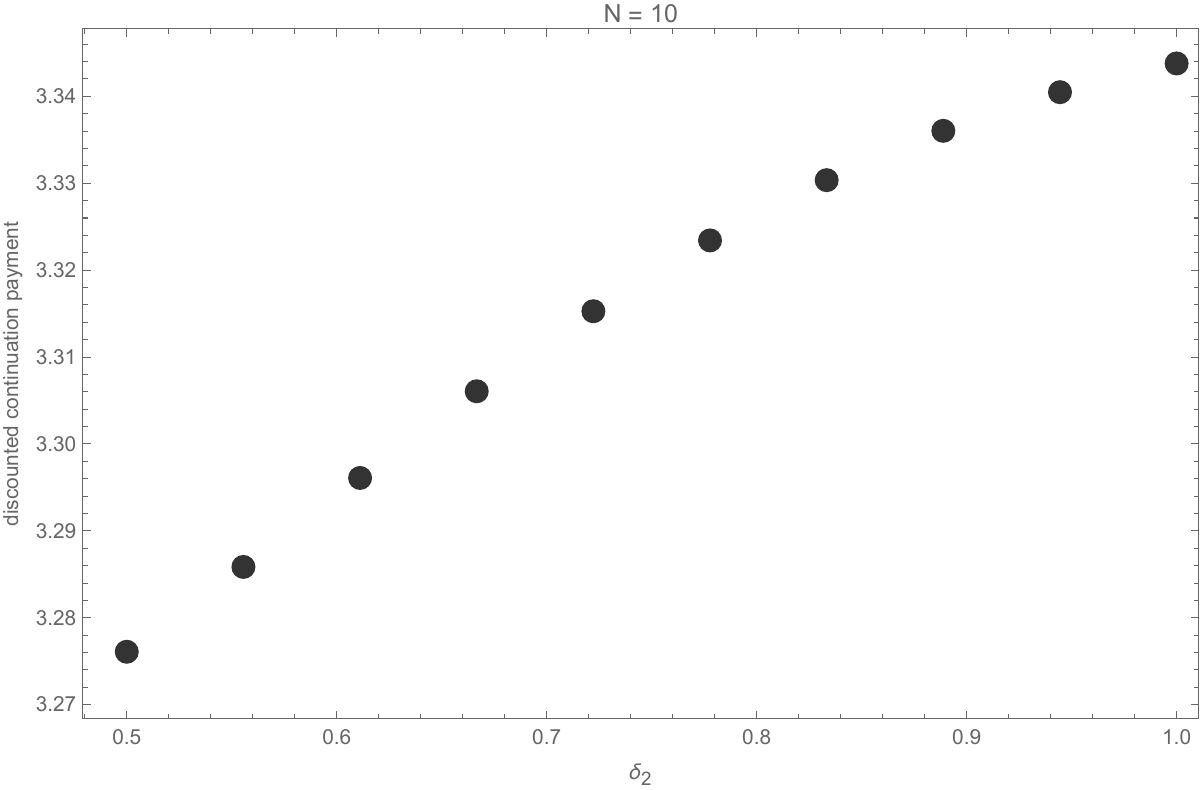}
\par\end{centering}
\centering{}NPV of date-2 continuation cost $c_{2}\left(1,\delta_{2}\right)+\frac{c_{3}\left(1,\delta_{2}\right)}{R}$
as function of $\delta_{2}$ for Example \ref{exa:Squareroot}. Parameter
values: $\underline{\delta}=0.5$, $\bar{\delta}=1$, $N=10$, $R=3/2,$
$I_{3}=3$.
\end{figure}

\begin{figure}[H]
\caption{Equilibrium and efficient date-2 consumption\label{fig:DATE-2-CONSUMPTION}}

\begin{centering}
\includegraphics[scale=0.5]{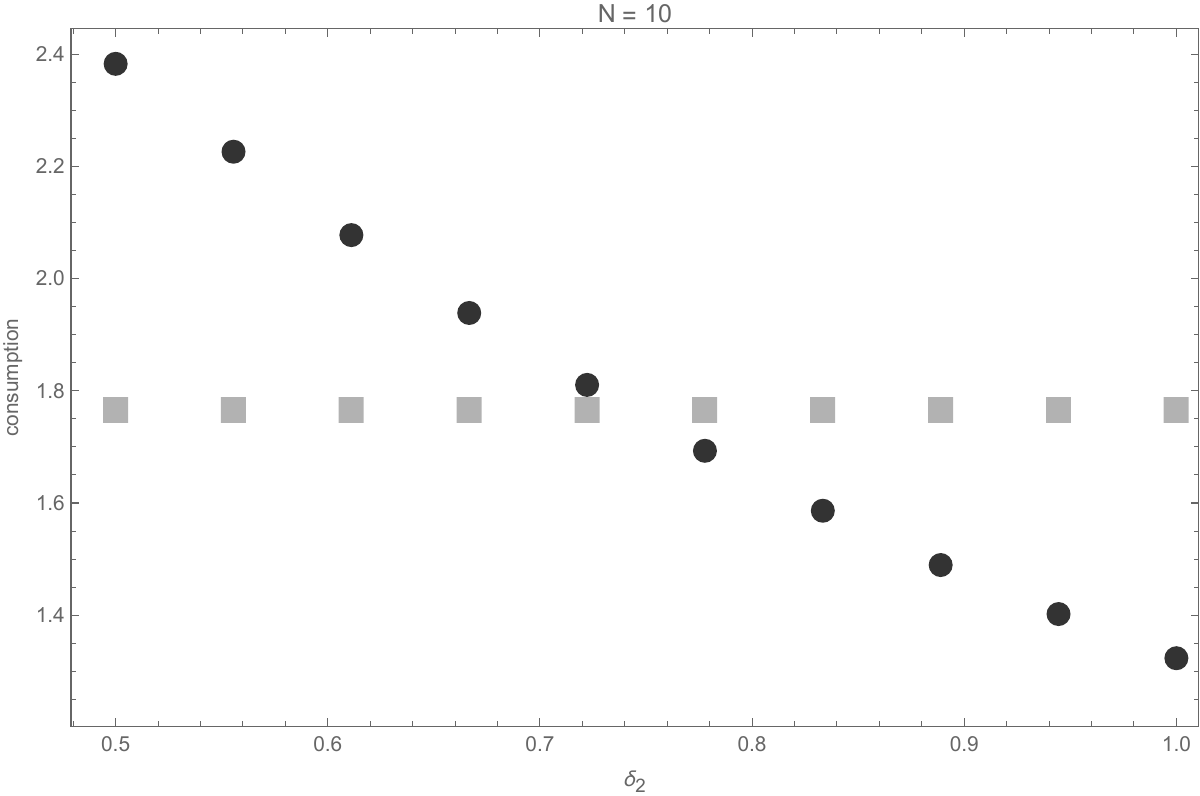}
\par\end{centering}
\centering{}Date-2 consumption $c_{2}\left(1,\delta_{2}\right)$ as
function of $\delta_{2}$: efficient (grey) and equilibrium (black).
Parameterisation as in Example \ref{exa:Squareroot} with $\underline{\delta}=0.5$,
$\bar{\delta}=1$, $N=10$, $R=3/2,$ $I_{3}=3$.
\end{figure}

\begin{figure}[H]
\caption{Equilibrium and efficient date-2 consumption\label{fig:DATE-3-CONSUMPTION}}

\begin{centering}
\includegraphics[scale=0.5]{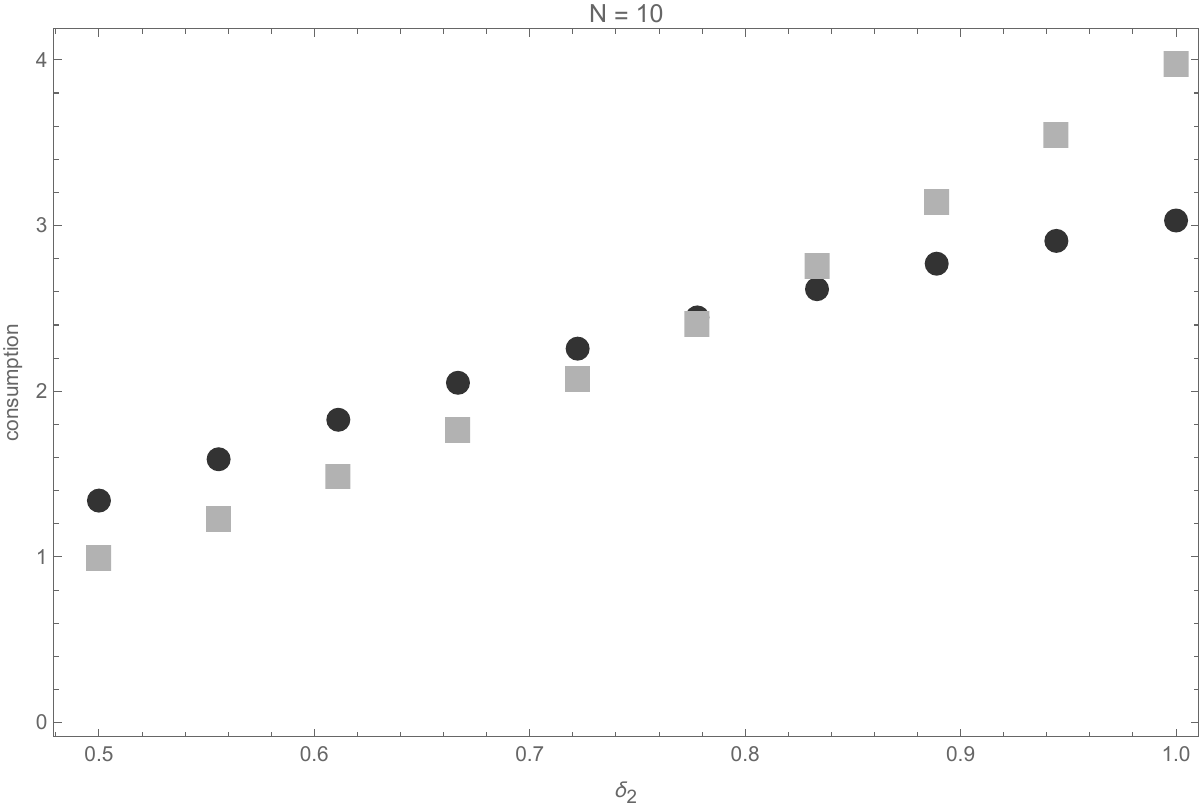}
\par\end{centering}
\centering{}Date-3 consumption $c_{3}\left(1,\delta_{2}\right)$ as
function of $\delta_{2}$: efficient (grey) and equilibrium (black).
Parameterisation as in Example \ref{exa:Squareroot} with $\underline{\delta}=0.5$,
$\bar{\delta}=1$, $N=10$, $R=3/2,$ $I_{3}=3$.
\end{figure}

One important feature of the contracts predicted in HK is the large
magnitude of the penalty for electing earlier consumption. They emphasise
their prediction of ``large penalties for falling behind {[}the{]}
front-loaded repayment {[}i.e., delayed consumption{]} schedule''
(p 2280). In our setting where agent and firm beliefs agree, and with
many types, the equilibrium mechanism is flexible in the sense that
a small reduction in patience corresponds to a small change in contract
terms. Large or discontinuous changes in payment terms are possible
with belief agreement, but then seem to be a feature of specifications
with a small number of types. For instance, for the parameterisation
considered in the figures, but setting $N=2$, the firm's discounted
continuation cost of agent consumption after an initial discount factor
$\delta_{1}=1$, i.e. $c_{2}\left(1,\delta_{2}\right)+\frac{c_{3}\left(1,\delta_{2}\right)}{R}$,
moves from 3.57 if $\delta_{2}=1$ to 3.10 if $\delta_{2}=0.5$, a
13 per cent fall in the NPV of consumption. The key reason for HK's
finding of large penalties for delayed repayment (which can be read
as early consumption in our model) is different and seems sourced
in the agent's misprediction of his future preferences and behaviour.
Because the specification in Section \ref{sec:Degenerate_Impatience}
replicates the logic in the models with quasi-hyperbolic discounting,
our model can also account for situations where small differences
in preferences (e.g., where $\delta^{\left(2\right)}-\delta^{\left(1\right)}$
is small in Section \ref{sec:Degenerate_Impatience}) is associated
with discretely different consumption behaviour, with a large effective
penalty (evaluated according to the gross interest rate $R$) for
early consumption. 

To further analyse the structure of equilibrium repayment, we refer
to the argument in the proof of Example \ref{exa:Squareroot}, where
we derive necessary conditions for optimality in Problem IV when $T=3$
generally. These conditions apply provided the solution is interior
(consumption is strictly positive) and separating (higher values of
$\delta_{2}$ receive lower date-2 consumption and higher date-3 consumption).
Let $\lambda$ be the multiplier on the firm's break-even condition,
let $v_{1}$ be the date-1 utility, $\left(w_{n}\right)_{n=1}^{N}$
be the date-2 utilities at each discount factor $\delta_{2}=\delta^{\left(n\right)}$,
and let $\left(z_{n}\right)_{n=1}^{N}$ be the corresponding date-3
utilities. Let $\bar{Q}_{n}=\sum_{m=n}^{N}q_{n}$. The necessary conditions
can be written:
\begin{equation}
1=\lambda\phi^{\prime}\left(v_{1}\right);\label{eq:DATE_1_NECESSARY}
\end{equation}
\begin{equation}
\delta_{1}=\frac{\lambda}{R}\sum_{n=1}^{N}p_{n}\phi^{\prime}\left(w_{n}\right);\label{eq:OPTIMAL_W1-BODY}
\end{equation}
for $n\geq2$,
\begin{align}
0= & \left(\delta^{\left(n\right)}-\delta^{\left(n-1\right)}\right)\left(\delta_{1}\bar{Q}_{n}-\frac{\lambda}{R}\sum_{m=n}^{N}p_{m}\phi'\left(w_{m}\right)\right)\nonumber \\
 & +\frac{\lambda p_{n}\delta^{\left(n\right)}}{R}\phi^{\prime}\left(w_{n}\right)-\frac{\lambda p_{n}}{R^{2}}\phi'\left(z_{n}\right);\label{eq:EULER-BODY}
\end{align}
and
\begin{equation}
\delta^{\left(1\right)}R\phi^{\prime}\left(w_{1}\right)=\phi^{\prime}\left(z_{1}\right).\label{eq:EFFICIENCY_BOTTOM-BODY}
\end{equation}

One set of insights suggested by these conditions relates to distortions
in the timing of date 2 and date 3 consumption. In particular, note
that for a fixed amount of consumption (in NPV terms, according to
the firm's gross discount factor $R$) to be efficiently distributed
between date 2 and date 3 for a given $\delta_{2}=\delta^{\left(n\right)}$
requires 
\[
\delta^{\left(n\right)}R\phi^{\prime}\left(w_{n}\right)=\phi^{\prime}\left(z_{n}\right).
\]
This condition is necessary and sufficient for the agent's payoff
not to be increased by any movement of consumption between dates 2
and 3 in state $\delta_{2}=\delta^{\left(n\right)}$, subject to such
change not reducing the firm's profit. Then Equation (\ref{eq:EFFICIENCY_BOTTOM-BODY})
provides a sense in which there are ``no distortions at the bottom''.
Using Equation (\ref{eq:OPTIMAL_W1-BODY}), and the assumption that
agent beliefs ($q_{n}$) first-order stochastically dominate firm
beliefs ($p_{n}$), the first term on the right side of Equation (\ref{eq:EULER-BODY})
is strictly positive, for $n\ge2$. This implies 
\begin{equation}
\delta^{\left(n\right)}R\phi^{\prime}\left(w_{n}\right)<\phi^{\prime}\left(z_{n}\right)\label{eq:BACKLOADED_N}
\end{equation}
and so consumption is ``too backloaded'' relative to efficiency. These
observations can be summarised as follows.
\begin{prop}
\label{prop:BACKLOADING}Let $T=3$ and fix $\delta_{1}$. Suppose
the solution to Problem IV is interior (all consumption is strictly
positive) and that the mechanism is separating (higher values of $\delta_{2}$
receive strictly lower date-2 consumption but strictly higher date-3
consumption). Then for all values $\delta_{2}=\delta^{\left(n\right)}$,
$n\geq2$, the agent's date-2 continuation payoff given $\delta_{2}$
can be increased in the absence of incentive constraints by shifting
consumption earlier (i.e., to date 2), while leaving the NPV of firm
profits unchanged. In this sense, equilibrium consumption at dates
2 and 3 is inefficiently backloaded.
\end{prop}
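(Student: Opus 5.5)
The plan is to derive the necessary conditions (\ref{eq:DATE_1_NECESSARY})--(\ref{eq:EFFICIENCY_BOTTOM-BODY}) rigorously for $T=3$ and then read the conclusion off (\ref{eq:EULER-BODY}).

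\emph{Step 1: reduce the constraint set.} Fix $\delta_1$ and write Problem IV in utilities $v_1$, $(w_n)$, $(z_n)$. By Lemma \ref{lem:IC_Section4}, only local incentive constraints matter. By Lemma \ref{lem:UPWARD_BINDS_GEN}, the upward ones bind:
\[
w_{n-1}+\delta^{(n-1)}z_{n-1}=w_n+\delta^{(n-1)}z_n .
\]
Under the separating hypothesis, $w_n<w_{n-1}$ and $z_n>z_{n-1}$. Combining the binding upward constraint with $\delta^{(n)}>\delta^{(n-1)}$ then gives
\[
w_n+\delta^{(n)}z_n>w_{n-1}+\delta^{(n)}z_{n-1},
\]
so downward constraints are slack. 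I would therefore use the binding upward constraints to express each $w_n$ recursively in terms of $w_1$ and $(z_m)_{m\le n}$:
\[
w_n=w_1-\sum_{m=2}^{n}\delta^{(m-1)}(z_m-z_{m-1}).
\]
Substituting this, and the break-even constraint with multiplier $\lambda$, leaves a problem in $(v_1,w_1,z_1,\dots,z_N)$. By interiority the only remaining constraint is break-even, so Lagrangian first-order conditions are necessary.

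\emph{Step 2: compute the first-order conditions.} The condition for $v_1$ gives (\ref{eq:DATE_1_NECESSARY}), and the condition for $w_1$ gives (\ref{eq:OPTIMAL_W1-BODY}). For $z_n$ with $n\ge2$, I would differentiate the agent objective $\delta_1\sum_m q_m(w_m+\delta^{(m)}z_m)$ and the cost $\sum_m p_m[\phi(w_m)/R+\phi(z_m)/R^2]$. Note that $\partial w_m/\partial z_n=-\delta^{(n-1)}$ for $m\ge n$ and $+\delta^{(n)}$ for $m\ge n+1$. Collecting terms yields (\ref{eq:EULER-BODY}). Doing the same for $z_1$ together with (\ref{eq:OPTIMAL_W1-BODY}) yields (\ref{eq:EFFICIENCY_BOTTOM-BODY}). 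This bookkeeping is routine but is where errors can creep in. A cleaner route is the perturbation in which $z_n$ rises by $\varepsilon$, $w_n$ falls by $\delta^{(n-1)}\varepsilon$, and all $w_m$, $z_m$ for $m>n$ shift so that the IC differences are preserved. This gives the same expression directly.

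\emph{Step 3: sign the bracket.} This is the main step. I would show
\[
\delta_1\bar Q_n>\frac{\lambda}{R}\sum_{m\ge n}p_m\phi'(w_m)\quad\text{for } n\ge2 .
\]
By (\ref{eq:OPTIMAL_W1-BODY}), $\frac{\lambda}{R}\sum_{m=1}^{N}p_m\phi'(w_m)=\delta_1$, so the right side equals $\delta_1$ times the $p$-weighted share of $\phi'(w_m)$ over $m\ge n$. Since $w_m$ is strictly decreasing and $\phi'$ is strictly increasing, $\phi'(w_m)$ is strictly decreasing in $m$. Hence this share is strictly less than $\sum_{m\ge n}p_m$, which in turn is at most $\bar Q_n$ by first-order stochastic dominance. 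The bracket is therefore strictly positive, and so (\ref{eq:EULER-BODY}) gives $\delta^{(n)}R\phi'(w_n)<\phi'(z_n)$, i.e.\ (\ref{eq:BACKLOADED_N}).

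\emph{Step 4: interpret.} Fix $n\ge2$ and move consumption from date 3 to date 2 in state $n$, holding the firm's NPV fixed. That is, lower $z_n$ by $\varepsilon$ and raise $w_n$ by $\eta$ with $\phi'(w_n)\eta/R=\phi'(z_n)\varepsilon/R^2$. The first-order change in the continuation payoff $w_n+\delta^{(n)}z_n$ is
\[
\varepsilon\Bigl(\frac{\phi'(z_n)}{R\phi'(w_n)}-\delta^{(n)}\Bigr),
\]
which is strictly positive by (\ref{eq:BACKLOADED_N}). This proves the proposition. I expect the only real difficulty to be the careful derivation of (\ref{eq:EULER-BODY}), including checking that the separating and interior hypotheses justify treating the binding upward constraints as the only active incentive constraints.
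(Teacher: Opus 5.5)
Your proposal is correct and takes essentially the paper's route: you substitute the binding upward constraints from Lemma \ref{lem:UPWARD_BINDS_GEN}, derive the Lagrangian conditions (\ref{eq:OPTIMAL_W1-BODY})--(\ref{eq:EFFICIENCY_BOTTOM-BODY}), sign the first term of (\ref{eq:EULER-BODY}) using (\ref{eq:OPTIMAL_W1-BODY}) and first-order stochastic dominance to get (\ref{eq:BACKLOADED_N}), and finish with a first-order date-3-to-date-2 perturbation. Two small differences are worth noting: your Step 3 makes explicit that the strict decrease of $w_m$ (from separation) is what gives strictness when $p_n=q_n$, and your perturbation holds the NPV fixed, whereas the paper holds utility fixed and hands back the savings — the two are equivalent.
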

Proposition \ref{prop:BACKLOADING} provides a sense in which inefficient
backloading of consumption can occur in equilibrium, including when
the agent and firms have common beliefs ($q_{n}=p_{n}$ for all $n$).
Notice that the statement is conditional on the realisation of the
discount factor $\delta_{2}$ and so the statement is not sensitive
to the beliefs over discount factors one may apply to evaluate welfare.
It is also worth observing that the distortion described in Proposition
\ref{prop:BACKLOADING} will often vanish for the highest type $\delta_{2}=\delta^{\left(N\right)}$
as the number of types $N$ becomes large (we might say there is ``approximate
efficiency at the top''). The calculations in the proof of Example
\ref{exa:Squareroot} imply that, at least in this example, we have
$\phi^{\prime}\left(z_{N}\right)-\delta^{\left(N\right)}R\phi^{\prime}\left(w_{N}\right)\rightarrow0$,
and so there is approximate effiency in the timing of consumption
for the highest type.

The reason consumption is more backloaded than efficient in the sense
of Proposition \ref{prop:BACKLOADING} is simply that increasing the
date-3 utility of a given value of $\delta_{2}>\delta^{\left(1\right)}$
relaxes the upward incentive constraint pointing to this type. Indeed,
higher types have a relative preference for more backloaded consumption,
compared to those types below. The effect can be understood by considering
an envelope representation of the agent's date-2 continuation payoff:
\[
w_{n}+\delta^{\left(n\right)}z_{n}=w_{1}+\delta^{\left(1\right)}z_{1}+\sum_{i=2}^{n}\left(\delta^{\left(i\right)}-\delta^{\left(i-1\right)}\right)z_{i}.
\]
 This expression is derived using that upward incentive constraints
bind. Increases in the date-3 consumption $z_{i}$, for some $i\geq2$,
increase the date-2 continuation utility $w_{n}+\delta^{\left(n\right)}z_{n}$
of all higher types~$\delta_{2}=\delta^{\left(n\right)}\geq\delta^{\left(i\right)}$,
while leaving the utilities of types below $\delta^{\left(i\right)}$
unchanged. This increases the extent to which the mechanism can favour
higher realisations of $\delta_{2}$ at date 3 (which, as we have
seen, is associated with greater efficiency) while respecting incentive
compatibility. This effect on the dispersion of date-2 continuation
payoffs does not apply however for $\delta_{2}=\delta^{\left(1\right)}$:
an increase in $z_{1}$ raises the date-2 continuation utility of
all types equally. This helps to explain why we predict the efficent
level of backloading between dates 2 and 3, conditional on $\delta_{2}=\delta^{\left(1\right)}$.

Another useful perturbation that can shed light on the dynamics of
equilibrium and efficient consumption is to consider shifting utility
into the subsequent period by a constant that does not respond to
type. This has the advantage of preserving incentive constraints and
is therefore a relevant perturbation both in Problem IV and in studying
the efficient policy. For $T\geq3$, we consider such perturbations
at dates $t\in\left\{ 2,\dots,T-1\right\} $ conditional on history
$\left(\delta_{1},H_{2}^{t-1}\right)$, as well as between the first
and second period. We find the following.
\begin{prop}
\label{prop:CONSTANT_PERTURBATION}Consider any $T\geq3$ and suppose
both equilibrium and efficient consumption are interior -- i.e.,
all consumption values are strictly positive. For any $t\in\left\{ 2,\dots,T-1\right\} $,
any history $\left(\delta_{1},H_{2}^{t-1}\right)$, the equilibrium
utilities satisfy
\[
\mathbb{E}_{F}\left[\phi^{\prime}\left(v_{t+1}^{E,T}\left(\delta_{1},H_{2}^{t-1},\tilde{\delta}_{t},\tilde{\delta}_{t+1}\right)\right)\right]=R\mathbb{E}_{A}\left[\tilde{\delta}_{t}\right]\mathbb{E}_{F}\left[\phi^{\prime}\left(v_{t}^{E,T}\left(\delta_{1},H_{2}^{t-1},\tilde{\delta}_{t}\right)\right)\right]
\]
while the efficient policies satisfy 
\[
\mathbb{E}_{F}\left[\phi^{\prime}\left(v_{t+1}^{B,T}\left(\delta_{1},H_{2}^{t-1},\tilde{\delta}_{t},\tilde{\delta}_{t+1}\right)\right)\right]=R\mathbb{E}_{F}\left[\tilde{\delta}_{t}\right]\mathbb{E}_{F}\left[\phi^{\prime}\left(v_{t}^{B,T}\left(\delta_{1},H_{2}^{t-1},\tilde{\delta}_{t}\right)\right)\right].
\]
Similarly,
\[
\mathbb{E}_{F}\left[\phi^{\prime}\left(v_{2}^{E,T}\left(\delta_{1},\tilde{\delta}_{2}\right)\right)\right]=R\delta_{1}\phi^{\prime}\left(v_{1}^{E,T}\left(\delta_{1}\right)\right)
\]
and 
\[
\mathbb{E}_{F}\left[\phi^{\prime}\left(v_{2}^{B,T}\left(\delta_{1},\tilde{\delta}_{2}\right)\right)\right]=R\delta_{1}\phi^{\prime}\left(v_{1}^{B,T}\left(\delta_{1}\right)\right).
\]
\end{prop}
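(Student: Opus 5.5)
We prove this with a variational argument in the spirit of Rogerson (1985). In each problem we take the optimum, perturb it by adding a type-independent constant to utilities at one date and subtracting a compensating constant at the next date. Such a perturbation leaves every incentive constraint \eqref{eq:IC_GEN} intact, so it is feasible in Problem IV (and trivially in the efficient problem). Optimality then forces a first-order condition, which is the stated equation once we use the break-even multiplier.

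\emph{Equilibrium, dates $t\geq 2$.} Fix $t\in\{2,\dots,T-1\}$ and a history $(\delta_1,H_2^{t-1})$. For $\varepsilon$ in a neighbourhood of zero, define a perturbed mechanism as follows.
\begin{itemize}
\item At every $\delta_t$, replace $v_t(\delta_1,H_2^{t-1},\delta_t)$ by $v_t-\delta_t\varepsilon$.
\item At every $(\delta_t,\delta_{t+1})$, replace $v_{t+1}(\delta_1,H_2^{t-1},\delta_t,\delta_{t+1})$ by $v_{t+1}+\varepsilon$.
\end{itemize}
Since $\delta_t$ is the type's own discount factor, these changes need checking against the incentive constraints.

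\emph{Problem.} The naive check fails. At date $t$, a type $\delta_t$ reporting $\hat\delta_t$ would gain $-\hat\delta_t\varepsilon+\delta_t\varepsilon$, which is nonzero when $\hat\delta_t\neq\delta_t$. So this perturbation breaks the date-$t$ incentive constraints, and the argument must be modified.

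\emph{Correct perturbation.} Subtract a constant $\varepsilon$ from every $v_t(\cdot,\delta_t)$, which leaves the date-$t$ constraints intact. Then add $\varepsilon/\bar\delta$ to every $v_{t+1}$, with $\bar\delta$ a constant to be chosen. Under this change a type $\delta_t$ sees its date-$t$ continuation change by $-\varepsilon+\delta_t\varepsilon/\bar\delta$. This amount depends on $\delta_t$ but not on the report, since the continuation after any report changes by the same constant at $t+1$. Hence the date-$t$ constraints are preserved. The date-$(t+1)$ constraints are also preserved, because a constant shift at $t+1$ changes truth-telling and every deviation equally. Constraints at other dates and histories are untouched.

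Evaluate the effect on the date-1 objective. It is proportional to the agent's expected change, which is $\mathbb{E}_A[-1+\tilde\delta_t/\bar\delta]$ times the positive weight $\delta_1\mathbb{E}_A[\Pi_{s=2}^{t-1}\tilde\delta_s]$ restricted to this history. Choosing $\bar\delta=\mathbb{E}_A[\tilde\delta_t]$ makes this zero. Because the mechanism is independent of $\delta_t$ in the relevant sense and the perturbation is two-sided, the agent's payoff is unchanged for all small $\varepsilon$ of either sign. Interiority guarantees the perturbation stays in the domain. The firm's cost at this history therefore has to be minimised at $\varepsilon=0$, giving
\[
-\mathbb{E}_F\bigl[\phi'(v_t)\bigr]+\frac{1}{R\bar\delta}\,\mathbb{E}_F\bigl[\phi'(v_{t+1})\bigr]=0 .
\]
Otherwise the saved cost could be returned to the agent through $v_1$, strictly improving the solution by uniqueness (Proposition~\ref{prop:EQUILIBRIUM_GENERAL}). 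Here the history probability under $\mathbb{E}_F$ is positive by full support. Rearranging gives the first displayed identity.

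\emph{Efficient policy.} Repeat the same argument, but with the objective evaluated under $\mathbb{E}_F$. The neutral choice is then $\bar\delta=\mathbb{E}_F[\tilde\delta_t]$, and no incentive constraints need checking.

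\emph{Date 1 to date 2.} Subtract $\varepsilon$ from $v_1(\delta_1)$ and add $\varepsilon/\delta_1$ to every $v_2(\delta_1,\cdot)$. This is a constant shift at date 2, so it preserves all incentive constraints, and it leaves the objective unchanged in both problems. Cost minimisation then gives
\[
\phi'(v_1)=\frac{1}{R\delta_1}\,\mathbb{E}_F\bigl[\phi'(v_2)\bigr],
\]
which is the stated identity. Equivalently, the date-1 condition \eqref{eq:DATE_1_NECESSARY} combined with the $v_2$ stationarity condition yields the same result.

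\emph{Main obstacle.} The delicate step is verifying that the equilibrium perturbation is genuinely incentive-neutral, namely that a constant shift at $t+1$ combined with a constant shift at $t$ keeps every reporting deviation's relative payoff fixed. The remaining step is to justify differentiability from interiority and to argue that any first-order gain in cost would strictly raise the agent's payoff through $v_1$.
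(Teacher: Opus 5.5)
Your proposal is correct and follows essentially the paper's route: a type-independent shift of utility between dates $t$ and $t+1$ (and between dates 1 and 2) leaves the incentive constraints intact and yields the stated first-order conditions, with $\mathbb{E}_A[\tilde{\delta}_t]$ for equilibrium and $\mathbb{E}_F[\tilde{\delta}_t]$ for efficiency. The only difference is the normalisation: the paper holds firm profits fixed to first order and requires that the agent not gain, while you hold the agent's payoff exactly fixed by taking $\bar{\delta}=\mathbb{E}_A[\tilde{\delta}_t]$ and require that the convex cost not fall. Your version has the small advantage that the agent's expected continuation at the history is unchanged, so incentive constraints at earlier dates are genuinely untouched; note that this needs that particular $\bar{\delta}$, so your remark that other constraints are untouched belongs after you choose it, and the initial $\delta_t$-dependent attempt can simply be deleted.
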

We have the following corollary to Proposition \ref{prop:CONSTANT_PERTURBATION}.
\begin{cor}
\label{Cor:LOG-CASE}Suppose $T\geq3$. If $v\left(c\right)=\ln\left(c\right)$,
the interiority assumption in Proposition \ref{prop:CONSTANT_PERTURBATION}
holds and we have $\phi^{\prime}\left(u\left(c\right)\right)=c$.
Therefore, for $t\in\left\{ 1,2,\dots,T-1\right\} $, the ratio $\frac{\mathbb{E}_{F}\left[c_{t+1}\left(\delta_{1},\tilde{H}_{2}^{t+1}\right)\right]}{\mathbb{E}_{F}\left[c_{t}\left(\delta_{1},\tilde{H}_{2}^{t+1}\right)\right]}$
is weakly larger for equilibrium consumption than efficient consumption.
The ratio is the same in both cases if firm and agent beliefs agree.
For $t\geq2$, the ratio is strictly larger for equilibrium consumption
than efficient consumption if the agent is strictly more optimistic
about patience than the firm.
\end{cor}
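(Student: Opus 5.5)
The plan is to derive the corollary directly from Proposition \ref{prop:CONSTANT_PERTURBATION}. Three steps are needed: verify interiority under log utility, compute $\phi^{\prime}\circ u$, and then compare the two inverse Euler equations using the fact that $\mathbb{E}_{A}[\tilde{\delta}_{t}]\ge\mathbb{E}_{F}[\tilde{\delta}_{t}]$ under first-order stochastic dominance.

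\emph{Interiority and the identity $\phi^{\prime}(u(c))=c$.} With $u=\ln$, we have $\lim_{c\downarrow0}u(c)=-\infty$. Any mechanism assigning zero consumption at a history that either the firm or the agent deems possible therefore yields payoff $-\infty$ under the relevant expectation. Full support of both $p$ and $q$ makes every history possible, and a feasible mechanism with finite payoff exists, for instance constant consumption with $\sum_{t}c/R^{t-1}=I_{T}$. Hence both the efficient policy and the solution to Problem IV are interior. Since $\phi(v)=e^{v}$, we get $\phi^{\prime}(v)=e^{v}$ and $\phi^{\prime}(u(c))=c$.

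\emph{Aggregating the Euler conditions.} For $t=1$, the last two equations of Proposition \ref{prop:CONSTANT_PERTURBATION} become
\[
\mathbb{E}_{F}[c_{2}]=R\delta_{1}c_{1}
\]
for both equilibrium and efficient consumption, so the ratio is $R\delta_{1}$ in both cases. For $t\ge2$, fix a history $(\delta_{1},H_{2}^{t-1})$. The first two equations give
\[
\mathbb{E}_{F}[c_{t+1}\mid H_{2}^{t-1}]=R\,\mathbb{E}_{X}[\tilde{\delta}]\,\mathbb{E}_{F}[c_{t}\mid H_{2}^{t-1}],
\]
with $X=A$ in equilibrium and $X=F$ in the efficient case. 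The constant $R\,\mathbb{E}_{X}[\tilde\delta]$ does not depend on the history. Taking $\mathbb{E}_{F}$ over $\tilde{H}_{2}^{t-1}$ via the law of iterated expectations therefore yields
\[
\frac{\mathbb{E}_{F}[c_{t+1}]}{\mathbb{E}_{F}[c_{t}]}=R\,\mathbb{E}_{X}[\tilde{\delta}].
\]
This step uses that the firm's expectation is the relevant aggregation measure and that the conditional identities hold at every history. The denominator is strictly positive by interiority.

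\emph{Comparison.} It remains to compare $\mathbb{E}_{A}[\tilde\delta]$ with $\mathbb{E}_{F}[\tilde\delta]$. Write
\[
\mathbb{E}[\tilde\delta]=\delta^{(1)}+\sum_{m=1}^{N-1}\bigl(\delta^{(m+1)}-\delta^{(m)}\bigr)\Bigl(1-\sum_{n\le m}\cdot\Bigr),
\]
an Abel summation. The assumption $\sum_{n\le m}p_{n}\ge\sum_{n\le m}q_{n}$ for all $m$ then gives $\mathbb{E}_{A}[\tilde\delta]\ge\mathbb{E}_{F}[\tilde\delta]$. Equality holds when $p=q$. If the distributions differ, some cumulative inequality is strict, and since $\delta^{(n)}$ is strictly increasing the inequality between the means is strict. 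This gives the weak comparison for all $t$, equality of the ratios under common beliefs (for $t=1$ they coincide regardless of beliefs), and strict inequality for $t\ge2$ under strict optimism.

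The main obstacle is the aggregation step. One must make sure the conditional Euler equations hold at \emph{every} history, including those the firm weights, and that the ratio in the statement is read as unconditional $\mathbb{E}_{F}$ means. Both follow from full support of $p$ and $q$ together with interiority.
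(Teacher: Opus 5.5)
Your proposal is correct and follows the paper's own route: take $\mathbb{E}_{F}$ over $\tilde{H}_{2}^{t-1}$ of the conditional inverse Euler equations in Proposition \ref{prop:CONSTANT_PERTURBATION}, using $\phi^{\prime}(u(c))=c$, and then use first-order stochastic dominance to get $\mathbb{E}_{A}[\tilde{\delta}_{t}]\geq\mathbb{E}_{F}[\tilde{\delta}_{t}]$, strictly when the distributions differ. Your interiority argument (constant consumption gives finite payoff, is trivially incentive compatible, and full support makes any zero consumption yield $-\infty$) and your Abel-summation proof of the mean ordering fill in steps the paper leaves implicit.
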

These results give a sense of how agent beliefs affect the rate of
growth, or rate of decline, of equilibrium consumption relative to
efficiency. The corollary shows that it is possible to compare the
rates of growth (or decline) of expected consumption in the logarithmic
case, with equilibrium consumption more backloaded than for the efficient
policy, strictly so if the agent is more optimistic than firms. The
reason for this result is that, when the agent is more optimistic
about future discount factors, he attaches more weight to future consumption.
In equilibrium, firms pander to the agent's beliefs by delaying consumption,
thus raising the agent's expected payoffs from the contract at the
time of contracting.

It is worth noting here that \ref{prop:BACKLOADING} and \ref{prop:CONSTANT_PERTURBATION}
represent distinct though related exercises, calling for careful interpretation.
Consider the case of $T=3$ and the case of common beliefs. Suppose
both results apply to the equilibrium contract, i.e. the equilibrium
consumption values are interior and the contract fully separates types
$\delta_{2}$, as in the case of Example \ref{exa:Squareroot} for
large enough $N$. Then, from the arguments surrounding Proposition
\ref{prop:BACKLOADING} (see Equation (\ref{eq:BACKLOADED_N}), we
have
\begin{equation}
R\sum_{n=1}^{N}p_{n}\delta^{\left(n\right)}\phi^{\prime}\left(v_{2}^{E,3}\left(\delta_{1},\delta^{\left(n\right)}\right)\right)<\sum_{n=1}^{N}p_{n}\phi^{\prime}\left(v_{3}^{E,3}\left(\delta_{1},\delta^{\left(n\right)}\right)\right).\label{eq:BACKLOADING_AVERAGE_3_PERIODS}
\end{equation}
From Proposition \ref{prop:CONSTANT_PERTURBATION}, we have
\begin{equation}
R\left(\sum_{i=1}^{N}p_{i}\delta^{\left(i\right)}\right)\sum_{n=1}^{N}p_{n}\phi^{\prime}\left(v_{2}^{E,3}\left(\delta_{1},\delta^{\left(n\right)}\right)\right)=\sum_{n=1}^{N}p_{n}\phi^{\prime}\left(v_{3}^{E,3}\left(\delta_{1},\delta^{\left(n\right)}\right)\right).\label{eq:BALANCED_AVERAGE_3_PERIODS}
\end{equation}
The difference in (\ref{eq:BACKLOADING_AVERAGE_3_PERIODS}) and (\ref{eq:BALANCED_AVERAGE_3_PERIODS})
is explained by the fact that $v_{2}^{E,3}\left(\delta_{1},\delta^{\left(n\right)}\right)$
is decreasing in $\delta^{\left(n\right)}$. Thus, in Proposition
\ref{prop:BACKLOADING}, we find that consumption is inefficiently
backloaded to 3 from date 2, while in Proposition \ref{prop:CONSTANT_PERTURBATION}
we find that the ratio of the firm's expected marginal cost of date-3
utility to the expected marginal cost of date-2 utility is the same
as for the efficient policy. These findings are not contradictory.

We now further interpret Equation (\ref{eq:BALANCED_AVERAGE_3_PERIODS}),
and because the point is more general, consider now the case with
$T\ge3$. Suppose still that the agent has correct beliefs. Then,
for $t\in\left\{ 2,\dots,T-1\right\} $,
\[
\mathbb{E}_{F}\left[\phi^{\prime}\left(v_{t+1}^{E,T}\left(\delta_{1},H_{2}^{t-1},\tilde{\delta}_{t},\tilde{\delta}_{t+1}\right)\right)\right]=R\mathbb{E}_{F}\left[\tilde{\delta}_{t}\right]\mathbb{E}_{F}\left[\phi^{\prime}\left(v_{t}^{E,T}\left(\delta_{1},H_{2}^{t-1},\tilde{\delta}_{t}\right)\right)\right].
\]
Consider changing date-$t+1$ utility by $\varepsilon$ and date-$t$
utility by $-\varepsilon\mathbb{E}_{F}\left[\tilde{\delta}_{t}\right]$,
following history $\left(\delta_{1},H_{2}^{t-1}\right)$. This keeps
agent expected payoffs unchanged conditional on history $\left(\delta_{1},H_{2}^{t-1}\right)$.
The corresponding change in firm payments conditional on history $\left(\delta_{1},H_{2}^{t-1}\right)$
is
\begin{align*}
 & \mathbb{E}_{F}\left[\phi\left(v_{t}^{E,T}\left(\delta_{1},H_{2}^{t-1},\tilde{\delta}_{t}\right)-\varepsilon\mathbb{E}_{F}\left[\tilde{\delta}_{t}\right]\right)\right]+\mathbb{E}_{F}\left[\phi\left(v_{t+1}^{E,T}\left(\delta_{1},H_{2}^{t-1},\tilde{\delta}_{t},\tilde{\delta}_{t+1}\right)+\varepsilon\right)\right]\\
- & \left(\mathbb{E}_{F}\left[\phi\left(v_{t}^{E,T}\left(\delta_{1},H_{2}^{t-1},\tilde{\delta}_{t}\right)\right)\right]+\mathbb{E}_{F}\left[\phi\left(v_{t+1}^{E,T}\left(\delta_{1},H_{2}^{t-1},\tilde{\delta}_{t},\tilde{\delta}_{t+1}\right)\right)\right]\right).
\end{align*}
The expression is convex in $\varepsilon$, showing that the cost-minimising
choice of $\varepsilon$ over feasible values (i.e., such that the
resulting utilities are interior) is $\varepsilon=0$. It is then
immediate that, considering shifting utilities at date $t$ and date-$t+1$
after history $\left(\delta_{1},H_{2}^{t-1}\right)$ by a constant
independent of $\left(\tilde{\delta}_{t},\tilde{\delta}_{t+1}\right)$
cannot raise the agent's expected payoff without increasing the expected
cost to the firm.

Referring back to Equations (\ref{eq:BACKLOADING_AVERAGE_3_PERIODS})
and (\ref{eq:BALANCED_AVERAGE_3_PERIODS}) for the case of $T=3$
and $t=2$, the above implies that, in case solutions are interior
and under full separation, utilities are too backloaded state by state
(i.e., for each value of $\delta_{2}$), but neither too backloaded
nor too frontloaded if we instead consider shifting per-period utilities
by a constant independent of the state. In the above exercise, taking
$\varepsilon<0$ and hence shifting utility earlier, we see that the
change results in a compression of expected discounted payoffs and
a relative worsening of the terms of higher types in terms of the
NPV of firm expenditure. This helps to explains why shifting utility
by a constant yields a different effect compared to shifting utility
state by state.

Away from the case where agent beliefs are correct, suppose that the
agent is more optimistic than the principal. Then, for $T\geq3$ and
$t\in\left\{ 2,\dots,T-1\right\} $,
\[
\mathbb{E}_{F}\left[\phi^{\prime}\left(v_{t+1}^{E,T}\left(\delta_{1},H_{2}^{t-1},\tilde{\delta}_{t},\tilde{\delta}_{t+1}\right)\right)\right]>R\mathbb{E}_{F}\left[\tilde{\delta}_{t}\right]\mathbb{E}_{F}\left[\phi^{\prime}\left(v_{t}^{E,T}\left(\delta_{1},H_{2}^{t-1},\tilde{\delta}_{t}\right)\right)\right].
\]
It then follows that decreasing utilities by a constant at date $t+1$
and increasing utilities by a constant at date $t$, following history
$\left(\delta_{1},H_{2}^{t-1}\right)$, such that agent expected payoffs
under correct beliefs are unchanged, decreases the NPV of firm expected
payments. Similarly, shifting utility earlier in this way permits
an increase in agent expected utility under correct beliefs that holds
firm expected profits constant.

\section{Conclusions}

A body of work in behavioural contract theory studies present-biased
agents by specifying models with quasi-hyperbolic discounting. This
paper suggests an alternative framework based on stochastic discount
factors that could be a useful complement to the existing literature.
Given that the fully naive agent with quasi-hyperbolic discounting
arises as a limiting case of our model, our framework can be viewed
as an extension (in a particular direction) of the workhorse model
in the literature. This extension includes an agent with correct beliefs.
It is then of interest to understand which intuitions and logic from
models of quasi-hyperbolic discounting with naive agents survive as
we move towards the case of correct beliefs.

In the present paper we have applied our alternative framework to
a model of competitive credit markets, following closely the work
of HK and GZ. In Section \ref{sec:Degenerate_Impatience}, we studied
a setting with only two realisations of the discount factor, where
firms know that the low value always occurs. This specialisation highlights
that a similar logic to the existing literature must apply, although
the agent believes the discount factor is stochastic. While the forces
shaping equilibrium contracts are essentially the same, a natural
choice of welfare benchmark implies that consumption is our model
is inefficiently delayed. In Section \ref{sec:RICHER}, we consider
a setting which allows more values of the discount factor and assumes
full support over these values, nesting the case of correct agent
beliefs. We then show that the agent being penalised for choosing
options with earlier consumption is a robust feature of equilibrium
mechanisms, including in the case where the agent has correct beliefs.
We also provide results showing how equilibrium consumption is often
inefficiently backloaded -- we show that this holds in a three-period
model when the contract involves the appropriate separation of types.
Separation is verified in a natural special case with correct beliefs.
While contractual distortions do arise in the case with correct agent
beliefs, regulatory intervention cannot improve agent welfare. In
this case, the appropriate beliefs with which to measure welfare are
unambiguous, and it is not possible to raise agent expected payoffs
without violating either firm participation or incentive-compatibility
constraints. From the perspective of a regulator that does not know
the true model, observing that credit contracts penalise consumers
for early consumption (or late repayment) does not imply that regulatory
intervention can improve consumer welfare.

\appendix

\section*{Appendix:\ \ Omitted proofs}

\subsection*{Proofs of results in Section \ref{sec:Degenerate_Impatience}}

\begin{proof}[Proof of Lemma~\ref{lem:IC_Section3}]

We begin by observing that in an incentive-compatible mechanism, for
any $2\leq t\leq T-1$, and any $H^{t-1}\in{\cal H}^{R}$,
\begin{equation}
v_{t}\left(H^{t-1},\delta^{(1)}\right)\geq v_{t}\left(H^{t-1},\delta^{(2)}\right)\label{eq:CURRENT-INEQUALITY}
\end{equation}
 and 
\begin{equation}
\mathbb{E}_{A}\left[\sum_{\tau=t+1}^{T}\Pi_{s=t+1}^{\tau-1}\tilde{\delta}_{s}v_{\tau}\left(H^{t-1},\delta^{(2)},\tilde{H}_{t+1}^{\tau}\right)\right]\geq\mathbb{E}_{A}\left[\sum_{\tau=t+1}^{T}\Pi_{s=t+1}^{\tau-1}\tilde{\delta}_{s}v_{\tau}\left(H^{t-1},\delta^{(1)},\tilde{H}_{t+1}^{\tau}\right)\right],\label{eq:LATER-INEQUALITY}
\end{equation}
with both inequalities strict if and only if at least one of (\ref{eq:HIGH-IC})
and (\ref{eq:LOW-IC-NEW}) hold as a strict inequality. To see this,
first we can write (\ref{eq:LOW-IC-NEW}) as 
\begin{equation}
v_{t}\left(H^{t-1},\delta^{(1)}\right)-v_{t}\left(H^{t-1},\delta^{(2)}\right)\geq\delta^{(1)}\left(\begin{array}{c}
\mathbb{E}_{A}\left[\sum_{\tau=t+1}^{T}\Pi_{s=t+1}^{\tau-1}\tilde{\delta}_{s}v_{\tau}\left(H^{t-1},\delta^{(2)},\tilde{H}_{t+1}^{\tau}\right)\right]\\
-\mathbb{E}_{A}\left[\sum_{\tau=t+1}^{T}\Pi_{s=t+1}^{\tau-1}\tilde{\delta}_{s}v_{\tau}\left(H^{t-1},\delta^{(1)},\tilde{H}_{t+1}^{\tau}\right)\right]
\end{array}\right).\label{eq:LOW-IC-RESTATED}
\end{equation}
If the inequality (\ref{eq:LATER-INEQUALITY}) fails to hold, then
\begin{equation}
v_{t}\left(H^{t-1},\delta^{(1)}\right)-v_{t}\left(H^{t-1},\delta^{(2)}\right)>\delta^{(2)}\left(\begin{array}{c}
\mathbb{E}_{A}\left[\sum_{\tau=t+1}^{T}\Pi_{s=t+1}^{\tau-1}\tilde{\delta}_{s}v_{\tau}\left(H^{t-1},\delta^{(2)},\tilde{H}_{t+1}^{\tau}\right)\right]\\
-\mathbb{E}_{A}\left[\sum_{\tau=t+1}^{T}\Pi_{s=t+1}^{\tau-1}\tilde{\delta}_{s}v_{\tau}\left(H^{t-1},\delta^{(1)},\tilde{H}_{t+1}^{\tau}\right)\right]
\end{array}\right)\label{eq:HIGH-IC-VIOLATED}
\end{equation}
and so (\ref{eq:HIGH-IC}) is violated. So we can conclude (\ref{eq:LATER-INEQUALITY})
is satisfied. Then if (\ref{eq:CURRENT-INEQUALITY}) fails to hold,
both types strictly prefer to report $\delta^{\left(2\right)}$.

Now, note that, when both of (\ref{eq:CURRENT-INEQUALITY}) and (\ref{eq:LATER-INEQUALITY})
hold as equality, both types are indifferent across both reports;
i.e., both incentive constraints (\ref{eq:HIGH-IC}) and (\ref{eq:LOW-IC-NEW})
hold as equality. It is easy to see we cannot have exactly one of
these two inequalities hold as equalities, and if both are strict
inequalities then at least one of the two incentive constraints is
strict. (For instance, if (\ref{eq:LOW-IC-NEW}) holds as equality
and hence equivalently (\ref{eq:LOW-IC-RESTATED}) holds as equality,
then the reverse strict inequality as in inequality (\ref{eq:HIGH-IC-VIOLATED})
holds, i.e. (\ref{eq:HIGH-IC}) holds strictly.)

Suppose now that (\ref{eq:LOW-IC-NEW}) holds as a strict inequality
at date $t$ s.t. $2\leq t\leq T-1$, and history $H^{t-1}\in{\cal H}^{R}$,
with the difference between left and right sides equal to $x$. Increase
$v_{t}\left(H^{t-1},\delta^{(2)}\right)$ by $x\left(1-q_{2}\right)$
and decrease $v_{t}\left(H^{t-1},\delta^{(1)}\right)$ by $xq_{2}$.
Now, (\ref{eq:LOW-IC-NEW}) holds as an equality and since (\ref{eq:LATER-INEQUALITY})
holds strictly, we have (\ref{eq:HIGH-IC}) holds strictly. Moreover,
for any period $t'<t$, the expected agent continuation payoff from
$t'+1$ onwards conditional on having reported $H^{t'}\in{\cal H}^{R}$,
as determined by 
\[
\mathbb{E}_{A}\left[\sum_{\tau=t'+1}^{T}\Pi_{s=t'+1}^{\tau-1}\tilde{\delta}_{s}v_{\tau}\left(H^{t'},\tilde{H}_{t'+1}^{\tau}\right)\right],
\]
remains unchanged. Hence all incentive constraints at dates before
$t$ are unaffected, and the same is clearly true for incentive constraints
after $t$. Thus, it is easy to see that all incentive constraints
remain intact.

Finally, note that the agent's expected payoff has not changed and
firm profits are either unchanged or increase if $H^{t-1}$ is the
sequence comprising only $\delta^{(1)}$. The change can be implemented
at every $t$, $2\leq t\leq T-1$, and $H^{t-1}$ such that (\ref{eq:LOW-IC-NEW})
holds as a strict inequality.

\end{proof}

\bigskip{}

\begin{proof}[Proof of Lemma~\ref{lem:Objective_Naive}]

Suppose now that all incentive constraints (\ref{eq:LOW-IC-NEW})
hold with equality. By substituting this equality for $t=2$ into
the agent's expected payoff, we obtain
\begin{align*}
 & v_{1}\left(\delta^{(1)}\right)+\delta^{(1)}\mathbb{E}_{A}\left[v_{2}\left(\delta^{(1)},\tilde{\delta}_{2}\right)+\sum_{t=3}^{T}\Pi_{s=2}^{t-1}\tilde{\delta}_{s}v_{t}\left(\delta^{(1)},\tilde{\delta}_{2},\dots,\tilde{\delta}_{t}\right)\right]\\
= & v_{1}\left(\delta^{(1)}\right)-q_{2}\left(\delta^{(2)}-\delta^{(1)}\right)v_{2}\left(\delta^{(1)},\delta^{(2)}\right)+q_{2}\delta^{(2)}\left[\begin{array}{c}
v_{2}\left(\delta^{(1)},\delta^{(2)}\right)\\
+\delta^{(1)}\mathbb{E}_{A}\left[\sum_{t=3}^{T}\Pi_{s=3}^{t-1}\tilde{\delta}_{s}u\left(c_{t}\left(\delta^{(1)},\delta^{(2)},\tilde{\delta}_{3},\dots,\tilde{\delta}_{t}\right)\right)\right]
\end{array}\right]\\
 & +q_{1}\delta^{(1)}\left(v_{2}\left(\delta^{(1)},\delta^{(1)}\right)+\delta^{(1)}\mathbb{E}_{A}\left[\sum_{t=3}^{T}\Pi_{s=3}^{t-1}\tilde{\delta}_{s}v_{t}\left(\delta^{(1)},\delta^{(1)},\tilde{\delta}_{3},\dots,\tilde{\delta}_{t}\right)\right]\right)\\
= & v_{1}\left(\delta^{(1)}\right)-q_{2}\left(\delta^{(2)}-\delta^{(1)}\right)v_{2}\left(\delta^{(1)},\delta^{(2)}\right)+q_{2}\delta^{(2)}\left[\begin{array}{c}
v_{2}\left(\delta^{(1)},\delta^{(1)}\right)\\
+\delta^{(1)}\mathbb{E}_{A}\left[\sum_{t=3}^{T}\Pi_{s=3}^{t-1}\tilde{\delta}_{s}u\left(c_{t}\left(\delta^{(1)},\delta^{(1)},\tilde{\delta}_{3},\dots,\tilde{\delta}_{t}\right)\right)\right]
\end{array}\right]\\
 & +q_{1}\delta^{(1)}\left(v_{2}\left(\delta^{(1)},\delta^{(1)}\right)+\delta^{(1)}\mathbb{E}_{A}\left[\sum_{t=3}^{T}\Pi_{s=3}^{t-1}\tilde{\delta}_{s}v_{t}\left(\delta^{(1)},\delta^{(1)},\tilde{\delta}_{3},\dots,\tilde{\delta}_{t}\right)\right]\right)\\
= & v_{1}\left(\delta^{(1)}\right)-q_{2}\left(\delta^{(2)}-\delta^{(1)}\right)v_{2}\left(\delta^{(1)},\delta^{(2)}\right)+\left(q_{1}\delta^{(1)}+q_{2}\delta^{(2)}\right)v_{2}\left(\delta^{(1)},\delta^{(1)}\right)\\
 & +\left(q_{1}\delta^{(1)}+q_{2}\delta^{(2)}\right)\delta^{(1)}\mathbb{E}_{A}\left[\sum_{t=3}^{T}\Pi_{s=3}^{t-1}\tilde{\delta}_{s}v_{t}\left(\delta^{(1)},\delta^{(1)},\tilde{\delta}_{3},\dots,\tilde{\delta}_{t}\right)\right].
\end{align*}
We then observe that, for $\tau\in\left\{ 3,4,\dots,T-1\right\} $,
\begin{align*}
 & \delta^{(1)}\mathbb{E}_{A}\left[\sum_{t=\tau}^{T}\Pi_{s=\tau}^{t-1}\tilde{\delta}_{s}v_{t}\left(L^{\tau-1},\tilde{\delta}_{\tau},\dots,\tilde{\delta}_{t}\right)\right]\\
= & q_{2}\delta^{(2)}\left(v_{\tau}\left(L^{\tau-1},\delta^{(2)}\right)+\delta^{(1)}\mathbb{E}_{A}\left[\sum_{t=\tau+1}^{T}\Pi_{s=\tau+1}^{t-1}\tilde{\delta}_{s}v_{t}\left(L^{\tau-1},\delta^{(2)},\tilde{\delta}_{\tau+1},\dots,\tilde{\delta}_{t}\right)\right]\right)\\
 & -\left(\delta^{(2)}-\delta^{(1)}\right)q_{2}v_{\tau}\left(L^{\tau-1},\delta^{(2)}\right)\\
 & +q_{1}\delta^{(1)}\left(v_{\tau}\left(L^{\tau}\right)+\delta^{(1)}\mathbb{E}_{A}\left[\sum_{t=\tau+1}^{T}\Pi_{s=\tau+1}^{t-1}\tilde{\delta}_{s}v_{t}\left(L^{\tau},\tilde{\delta}_{\tau+1},\dots,\tilde{\delta}_{t}\right)\right]\right)\\
= & q_{2}\delta^{(2)}\left(v_{\tau}\left(L^{\tau}\right)+\delta^{(1)}\mathbb{E}_{A}\left[\sum_{t=\tau+1}^{T}\Pi_{s=\tau+1}^{t-1}\tilde{\delta}_{s}v_{t}\left(L^{\tau},\tilde{\delta}_{\tau+1},\dots,\tilde{\delta}_{t}\right)\right]\right)\\
 & -\left(\delta^{(2)}-\delta^{(1)}\right)q_{2}v_{\tau}\left(L^{\tau-1},\delta^{(2)}\right)\\
 & +q_{1}\delta^{(1)}\left(v_{\tau}\left(L^{\tau}\right)+\delta^{(1)}\mathbb{E}_{A}\left[\sum_{t=\tau+1}^{T}\Pi_{s=\tau+1}^{t-1}\tilde{\delta}_{s}v_{t}\left(L^{\tau},\tilde{\delta}_{\tau+1},\dots,\tilde{\delta}_{t}\right)\right]\right)\\
= & \left(q_{1}\delta^{(1)}+q_{2}\delta^{(2)}\right)\left(v_{\tau}\left(L^{\tau}\right)+\delta^{(1)}\mathbb{E}_{A}\left[\sum_{t=\tau+1}^{T}\Pi_{s=\tau+1}^{t-1}\tilde{\delta}_{s}v_{t}\left(L^{\tau},\tilde{\delta}_{\tau+1},\dots,\tilde{\delta}_{t}\right)\right]\right)\\
 & -\left(\delta^{(2)}-\delta^{(1)}\right)q_{2}v_{\tau}\left(L^{\tau-1},\delta^{(2)}\right).
\end{align*}
Substituting iteratively, we find that 
\begin{align}
 & v_{1}\left(\delta^{(1)}\right)+\delta^{(1)}\mathbb{E}_{A}\left[v_{2}\left(\delta^{(1)},\tilde{\delta}_{2}\right)+\sum_{t=3}^{T}\Pi_{s=2}^{t-1}\tilde{\delta}_{s}v_{t}\left(\delta^{(1)},\tilde{\delta}_{2},\dots,\tilde{\delta}_{t}\right)\right]\nonumber \\
= & \sum_{t=1}^{T-1}\left(q_{1}\delta^{(1)}+q_{2}\delta^{(2)}\right)^{t-1}v_{t}\left(L^{t}\right)+\left(q_{1}\delta^{(1)}+q_{2}\delta^{(2)}\right)^{T-2}\delta^{(1)}v_{T}\left(L^{T}\right)\nonumber \\
 & -\sum_{t=2}^{T-1}\left(q_{1}\delta^{(1)}+q_{2}\delta^{(2)}\right)^{t-2}q_{2}\left(\delta^{(2)}-\delta^{(1)}\right)v_{t}\left(L^{t-1},\delta^{(2)}\right).\label{eq:OBJECTIVE_REWRITTEN}
\end{align}

Now, consider a relaxed program in which we maximize (\ref{eq:OBJECTIVE_REWRITTEN})
subject only to the constraints that the firm makes non-negative profit
and that utilities are no less than $u\left(0\right)$. First note
that, for any $t\in\left\{ 2,\dots,T-1\right\} $, the obective is
decreasing in $v_{t}\left(L^{t-1},\delta^{(2)}\right)$, so these
must be equal to $u\left(0\right)=0$. Second, because $T$ is finite,
satisfaction of the non-negative profit constraint requires that utilities
are uniformly bounded across histories $L^{t}$. Hence, a maximum
in the problem exists. It is easy to see that the firm makes zero
profit. Because these will correspond to the equilibrium utilities,
we index the optimal values with a superscript $E$. In particular,
the utilities maximising (\ref{eq:OBJECTIVE_REWRITTEN}) are given
by $v_{t}^{E,T}\left(L^{t}\right)$, $t=1,\dots,T$, and $v_{t}^{E,T}\left(L^{t-1},\delta^{(2)}\right)$,
$t=2,\dots,T-1$. Using strict convexity of $\phi$, these values
are uniquely determined. 

Now we can determine an incentive-compatible mechanism with the utilities
$\hat{v}_{t}\left(H^{t}\right)$, $H^{t}\in{\cal H}^{R}$. For $t=1,\dots,T$,
$\hat{v}_{t}\left(L^{t}\right)=v_{t}^{E,T}\left(L^{t}\right)$. For
$t=2,\dots,T-1$, let $\hat{v}_{t}\left(L^{t-1},\delta^{(2)}\right)=v_{t}^{E,T}\left(L^{t-1},\delta^{(2)}\right)=u\left(0\right)$.
We complete the specification by setting, for any period $t\in\left\{ 2,\dots,T-1\right\} $
and history such that $H^{t}=\left(L^{t-1},\delta^{(2)}\right)$,
all subsequent utilities to be constant; i.e. for all $H_{t+1}^{\tau}$
with $\tau\geq t+1$, let $\hat{v}_{\tau}\left(L^{t-1},\delta^{(2)},H_{t+1}^{\tau}\right)=\bar{v}\left(L^{t-1},\delta^{(2)}\right)$.
Then, all incentive constraints occurring after the history $\left(L^{t-1},\delta^{(2)}\right)$,
i.e. at date $t+1$ or any later period, are satisfied trivially.
We only need to ensure incentive constraints occuring at $t\in\left\{ 2,\dots,T-1\right\} $
at histories $H^{t-1}=L^{t-1}$, and we do so such that the constraints
(\ref{eq:LOW-IC-NEW}) hold with equality. This can be achieved iteratively
working backwards. For $t=T-1$, we choose $\hat{v}_{T}\left(L^{T-2},\delta^{(2)}\right)=\bar{v}\left(L^{T-2},\delta^{(2)}\right)$,
with $\bar{v}\left(L^{T-2},\delta^{(2)}\right)$ satisfying
\[
v_{T-1}^{E,T}\left(L^{T-1}\right)+\delta^{(1)}v_{T}^{E,T}\left(L^{T-1}\right)=u\left(0\right)+\delta^{(1)}\bar{v}\left(L^{T-2},\delta^{(2)}\right).
\]
So we have (\ref{eq:LOW-IC-NEW}) holds with equality at $H^{T-2}=L^{T-2}$,
while (\ref{eq:HIGH-IC}) is satisfied because $\hat{v}_{T}\left(L^{T-1}\right)=v_{T}^{E,T}\left(L^{T-1}\right)\leq\bar{v}\left(L^{T-2},\delta^{(2)}\right)=\hat{v}_{T}\left(L^{T-2},\delta^{(2)}\right)$.

Now, for any $t\in\left\{ 2,\dots,T-2\right\} $, if $\hat{v}_{\tau}\left(L^{t},H_{t+1}^{\tau}\right)$,
all $H_{t+1}^{\tau}\in\Delta^{\tau-t}$, have been determined for
$\tau>t$, specify $\bar{v}\left(L^{t-1},\delta^{(2)}\right)$ as
solving
\[
v_{t}^{E,T}\left(L^{t}\right)+\delta^{(1)}\mathbb{E}_{A}\left[\sum_{\tau^{\prime}=t+1}^{T}\Pi_{s=t+1}^{\tau^{\prime}-1}\tilde{\delta}_{s}\hat{v}_{\tau^{\prime}}\left(L^{t},\tilde{H}_{t+1}^{\tau^{\prime}}\right)\right]=u\left(0\right)+\delta^{(1)}\sum_{\tau^{\prime}=t+1}^{T}\left(q_{1}\delta^{(1)}+q_{2}\delta^{(2)}\right)^{\tau^{\prime}-\left(t+1\right)}\bar{v}\left(L^{t-1},\delta^{(2)}\right).
\]
Thus, we have specified, for $\tau>t$, all $H_{t+1}^{\tau}\in\Delta^{\tau-t}$,
$\hat{v}_{\tau}\left(L^{t-1},\delta^{(2)},H_{t+1}^{\tau}\right)=\bar{v}\left(L^{t-1},\delta^{(2)}\right)$.
And hence, $\hat{v}_{\tau}\left(L^{t-1},H_{t}^{\tau}\right)$ is determined
for all $\tau>t-1$, all $H_{t}^{\tau}\in\Delta^{\tau-t+1}$. Thus,
proceeding iteratively backwards, we specify the entire mechanism.
The mechanism is constructed to ensure constraint (\ref{eq:LOW-IC-NEW})
holds with equality at histories $L^{t-1}$, $2\leq t\leq T-1$. That
(\ref{eq:HIGH-IC}) is then satisfied follows because at these histories,
\begin{align*}
\mathbb{E}_{A}\left[\sum_{\tau=t+1}^{T}\Pi_{s=t+1}^{\tau-1}\tilde{\delta}_{s}\hat{v}_{\tau}\left(L^{t},\tilde{H}_{t+1}^{\tau}\right)\right] & \leq\mathbb{E}_{A}\left[\sum_{\tau=t+1}^{T}\Pi_{s=t+1}^{\tau-1}\tilde{\delta}_{s}\hat{v}_{\tau}\left(L^{t-1},\delta^{(2)},\tilde{H}_{t+1}^{\tau}\right)\right]\\
 & =\sum_{\tau=t+1}^{T}\left(q_{1}\delta^{(1)}+q_{2}\delta^{(2)}\right)^{\tau-\left(t+1\right)}\bar{v}\left(L^{t-1},\delta^{(2)}\right).
\end{align*}
This completes the proof.

\end{proof}

\bigskip{}

\begin{proof}[Proof of Proposition~\ref{Prop:AVERAGE_DF}]

The argument exactly follows Citanna et al. (2023), reproduced here
for convenience. Consider the programme:
\[
V_{T}\left(\beta,I_{T}\right)=\max_{\left\{ v_{t}\right\} _{t=1}^{T}}\left\{ \sum_{t=1}^{T-1}\left(q_{1}\delta^{(1)}+q_{2}\delta^{(2)}\right)^{t-1}v_{t}+\beta\left(q_{1}\delta^{(1)}+q_{2}\delta^{(2)}\right)^{T-2}v_{T}\right\} 
\]
subject to 
\[
\sum_{t=1}^{T}\frac{\phi\left(v_{t}\right)}{R^{t-1}}\leq I_{T}.
\]
By Lemma \ref{lem:Objective_Naive}, $V_{T}\left(\delta^{(1)},I_{T}\right)$
is the agent's equilibrium payoff in the $T$-period version of our
original model when discounted income is $I_{T}$. Then, because the
benchmark-discounting problem is well-posed, $\left(V_{T}\left(q_{1}\delta^{(1)}+q_{2}\delta^{(2)},I_{T}\right)\right)_{T=3}^{\infty}$
is bounded from above and it is a non-decreasing sequence. Hence it
has a finite limit and so the sequence is Cauchy:
\[
\lim_{T\rightarrow\infty}\left(V_{T}\left(q_{1}\delta^{(1)}+q_{2}\delta^{(2)},I_{T}\right)-V_{T-1}\left(q_{1}\delta^{(1)}+q_{2}\delta^{(2)},I_{T-1}\right)\right)=0.
\]

For any $T\geq3$, we have 
\[
W_{T}^{A}-W_{T}^{E}=V_{T}\left(\left(q_{1}\delta^{(1)}+q_{2}\delta^{(2)}\right),I_{T}\right)-\sum_{t=1}^{T}\left(q_{1}\delta^{(1)}+q_{2}\delta^{(2)}\right)^{t-1}v_{t}^{E,T}\left(L^{t}\right)\geq0.
\]
But then 
\[
\sum_{t=1}^{T}\left(q_{1}\delta^{(1)}+q_{2}\delta^{(2)}\right)^{t-1}v_{t}^{E,T}\left(L^{t}\right)\geq V_{T}\left(\delta^{(1)},I_{T}\right)\geq V_{T-1}\left(\left(q_{1}\delta^{(1)}+q_{2}\delta^{(2)}\right),I_{T-1}\right).
\]
This implies that 
\begin{align*}
V_{T}\left(\left(q_{1}\delta^{(1)}+q_{2}\delta^{(2)}\right),I_{T}\right)-V_{T-1}\left(\left(q_{1}\delta^{(1)}+q_{2}\delta^{(2)}\right),I_{T-1}\right) & \geq W_{T}^{A}-W_{T}^{E}\ge0
\end{align*}
as desired.

\end{proof}

\bigskip{}

\begin{proof}[Proof of Proposition~\ref{Prop:More_Backloaded}]

The proof of Lemma \ref{lem:Objective_Naive} follows from an analysis
of Karush-Kuhn-Tucker (KKT) conditions. First consider the programme
determining the equilibrium utilities $\left(v_{t}^{E,T}\left(L^{t}\right)\right)_{t=1}^{T}$.
The constraints are the zero lower bounds on utility and the firm's
break-even condition. That KKT conditions are valid for the optimum
$\left(v_{t}^{E,T}\left(L^{t}\right)\right)_{t=1}^{T}$ follows from
a standard constraint qualification. For instance, noting that $I_{T}>0$
and hence $\left(v_{t}^{E,T}\left(L^{t}\right)\right)_{t=1}^{T}$
is not degenerately zero, the gradients of the active constraints
evaluated at $\left(v_{t}^{E,T}\left(L^{t}\right)\right)_{t=1}^{T}$
are linearly independent.

The Lagrangean is
\begin{align*}
{\cal L}^{E,T}= & \sum_{t=1}^{T-1}\left(q_{1}\delta^{(1)}+q_{2}\delta^{(2)}\right)^{t-1}v_{t}\left(L^{t}\right)+\left(q_{1}\delta^{(1)}+q_{2}\delta^{(2)}\right)^{T-2}\delta^{(1)}v_{T}\left(L^{T}\right)\\
 & +\lambda\left(I_{T}-\sum_{t=1}^{T}\frac{\phi\left(v_{t}\left(L^{t}\right)\right)}{R^{t-1}}\right)+\sum_{t=1}^{T}\mu_{t}v_{t}\left(L^{t}\right).
\end{align*}
Here, $\left(v_{t}\left(L^{t}\right)\right)_{t=1}^{T}$ are the utilities
to be chosen, $\lambda$ represents the multiplier on the firm's break-even
condition, and $\left(\mu_{t}\right)_{t=1}^{T}$ are the multipliers
on the non-negative utility constraints. Necessary conditions for
an optimum are: (i) for $t\leq T-1$,
\[
\left(q_{1}\delta^{(1)}+q_{2}\delta^{(2)}\right)^{t-1}+\mu_{t}=\frac{\lambda}{R^{t-1}}\phi'\left(v_{t}\left(L^{t}\right)\right),
\]
(ii)
\[
\left(q_{1}\delta^{(1)}+q_{2}\delta^{(2)}\right)^{T-2}\delta^{(1)}+\mu_{T}=\frac{\lambda}{R^{T-1}}\phi'\left(v_{T}\left(L^{T}\right)\right),
\]
(iii)
\[
\lambda\left(I_{T}-\sum_{t=1}^{T}\frac{\phi\left(v_{t}\left(L^{t}\right)\right)}{R^{t-1}}\right)=0,
\]
and (iv) for $t\in\left\{ 1,2,\dots,T\right\} $,
\[
\mu_{t}v_{t}\left(L^{t}\right)=0.
\]
In addition, the constraints must be satisfied ($I_{T}-\sum_{t=1}^{T}\frac{\phi\left(v_{t}\left(L^{t}\right)\right)}{R^{t-1}}\geq0$
and $v_{t}\left(L^{t}\right)\geq u\left(0\right)$ for all $t$) and
multipliers must be non-negative. We let $\left(\left(v_{t}^{E,T}\left(L^{t}\right)\right)_{t=1}^{T},\lambda^{E,T},\left(\mu_{t}^{E,T}\right)_{t=1}^{T}\right)$
denote a KKT solution and note that, by convexity of $\phi$, $\left(v_{t}^{E,T}\left(L^{t}\right)\right)_{t=1}^{T}$
maximises the Lagrangean given the multipliers $\left(\lambda^{E,T},\left(\mu_{t}^{E,T}\right)_{t=1}^{T}\right)$.
Note that (i) and (ii) above imply $\lambda^{E,T}>0$.

Now, let $\rho$ be the inverse of $\phi^{\prime}$. We can write,
for $t\leq T-2$,
\begin{equation}
v_{t}^{E,T}\left(L^{t}\right)=\begin{cases}
\rho\left(\frac{\left(q_{1}\delta^{(1)}+q_{2}\delta^{(2)}\right)^{t-1}R^{t-1}}{\lambda^{E,T}}\right) & \text{if \ensuremath{\frac{\left(q_{1}\delta^{(1)}+q_{2}\delta^{(2)}\right)^{t-1}R^{t-1}}{\lambda^{E,T}}}\ensuremath{\ensuremath{\geq\phi_{+}^{\prime}\left(0\right)}}}\\
0 & \text{otherwise.}
\end{cases}\label{eq:E_Optimality}
\end{equation}
Also, 
\begin{equation}
v_{T}^{E,T}\left(L^{T}\right)=\begin{cases}
\rho\left(\frac{\left(q_{1}\delta^{(1)}+q_{2}\delta^{(2)}\right)^{T-2}\delta^{(1)}R^{T-1}}{\lambda^{E,T}}\right) & \text{if \ensuremath{\frac{\left(q_{1}\delta^{(1)}+q_{2}\delta^{(2)}\right)^{T-2}\delta^{(1)}R^{T-1}}{\lambda^{E,T}}}\ensuremath{\ensuremath{\geq\phi_{+}^{\prime}\left(0\right)}}}\\
0 & \text{otherwise.}
\end{cases}\label{eq:E_Optimality_last}
\end{equation}

Now consider the efficient policy, which recall solves the problem
of maximising the agent's utility subject to the firm breaking even
when the agent's discount factor is $\delta^{(1)}$. The problem and
analysis is the same as above if we set $q_{1}=1$ and $q_{2}=0$.
A corresponding KKT solution is denoted $\left(\left(v_{t}^{B,T}\left(L^{t}\right)\right)_{t=1}^{T},\lambda^{B,T},\left(\mu_{t}^{B,T}\right)_{t=1}^{T}\right)$,
where $\lambda^{B,T}>0$. The efficient policy is thus given, for
$1\leq t\leq T$,
\begin{equation}
v_{t}^{B,T}\left(L^{t}\right)=\begin{cases}
\rho\left(\frac{\left(\delta^{(1)}\right)^{t-1}R^{t-1}}{\lambda^{B,T}}\right) & \text{if \ensuremath{\frac{\left(\delta^{(1)}\right)^{t-1}R^{t-1}}{\lambda^{B,T}}}\ensuremath{\ensuremath{\geq\phi}'\ensuremath{\left(0\right)}}}\\
0 & \text{otherwise}.
\end{cases}\label{eq:B_Optimality}
\end{equation}

Now let us prove the claims in the proposition. Because the firm's
break-even condition is the same across the two cases, and because
$\rho$ is strictly increasing, we must have $\lambda^{B,T}\leq\lambda^{E,T}$
and thus $v_{1}^{B,T}\left(\delta^{(1)}\right)\geq v_{1}^{E,T}\left(\delta^{(1)}\right)$.

Consider next $t\leq T-2$, and suppose $v_{t}^{E,T}\left(L^{t}\right)\geq v_{t}^{B,T}\left(L^{t}\right)$
and $v_{t}^{E,T}\left(L^{t}\right)>0$. Then
\[
\frac{\left(q_{1}\delta^{(1)}+q_{2}\delta^{(2)}\right)^{t-1}R^{t-1}}{\lambda^{E,T}}\geq\frac{\left(\delta^{(1)}\right)^{t-1}R^{t-1}}{\lambda^{B,T}}
\]
 for $t+1\leq s\leq T-1$
\[
\frac{\left(q_{1}\delta^{(1)}+q_{2}\delta^{(2)}\right)^{s-1}R^{s-1}}{\lambda^{E,T}}>\frac{\left(\delta^{(1)}\right)^{s-1}R^{s-1}}{\lambda^{B,T}}
\]
and 
\[
\frac{\left(q_{1}\delta^{(1)}+q_{2}\delta^{(2)}\right)^{T-2}\delta^{(1)}R^{T-1}}{\lambda^{E,T}}>\frac{\left(\delta^{(1)}\right)^{T-1}R^{T-1}}{\lambda^{B,T}}.
\]
This shows that, for all $s>t$, $v_{s}^{E,T}\left(L^{s}\right)\geq v_{s}^{B,T}\left(L^{s}\right)$;
and $v_{s}^{E,T}\left(L^{s}\right)>v_{s}^{B,T}\left(L^{s}\right)$
if $v_{s}^{E,T}\left(L^{s}\right)>0$. Considering $t=T-1$, $v_{T-1}^{E,T}\left(L^{T-1}\right)\geq v_{T-1}^{B,T}\left(L^{T-1}\right)$
and $v_{T-1}^{E,T}\left(L^{T-1}\right)>u\left(0\right)$ implies
\[
\frac{\left(q_{1}\delta^{(1)}+q_{2}\delta^{(2)}\right)^{T-2}R^{T-2}}{\lambda^{E,T}}\geq\frac{\left(\delta^{(1)}\right)^{T-2}R^{T-2}}{\lambda^{B,T}}
\]
and hence 
\[
\frac{\left(q_{1}\delta^{(1)}+q_{2}\delta^{(2)}\right)^{T-2}\delta^{(1)}R^{T-1}}{\lambda^{E,T}}\geq\frac{\left(\delta^{(1)}\right)^{T-1}R^{T-1}}{\lambda^{B,T}}.
\]
Therefore, $v_{T}^{E,T}\left(L^{T}\right)\geq v_{T}^{B,T}\left(L^{T}\right)$.

\end{proof}

\bigskip{}

\begin{proof}[Proof of Corollary~\ref{cor:Backloading}]

Note that $u_{+}^{\prime}\left(0\right)$ is a standard Inada condition
guaranteeing interiority of solutions by a simple perturbation argument
(any policy that specifies zero utility at some date can be strictly
improved by slightly increasing utility at that date and decreasing
it at some other date). If $v_{1}^{E,T}\left(L\right)\geq v_{1}^{B,T}\left(L\right)$,
then recalling $T\geq3$ and the Equations (\ref{eq:E_Optimality})-(\ref{eq:B_Optimality}),
we must have $v_{t}^{E,T}\left(L^{t}\right)>v_{t}^{B,T}\left(L^{t}\right)$
for $t>1$. Since $\left(v_{t}^{B,T}\left(L^{t}\right)\right)_{t=1}^{T}$
generates zero firm profits, $\left(v_{t}^{E,T}\left(L^{t}\right)\right)_{t=1}^{T}$
must generate negative firm profits, a contradiction. Therefore, $v_{1}^{E,T}\left(L\right)<v_{1}^{B,T}\left(L\right)$.
Proposition \ref{Prop:More_Backloaded} implies that if $v_{t^{\prime}}^{E,T}\left(L^{t^{\prime}}\right)\geq v_{t^{\prime}}^{B,T}\left(L^{t^{\prime}}\right)$
for $t^{\prime}\leq T-2$, then $v_{s}^{E,T}\left(L^{s}\right)>v_{s}^{B,T}\left(L^{s}\right)$
for all $s>t^{\prime}$, and in this case we can take $t^{*}=t^{\prime}$
in the corollary. From Equations (\ref{eq:E_Optimality})-(\ref{eq:B_Optimality})
and the fact that the firm earns zero profits under both $\left(v_{t}^{B,T}\left(L^{t}\right)\right)_{t=1}^{T}$
and $\left(v_{t}^{E,T}\left(L^{t}\right)\right)_{t=1}^{T}$, we can
see that the only other possibility is that $v_{T-1}^{E,T}\left(L^{T-1}\right)>v_{T-1}^{B,T}\left(L^{T-1}\right)$
and $v_{T}^{E,T}\left(L^{T}\right)>v_{T}^{B,T}\left(L^{T-1}\right)$
and in this case $t^{*}=T-1$.

\end{proof}\bigskip{}

\begin{proof}[Proof of Proposition~\ref{prop:LONG_RUN_INEFFICIENT}]

We begin with a preliminary observation: \emph{If $\delta^{(1)}R=1$,
then, for any $T$, $v_{t}^{B,T}\left(L^{t}\right)=w$ for all $1\leq t\leq T$.
If $\delta^{(1)}R<1$, then $v_{t}^{B,T}\left(L^{t}\right)$ converges
pointwise as $T\rightarrow\infty$ to a weakly decreasing sequence
$\left(v_{t}^{B,\infty}\left(L^{t}\right)\right)_{t=1}^{\infty}$.}

The observation for $\delta^{(1)}R=1$ is because, for each $T$,
efficient consumption is constant over time with the firm earning
zero profit. Consider then the case where $\delta^{(1)}R<1$. There
exists a $\underline{\lambda}^{B}>0$ such that $\lambda_{T}^{B}\geq\underline{\lambda}^{B}$
for every $T$-period efficient problem. In particular, we have 
\[
\phi'\left(u\left(I_{\infty}\right)\right)\geq\phi'\left(v_{1}^{B}\left(L\right)\right)=\frac{1}{\lambda^{B,T}}
\]
implies
\[
\lambda^{B,T}\geq\frac{1}{\phi'\left(u\left(I_{\infty}\right)\right)},
\]
as period 1 consumption does not exceed $I_{\infty}$.

This implies that utility in period $t$, $v_{t}^{B,T}\left(L^{t}\right)$,
for any horizon $T$, is bounded above by 
\[
\begin{cases}
\rho\left(\frac{\left(\delta^{(1)}\right)^{t-1}R^{t-1}}{\underline{\lambda}^{B}}\right) & \text{if \ensuremath{\frac{\left(\delta^{(1)}\right)^{t-1}R^{t-1}}{\underline{\lambda}^{B}}}\ensuremath{\ensuremath{\geq\phi_{+}^{\prime}\left(0\right)}}}\\
0 & \text{otherwise}.
\end{cases}
\]
There is then $\bar{T}$ such that this bound is $0$ for all $t\geq\bar{T}$;
i.e., there is no consumption after $\bar{T}$. It is easy to see
that, at each horizon length $T$, the budget constraint is satisfied
with equality, hence $\lambda^{B,T}$ is in fact uniquely determined
and must eventually be decreasing with $T$. Hence, for each $t$,
$v_{t}^{B,T}\left(L^{t}\right)$ is non-decreasing in $T$ for all
$T$ sufficiently large.

Moreover, noting that $v_{1}^{B,T}\left(L\right)>0$ for all $T$,
we have $\lambda^{B,T}=\frac{1}{\phi'\left(v_{1}^{B,T}\left(L\right)\right)}$.
As $T\rightarrow\infty$, we must have $\lambda^{B,T}\rightarrow\lambda^{B,\infty}$
for some $\lambda^{B,\infty}>0$, and utility converges pointwise
to 
\begin{equation}
v_{t}^{B,\infty}\left(L^{t}\right)=\begin{cases}
\rho\left(\frac{\left(\delta^{(1)}\right)^{t-1}R^{t-1}}{\lambda^{B,\infty}}\right) & \text{if \ensuremath{\frac{\left(\delta^{(1)}\right)^{t-1}R^{t-1}}{\lambda^{B,\infty}}}\ensuremath{\ensuremath{\geq\phi'_{+}\left(0\right)}}}\\
0 & \text{otherwise}.
\end{cases}\label{eq:LIMITING_SEQUENCE}
\end{equation}
In order that the firm earns zero profit for $\left(v_{t}^{B,T}\left(L^{t}\right)\right)_{t=1}^{T}$
for large $T$, and because $I_{T}\rightarrow I_{\infty}$ as $T\rightarrow\infty$,
we must have zero profit also in the infinite-horizon problem: i.e.,
$\sum_{t=1}^{\infty}\frac{\phi\left(v_{t}^{B,\infty}\left(L^{t}\right)\right)}{R^{t-1}}=I_{\infty}$.
Therefore, $v_{1}^{B,\infty}\left(L\right)>w$.

\medskip{}

We now prove the result in three cases, according to the value of
$\left(q_{1}\delta^{(1)}+q_{2}\delta^{(2)}\right)$.\medskip{}

\textbf{Case 1: $\left(q_{1}\delta^{(1)}+q_{2}\delta^{(2)}\right)R<1$.
}In this case, by the same argument as argued in the observation above,
the utility $v_{t}^{E,T}\left(L^{t}\right)$ converges to 
\begin{equation}
v_{t}^{E,\infty}\left(L^{t}\right)=\begin{cases}
\rho\left(\frac{\left(q_{1}\delta^{(1)}+q_{2}\delta^{(2)}\right)^{t-1}R^{t-1}}{\lambda^{E,\infty}}\right) & \text{if \ensuremath{\frac{\left(\delta^{(1)}\right)^{t-1}R^{t-1}}{\lambda^{E,\infty}}}\ensuremath{\ensuremath{\geq\phi{}_{+}^{\prime}\left(0\right)}}}\\
0 & \text{otherwise},
\end{cases}\label{eq:LIMITING_SEQUENCE-E}
\end{equation}
as $T\rightarrow\infty$, where $\lambda^{E,\infty}=\lim_{T\rightarrow\infty}\frac{1}{\phi'\left(v_{1}^{E,T}\left(L\right)\right)}>0$.
As above, utilities $\left(v_{t}^{E,\infty}\left(L^{t}\right)\right)_{t=1}^{\infty}$
generate zero profit for the firm over the infinite horizon: $\sum_{t=1}^{\infty}\frac{\phi\left(v_{t}^{E,\infty}\left(L^{t}\right)\right)}{R^{t-1}}=I_{\infty}$.

We now show it is not the case that $v_{1}^{E,\infty}\left(L\right)=u\left(I_{\infty}\right)$
by supposing that it were. Then $\lambda^{E,\infty}=\frac{1}{\phi'\left(u\left(I_{\infty}\right)\right)}$
and so
\begin{align*}
\phi^{\prime}\left(v_{2}^{E,\infty}\left(L^{2}\right)\right) & =\frac{\left(q_{1}\delta^{(1)}+q_{2}\delta^{(2)}\right)R}{\lambda^{E,\infty}}\\
 & =\left(q_{1}\delta^{(1)}+q_{2}\delta^{(2)}\right)R\phi'\left(u\left(I_{\infty}\right)\right)\\
 & >\phi{}_{+}^{\prime}\left(0\right),
\end{align*}
where the inequality is assumed in the statement of the proposition.
Hence, $v_{2}^{E,\infty}\left(L^{2}\right)>0$ and so the firm earns
strictly negative profit from $\left(v_{t}^{E,\infty}\left(L^{t}\right)\right)_{t=1}^{\infty}$
in the infinite horizon, in contradiction to our previous claim. We
have therefore shown that $v_{1}^{E,\infty}\left(L\right)<u\left(I_{\infty}\right)$.
By Equation (\ref{eq:LIMITING_SEQUENCE-E}) and because the firm earns
zero profit in the infinite horizon problem with utilities $\left(v_{t}^{E,\infty}\left(L^{t}\right)\right)_{t=1}^{\infty}$,
we have $v_{2}^{E,\infty}\left(L^{2}\right)>0$.

That the firm earns zero profits in the infinite horizon under both
$\left(v_{t}^{E,\infty}\left(L^{t}\right)\right)_{t=1}^{\infty}$
and $\left(v_{t}^{B,\infty}\left(L^{t}\right)\right)_{t=1}^{\infty}$,
together with Equations (\ref{eq:LIMITING_SEQUENCE}) and (\ref{eq:LIMITING_SEQUENCE-E}),
then imply that $\lambda^{E,\infty}>\lambda^{B,\infty}$ and in particular,
$v_{1}^{E,\infty}\left(L\right)<v_{1}^{B,\infty}\left(L\right)$.
As a consequence, there exists a $\varepsilon>0$ and $\bar{T}$ sufficiently
large that, for all $T\geq\bar{T}$, 
\begin{align}
v_{1}^{E,T}\left(L\right)-\lambda^{B,T}\phi\left(v_{1}^{E,T}\left(L\right)\right) & \leq v_{1}^{B,T}\left(L\right)-\lambda^{B,T}\phi\left(v_{1}^{B,T}\left(L\right)\right)+\varepsilon\nonumber \\
 & =\max_{v_{1}\left(L\right)}\left\{ v_{1}\left(L\right)-\lambda^{B,T}\phi\left(v_{1}\left(L\right)\right)\right\} +\varepsilon\label{eq:FIRST_PERIOD_INEQUALITY}
\end{align}
where we use strict convexity of $\phi$ and the fact that $v_{1}^{B,T}\left(L\right)$
maximises $v_{1}\left(L\right)-\lambda^{B,T}\phi\left(v_{1}\left(L\right)\right)$,
while $v_{1}^{E,T}\left(L\right)$ remains bounded away from $v_{1}^{B,T}\left(L\right)$
for all large enough $T$.

Therefore, for all $T\geq\bar{T}$,
\begin{align*}
\sum_{t=1}^{T}\left(\delta^{(1)}\right)^{t-1}v_{t}^{B,T}\left(L^{t}\right) & =\max_{\left(v_{t}\left(L^{t}\right)\right)_{t=1}^{T}}\left\{ \sum_{t=1}^{T}\left(\delta^{(1)}\right)^{t-1}v_{t}\left(L^{t}\right)+\lambda^{B,T}\left(I_{T}-\sum_{t=1}^{T}\frac{\phi\left(v_{t}\left(L^{t}\right)\right)}{R^{t-1}}\right)\right\} \\
 & =\max_{\left(v_{t}\left(L^{t}\right)\right)_{t=1}^{T}}\left\{ \sum_{t=1}^{T}\left(\left(\delta^{(1)}\right)^{t-1}v_{t}\left(L^{t}\right)-\lambda^{B,T}\frac{\phi\left(v_{t}\left(L^{t}\right)\right)}{R^{t-1}}\right)+\lambda^{B,T}I_{T}\right\} \\
 & \geq\sum_{t=1}^{T}\left(\left(\delta^{(1)}\right)^{t-1}v_{t}^{E,T}\left(L^{t}\right)-\lambda^{B,T}\frac{\phi\left(v_{t}^{E,T}\left(L^{t}\right)\right)}{R^{t-1}}\right)+\lambda^{B,T}I_{T}+\varepsilon\\
 & =\sum_{t=1}^{T}\left(\delta^{(1)}\right)^{t-1}v_{t}^{E,T}\left(L^{t}\right)+\lambda^{B,T}\left(I_{T}-\sum_{t=1}^{T}\frac{\phi\left(v_{t}^{E,T}\left(L^{t}\right)\right)}{R^{t-1}}\right)+\varepsilon\\
 & =\sum_{t=1}^{T}\left(\delta^{(1)}\right)^{t-1}v_{t}^{E,T}\left(L^{t}\right)+\varepsilon,
\end{align*}
where the maximisation is over $v_{t}\left(L^{t}\right)\geq0$. The
first equality holds because $\left(v_{t}^{B,T}\left(L^{t}\right)\right)_{t=1}^{T}$
maximises the Lagrangian expression above, with multiplier $\lambda^{B,T}$,
with the zero-profit condition satisfied with equality. The inequality
follows by the inequality (\ref{eq:FIRST_PERIOD_INEQUALITY}). The
fourth equality holds because $\left(v_{t}^{E,T}\left(L^{t}\right)\right)_{t=1}^{T}$
satisfies the zero-profit condition with equality. This shows precisely
the claim that $V_{T}^{B}-V_{T}^{E}\geq\varepsilon$ for all $T\geq\bar{T}$. 

\textbf{Case 2: $\left(q_{1}\delta^{(1)}+q_{2}\delta^{(2)}\right)R=1$.
}In this case, for each $T$, we have 
\[
\phi^{\prime}\left(v_{t}^{E,T}\left(L^{t}\right)\right)=\frac{\left(q_{1}\delta^{(1)}+q_{2}\delta^{(2)}\right)^{t-1}R^{t-1}}{\lambda^{B,T}}=\frac{1}{\lambda^{B,T}}
\]
for $t\leq T-1$, while
\begin{align*}
\phi^{\prime}\left(v_{T}^{E,T}\left(L^{T}\right)\right) & =\max\left\{ \frac{\left(q_{1}\delta^{(1)}+q_{2}\delta^{(2)}\right)^{T-2}\delta^{(1)}R^{T-1}}{\lambda^{B,T}},\phi_{+}^{\prime}\left(0\right)\right\} \\
 & =\max\left\{ \frac{\delta^{(1)}}{q_{1}\delta^{(1)}+q_{2}\delta^{(2)}}\frac{1}{\lambda^{B,T}},\phi^{\prime}\left(u\left(0\right)\right)\right\} .
\end{align*}
Then, because $\left(v_{t}^{E,T}\left(L^{t}\right)\right)_{t=1}^{T}$
satisfies the zero-profit condition with equality, we must have $\phi\left(v_{t}^{E,T}\left(L^{t}\right)\right)>w$
for all $t\leq T-1$. Let $\check{u}^{T}\equiv v_{t}^{E,T}\left(L^{t}\right)$
for $t\leq T-1$. Then
\[
\sum_{t=1}^{T-1}\frac{\phi\left(\check{u}^{T}\right)}{R^{t-1}}\leq\sum_{t=1}^{T}\frac{w}{R^{t-1}}.
\]
We conclude that $\phi\left(\check{u}^{T}\right)\rightarrow w$ as
$T\rightarrow\infty$.

However, as noted above, given $\delta^{(1)}R<1$, we have $\phi\left(v_{1}^{B,\infty}\left(L\right)\right)>w$.
This shows that there is a $\varepsilon>0$ and large enough $\bar{T}$
such that, for $T\geq\bar{T}$, the inequality (\ref{eq:FIRST_PERIOD_INEQUALITY})
is satisfied. The rest of the argument is then as in Case 1.

\textbf{Case 3. $\left(q_{1}\delta^{(1)}+q_{2}\delta^{(2)}\right)R>1$.
}In this case, there is $\eta>0$ and $\bar{T}$ sufficiently large
that, for all $T\geq\bar{T}$, $v_{1}^{E,T}\left(L\right)<w-\eta$.
Otherwise, there is a subsequence $\left(T_{m}\right)$ along which
$v_{1}^{E,T_{m}}\left(L\right)\geq w-\frac{1}{m}$. For $m$ sufficiently
large (for instance, using the expressions for $v_{t}^{E,T}\left(L^{t}\right)$
in Equations (\ref{eq:E_Optimality}) and (\ref{eq:E_Optimality_last})),
the policy $\left(v_{t}^{E,T_{m}}\left(L^{t}\right)\right)_{t=1}^{T_{m}}$
violates the zero-profit constraint. The remainder of the argument
is then as in Case 1 and Case 2. In particular, using $\phi\left(v_{1}^{B,\infty}\left(L\right)\right)\geq w$,
we have that there is a $\varepsilon>0$ and large enough $\bar{T}$
such that, for $T\geq\bar{T}$, the inequality (\ref{eq:FIRST_PERIOD_INEQUALITY})
is satisfied. Then the rest of the argument follows Case 1.

\end{proof}

\subsection*{Proofs of results in Section \ref{sec:RICHER}}

\begin{proof}[Proof of Proposition~\ref{prop:EFFICIENT_POLICY}]

In utility space, the problem can be written as maximising
\[
\mathbb{E}_{F}\left[\sum_{t=1}^{T}\Pi_{s=1}^{t-1}\tilde{\delta}_{s}v_{t}\left(\tilde{H}^{t}\right)\right]
\]
subject to
\begin{equation}
\mathbb{E}_{F}\left[\sum_{t=1}^{T}\frac{\phi\left(v_{t}\left(\tilde{H}^{t}\right)\right)}{R^{t-1}}\right]\leq I_{T}.\label{eq:ZERO_PROFIT_GENERAL}
\end{equation}
 First consider existence. Utilities are bounded above owing to the
firm's break-even condition. If $u\left(0\right)$ is finite, the
problem amounts to maximising a continuous function over the compact
set of values $v_{t}\left(H^{t}\right)\geq u\left(0\right)$ (where
$t\in\left\{ 1,2,\dots,T\right\} $ and $H^{t}\in\Delta^{t}$) satisfying
the zero-profit condition (\ref{eq:ZERO_PROFIT_GENERAL}). In the
alternative case where $\lim_{c\rightarrow0}u\left(c\right)=-\infty$,
we may without loss of optimality consider the $v_{t}\left(H^{t}\right)$
to be bounded below by some suitably small lower bound, so again we
are maximising over a compact set.

Uniqueness is similarly straightforward using strict convexity of
$\phi$. Were there two optimal policies, a convex combination would
yield the same agent expected utility but a strictly positive firm
profit. This policy could be strictly improved by increasing the payment
to the agent at any date and history, contradicting the optimality
of the original policies.

Assuming interiority, the optimum --- call it $\left(v_{t}^{B,T}\left(H^{t}\right)\right)_{t=1}^{T}$
--- maximises the Lagrangian 
\begin{align*}
 & \sum_{t=1}^{T}\sum_{n_{1}=1}^{N}\dots\sum_{n_{t}=1}^{N}\left(\Pi_{s^{\prime}=1}^{t}p_{n_{s^{\prime}}}\right)\left(\Pi_{s^{\prime\prime}=1}^{t-1}\delta^{\left(n_{s^{\prime\prime}}\right)}\right)v_{t}\left(\delta^{\left(n_{1}\right)},\dots,\delta^{\left(n_{t}\right)}\right)\\
+ & \lambda^{B,T}\left(I_{T}-\sum_{t=1}^{T}\sum_{n_{1}=1}^{N}\dots\sum_{n_{t}=1}^{N}\left(\Pi_{s=1}^{t}p_{n_{s}}\right)\frac{\phi\left(v_{t}\left(\delta^{\left(n_{1}\right)},\dots,\delta^{\left(n_{t}\right)}\right)\right)}{R^{t-1}}\right)
\end{align*}
for some multiplier $\lambda^{B,T}>0$. The stationarity condition
is then, for each $t$, each $\left(\delta^{\left(n_{1}\right)},\dots,\delta^{\left(n_{t}\right)}\right)$,
\[
\phi^{\prime}\left(v_{t}^{B,T}\left(\delta^{\left(n_{1}\right)},\dots,\delta^{\left(n_{t}\right)}\right)\right)=\frac{R^{t-1}\left(\Pi_{s=1}^{t-1}\delta^{\left(n_{s}\right)}\right)}{\lambda^{B,T}}.
\]
Because $\phi^{\prime}$ is increasing, it follows immediately from
this expression that $v_{t}^{B,T}\left(\delta^{\left(n_{1}\right)},\dots,\delta^{\left(n_{t}\right)}\right)$
is strictly increasing in each index $n_{s}$, $1\leq s\leq t$ .

\end{proof}

\bigskip{}

\begin{proof}[Proof of Proposition~\ref{prop:EQUILIBRIUM_GENERAL}]

Given $\delta_{1}$, Problem III can be written as a choice of utilities
$\left(v_{t}\left(H^{t}\right)\right)_{t=1}^{T}$ maximising 
\begin{equation}
\mathbb{E}_{A}\left[v_{1}\left(\delta_{1}\right)+\delta_{1}\sum_{t=2}^{T}\Pi_{s=2}^{t-1}\tilde{\delta}_{s}v_{t}\left(\delta_{1},\tilde{H}_{2}^{t}\right)\right]\label{eq:OBJECTIVE-REWRITTEN}
\end{equation}
subject to the firm break-even condition
\[
\mathbb{E}_{F}\left[\sum_{t=1}^{T}\frac{\phi\left(v_{t}\left(\delta_{1},\tilde{H}_{2}^{t}\right)\right)}{R^{t-1}}\right]\leq I_{T}
\]
and to incentive constraints for the truthful reporting of $\delta_{t}$
at all $t\geq2$ and all histories $\left(\delta_{1},H_{2}^{t-1}\right)$.
(In the main text, we subsequently refer to the problem in utility
space as ``Problem IV'', but in the proof below we refer to this still
as Problem III.) In particular, for each such history, each $\delta_{t}$
and each $\hat{\delta}_{t}$, we must have 
\begin{align*}
 & v_{t}\left(\delta_{1},H_{2}^{t-1},\delta_{t}\right)+\delta_{t}\mathbb{E}_{A}\left[\sum_{\tau=t+1}^{T}\Pi_{s=t+1}^{\tau-1}\tilde{\delta}_{s}v_{\tau}\left(\delta_{1},H_{2}^{t-1},\delta_{t},\tilde{H}_{t+1}^{\tau}\right)\right]\\
\geq & v_{t}\left(\delta_{1},H_{2}^{t-1},\hat{\delta}_{t}\right)+\delta_{t}\mathbb{E}_{A}\left[\sum_{\tau=t+1}^{T}\Pi_{s=t+1}^{\tau-1}\tilde{\delta}_{s}v_{\tau}\left(\delta_{1},H_{2}^{t-1},\hat{\delta}_{t},\tilde{H}_{t+1}^{\tau}\right)\right].
\end{align*}
Either $u\left(0\right)$ is finite or $\lim_{c\rightarrow0}u\left(c\right)=-\infty$.
In the latter case, it is easy to see that the optimal utilies must
be bounded below. We are therefore optimising a continuous function
over a compact set, and a maximum exists. Moreover, the optimum is
unique: were there two optima, the convex combination would satisfy
the incentive constraints and satisfy the break-even condition strictly.
It would then be possible to slightly raise $v_{1}\left(\delta\right)$
yielding a strictly higher agent expected payoff -- this contradicts
the optimality of the original policies.

Now consider why any equilibrium dynamic mechanism accepted by the
agent with initial type $\delta_{1}$ must solve Problem III. Throughout
the argument, we can associate with each date-1 message a uniquely
determined continuation mechanism (a mapping from messages to consumption),
and assume an agent message strategy that maximises firm profits among
those optimal for the agent. Hence, because the firm anticipates i.i.d.
types, there is associated with the date-1 message a unique value
of discounted expected profits (independent of the agent's true initial
type).

The central part of the argument is that no firm earns positive discounted
expected profit in equilibrium. Suppose not. It is enough to show
that any firm can deviate to a mechanism that generates arbitrarily
close to the sum of expected profits across all firms. For each type
$\delta_{1}=\delta^{\left(n\right)}$ that accepts with positive probability
a mechanism of a Firm $j\left(n\right)$ and sends a message $\hat{\delta}_{1}=\delta^{\left(n\right)}$
in the putative equilibrium, outcomes will be determined by utilities
$\left(v_{t}^{j\left(n\right)}\left(\delta^{\left(n\right)},H_{2}^{t}\right)\right)_{t=1}^{T}$.
In case of agent randomisation, consider only the firm $j\left(n\right)$
for which firm expected discounted profits are highest (and take the
smallest firm index if multiple). Then consider the mechanism defined
by
\[
v_{t}\left(\delta^{\left(n\right)},H_{2}^{t}\right)=v_{t}^{j\left(n\right)}\left(\delta^{\left(n\right)},H_{2}^{t}\right)
\]
for the initial types $\delta_{1}=\delta^{\left(n\right)}$ that accept
some mechanism with positive probability in the putative equilibrium
(thus the date-1 reports in the new mechanism are a subset of $\Delta$).
This mechanism is incentive compatible in the following restricted
sense. Any agent type that accepted a mechanism with positive probability
in the putative equilibrium is willing to report truthfully if accepting
the new mechanism. Now make two changes: (i) omit any report $\delta_{1}$
that fails to generate strictly positive expected profit, and (ii)
increase the initial utility $v_{1}\left(\delta_{1}\right)$ by an
arbitrarily small constant amount, uniform over the initial messages
$\delta_{1}$. Now, for all values of the initial type $\delta_{1}$
that accept a mechanism with positive probability in the putative
equilibrium but not omitted in (i), the agent strictly prefers to
accept this mechanism than any of the others offered, and is willing
to report truthfully. This shows that the profit from each such type
is at least arbitrarily close to that generated from this type by
all firms in the putative equilibrium.\footnote{While the agent is willing to report truthfully, according to the
prescribed strategy, he may not do so if there is some other message
that generates strictly higher profits for the firm. However, this
only raises the profitability of the constructed deviation.} Other agent types may accept the mechanism and choose some (untruthful)
report or not accept the mechanism. However, irrespective of what
the agent's strategy specifies for the other types, the new mechanism
obtains arbitrarily close to the expected total profit among all firms.
This is a profitable deviation at least for a firm earning the lowest
expected profit.

We now note that if a type $\delta_{1}=\delta^{\left(n\right)}$ accepts
a firm's mechanism with positive probability in equilibrium, then
the consumption following truthful reporting (and given the future
i.i.d. evolution of types as anticipated by each firm) must generate
exactly zero expected profit. Suppose instead that a strict loss is
generated. Then the firm must be generating positive expected profits
from the acceptance of the mechanism by some other type $\delta_{1}\neq\delta^{\left(n\right)}$.
There is then a profitable deviation. In particular, similar to above,
the firm can omit any initial reports $\delta_{1}$ not generating
strictly positive expected profit, while increasing the date-1 utility
for the remaining reports by an arbitrarily small and constant amount.
Any agent type that was generating positive expected profits for the
firm continues to generate at least arbitrarily close to this amount,
while any negative profits are eliminated.

We now argue that any initial type $\delta_{1}$ has equilibrium outcomes
given by the solution to Problem III. Note that each type must be
obtaining an (agent)-anticipated discounted payoff equal to the value
in Problem III. Obtaining more would imply negative profit for the
report $\delta_{1}$ (which we saw in the previous paragraph was not
possible in equilibrium). If a given agent type obtains less, then
a firm can earn strictly positive profit by offering a mechanism that
is the solution to Problem III for this agent type (requiring no date-1
message), for a value of income $I_{T}$ that is below but arbitrarily
close to the true value. Indeed, by the Theorem of the Maximum (using
that the constraint set can be considered compact-valued, as argued
above, and because it is continuous in $I_{T}$), the value of Problem
III is continuous in $I_{T}$. Given that we argued equilibrium must
yield firms zero profits, this is a profitable deviation. We can conclude
that each initial type $\delta_{1}$ anticipates discounted payoffs
equal to the value of Problem III, while generating any firm whose
mechanism the agent accepts zero profits. By uniqueness of the solution
to Problem III, the equilibrium outcomes of type $\delta_{1}$ are
then unique.

Finally, it is an equilibrium for each firm to symmetrically offer
the mechanism $\left(c_{t}\left(\delta_{1},H_{2}^{t}\right)\right)_{t=1}^{T}$,
with $\left(\delta_{1},H_{2}^{t}\right)\in\Delta^{t}$ for each $t\leq T$
and $\left(\delta_{1},H_{2}^{T}\right)\in\Delta^{T-1}$, with values
uniquely solving Problem III, and for the agent to randomise symmetrically
over the equilibrium mechanisms and to report truthfully. Agent play
following any off-path selection of mechanisms satisfies the equilibrium
requirements: the agent follows a deterministic message strategy that
maximises the firm's discounted profits among those optimal for the
agent (since there are finitely many messages in each period, such
a strategy exists). When the mechanism satisfies our notion of a direct
mechanism (see the model set-up), the agent reports truthfully. Then
firms have no incentive to deviate: if a firm offers a mechanism where
discounted profits associated with a given date-1 message are strictly
positive, the agent does not send this message; indeed, he obtains
a lower expected payoff than the value of Problem III associated with
his true initial type $\delta_{1}$, while this value in Problem III
is available from other firm(s). Agent incentive compatibility in
the equilibrium mechanism from date-2 onwards is implied by the incentive
constraints (\ref{eq:IC_GEN}), as argued in the main text. Agent
incentive compatibility in the equilibrium mechanism at date 1 is
satisfied because each initial type $\delta_{1}$ receives, by reporting
truthfully, outcomes that solve Problem III, where this problem is
to maximise the agent's discounted payoff given $\delta_{1}$ subject
to constraints that do not depend on $\delta_{1}$.

\end{proof}

\bigskip{}

\begin{proof}[Proof of Lemma~\ref{lem:IC_Section4}]

To see the monotonicity in the lemma, consider any history $\left(\delta_{1},H_{2}^{t-1}\right)$,
and any adjacent discount factors $\delta^{\left(n\right)}$ and $\delta^{\left(n+1\right)}$,
$n\leq N-1$. Analogous to Equations (\ref{eq:HIGH-IC}) and (\ref{eq:LOW-IC}),
we have 
\begin{align}
 & v_{t}\left(\delta_{1},H_{2}^{t-1},\delta^{(n+1)}\right)+\delta^{(n+1)}\mathbb{E}_{A}\left[\sum_{\tau=t+1}^{T}\Pi_{s=t+1}^{\tau-1}\tilde{\delta}_{s}v_{\tau}\left(\delta_{1},H_{2}^{t-1},\delta^{(n+1)},\tilde{H}_{t+1}^{\tau}\right)\right]\nonumber \\
\geq & v_{t}\left(\delta_{1},H_{2}^{t-1},\delta^{(n)}\right)+\delta^{(n+1)}\mathbb{E}_{A}\left[\sum_{\tau=t+1}^{T}\Pi_{s=t+1}^{\tau-1}\tilde{\delta}_{s}v_{\tau}\left(\delta_{1},H_{2}^{t-1},\delta^{(n)},\tilde{H}_{t+1}^{\tau}\right)\right],\label{eq:HIGH-IC-GEN}
\end{align}
and
\begin{align}
 & v_{t}\left(\delta_{1},H_{2}^{t-1},\delta^{(n)}\right)+\delta^{(n)}\mathbb{E}_{A}\left[\sum_{\tau=t+1}^{T}\Pi_{s=t+1}^{\tau-1}\tilde{\delta}_{s}v_{\tau}\left(\delta_{1},H_{2}^{t-1},\delta^{(n)},\tilde{H}_{t+1}^{\tau}\right)\right]\nonumber \\
\geq & v_{t}\left(\delta_{1},H_{2}^{t-1},\delta^{(n+1)}\right)+\delta^{(n)}\mathbb{E}_{A}\left[\sum_{\tau=t+1}^{T}\Pi_{s=t+1}^{\tau-1}\tilde{\delta}_{s}v_{\tau}\left(\delta_{1},H_{2}^{t-1},\delta^{(n+1)},\tilde{H}_{t+1}^{\tau}\right)\right].\label{eq:LOW-IC-GEN}
\end{align}
Then as in Equations (\ref{eq:CURRENT-INEQUALITY}) and (\ref{eq:LATER-INEQUALITY})
in the proof of Lemma \ref{lem:IC_Section3}, we conclude that, for
any $t\in\left\{ 2,\dots,T-1\right\} $, any $\left(\delta_{1},H_{2}^{t-1}\right)$,
\begin{equation}
v_{t}\left(\delta_{1},H_{2}^{t-1},\delta^{(n)}\right)\geq v_{t}\left(\delta_{1},H_{2}^{t-1},\delta^{(n+1)}\right)\label{eq:MON_CURRENT_GEN}
\end{equation}
 and 
\begin{equation}
\mathbb{E}_{A}\left[\sum_{\tau=t+1}^{T}\Pi_{s=t+1}^{\tau-1}\tilde{\delta}_{s}v_{\tau}\left(\delta_{1},H_{2}^{t-1},\delta^{(n+1)},\tilde{H}_{t+1}^{\tau}\right)\right]\geq\mathbb{E}_{A}\left[\sum_{\tau=t+1}^{T}\Pi_{s=t+1}^{\tau-1}\tilde{\delta}_{s}v_{\tau}\left(\delta_{1},H_{2}^{t-1},\delta^{(n)},\tilde{H}_{t+1}^{\tau}\right)\right],\label{eq:MON_LATER_GEN}
\end{equation}
with both inequalities strict if and only if at least one of (\ref{eq:HIGH-IC-GEN})
and (\ref{eq:HIGH-IC-GEN}) hold as a strict inequality.

Now let us show that local incentive constraints imply the remaining
incentive constraints. To see that downward incentive constraints
hold, argue as follows. Consider any $n>2$ and suppose that downward
incentive constraints hold for all lower types. In particular, for
any $n^{\prime}<n-1$, we have
\begin{align*}
 & v_{t}\left(\delta_{1},H_{2}^{t-1},\delta^{(n-1)}\right)+\delta^{(n-1)}\mathbb{E}_{A}\left[\sum_{\tau=t+1}^{T}\Pi_{s=t+1}^{\tau-1}\tilde{\delta}_{s}v_{\tau}\left(\delta_{1},H_{2}^{t-1},\delta^{(n-1)},\tilde{H}_{t+1}^{\tau}\right)\right]\\
\geq & v_{t}\left(\delta_{1},H_{2}^{t-1},\delta^{(n^{\prime})}\right)+\delta^{(n-1)}\mathbb{E}_{A}\left[\sum_{\tau=t+1}^{T}\Pi_{s=t+1}^{\tau-1}\tilde{\delta}_{s}v_{\tau}\left(\delta_{1},H_{2}^{t-1},\delta^{(n^{\prime})},\tilde{H}_{t+1}^{\tau}\right)\right].
\end{align*}
By the monotonicity finding above,
\begin{align*}
 & v_{t}\left(\delta_{1},H_{2}^{t-1},\delta^{(n-1)}\right)+\delta^{(n)}\mathbb{E}_{A}\left[\sum_{\tau=t+1}^{T}\Pi_{s=t+1}^{\tau-1}\tilde{\delta}_{s}v_{\tau}\left(\delta_{1},H_{2}^{t-1},\delta^{(n-1)},\tilde{H}_{t+1}^{\tau}\right)\right]\\
\geq & v_{t}\left(\delta_{1},H_{2}^{t-1},\delta^{(n^{\prime})}\right)+\delta^{(n)}\mathbb{E}_{A}\left[\sum_{\tau=t+1}^{T}\Pi_{s=t+1}^{\tau-1}\tilde{\delta}_{s}v_{\tau}\left(\delta_{1},H_{2}^{t-1},\delta^{(n^{\prime})},\tilde{H}_{t+1}^{\tau}\right)\right].
\end{align*}
By the local incentive constraint,
\begin{align*}
 & v_{t}\left(\delta_{1},H_{2}^{t-1},\delta^{(n)}\right)+\delta^{(n)}\mathbb{E}_{A}\left[\sum_{\tau=t+1}^{T}\Pi_{s=t+1}^{\tau-1}\tilde{\delta}_{s}v_{\tau}\left(\delta_{1},H_{2}^{t-1},\delta^{(n)},\tilde{H}_{t+1}^{\tau}\right)\right]\\
\geq & v_{t}\left(\delta_{1},H_{2}^{t-1},\delta^{(n-1)}\right)+\delta^{(n)}\mathbb{E}_{A}\left[\sum_{\tau=t+1}^{T}\Pi_{s=t+1}^{\tau-1}\tilde{\delta}_{s}v_{\tau}\left(\delta_{1},H_{2}^{t-1},\delta^{(n-1)},\tilde{H}_{t+1}^{\tau}\right)\right].
\end{align*}
Therefore, indeed 
\begin{align*}
 & v_{t}\left(\delta_{1},H_{2}^{t-1},\delta^{(n)}\right)+\delta^{(n)}\mathbb{E}_{A}\left[\sum_{\tau=t+1}^{T}\Pi_{s=t+1}^{\tau-1}\tilde{\delta}_{s}v_{\tau}\left(\delta_{1},H_{2}^{t-1},\delta^{(n)},\tilde{H}_{t+1}^{\tau}\right)\right]\\
\geq & v_{t}\left(\delta_{1},H_{2}^{t-1},\delta^{(n^{\prime})}\right)+\delta^{(n)}\mathbb{E}_{A}\left[\sum_{\tau=t+1}^{T}\Pi_{s=t+1}^{\tau-1}\tilde{\delta}_{s}v_{\tau}\left(\delta_{1},H_{2}^{t-1},\delta^{(n^{\prime})},\tilde{H}_{t+1}^{\tau}\right)\right].
\end{align*}
We conclude that downward incentive constraints hold for all types
$\delta^{\left(n\right)}$ and lower, and hence by induction all downward
incentive constraints hold.

To see that upward incentive constraints hold, argue as follows. Consider
any $n<N-1$ and suppose that upward incentive constraints hold for
all higher types. In particular, for any $n^{\prime}>n+1$, we have
\begin{align*}
 & v_{t}\left(\delta_{1},H_{2}^{t-1},\delta^{(n+1)}\right)+\delta^{(n+1)}\mathbb{E}_{A}\left[\sum_{\tau=t+1}^{T}\Pi_{s=t+1}^{\tau-1}\tilde{\delta}_{s}v_{\tau}\left(\delta_{1},H_{2}^{t-1},\delta^{(n+1)},\tilde{H}_{t+1}^{\tau}\right)\right]\\
\geq & v_{t}\left(\delta_{1},H_{2}^{t-1},\delta^{(n^{\prime})}\right)+\delta^{(n+1)}\mathbb{E}_{A}\left[\sum_{\tau=t+1}^{T}\Pi_{s=t+1}^{\tau-1}\tilde{\delta}_{s}v_{\tau}\left(\delta_{1},H_{2}^{t-1},\delta^{(n^{\prime})},\tilde{H}_{t+1}^{\tau}\right)\right].
\end{align*}
By the aforementioned monotonicity, we have 
\begin{align*}
 & v_{t}\left(\delta_{1},H_{2}^{t-1},\delta^{(n+1)}\right)+\delta^{(n)}\mathbb{E}_{A}\left[\sum_{\tau=t+1}^{T}\Pi_{s=t+1}^{\tau-1}\tilde{\delta}_{s}v_{\tau}\left(\delta_{1},H_{2}^{t-1},\delta^{(n+1)},\tilde{H}_{t+1}^{\tau}\right)\right]\\
\geq & v_{t}\left(\delta_{1},H_{2}^{t-1},\delta^{(n^{\prime})}\right)+\delta^{(n)}\mathbb{E}_{A}\left[\sum_{\tau=t+1}^{T}\Pi_{s=t+1}^{\tau-1}\tilde{\delta}_{s}v_{\tau}\left(\delta_{1},H_{2}^{t-1},\delta^{(n^{\prime})},\tilde{H}_{t+1}^{\tau}\right)\right].
\end{align*}
By the local incentive constraint, we have 
\begin{align*}
 & v_{t}\left(\delta_{1},H_{2}^{t-1},\delta^{(n)}\right)+\delta^{(n)}\mathbb{E}_{A}\left[\sum_{\tau=t+1}^{T}\Pi_{s=t+1}^{\tau-1}\tilde{\delta}_{s}v_{\tau}\left(\delta_{1},H_{2}^{t-1},\delta^{(n)},\tilde{H}_{t+1}^{\tau}\right)\right]\\
\geq & v_{t}\left(\delta_{1},H_{2}^{t-1},\delta^{(n+1)}\right)+\delta^{(n)}\mathbb{E}_{A}\left[\sum_{\tau=t+1}^{T}\Pi_{s=t+1}^{\tau-1}\tilde{\delta}_{s}v_{\tau}\left(\delta_{1},H_{2}^{t-1},\delta^{(n+1)},\tilde{H}_{t+1}^{\tau}\right)\right].
\end{align*}
Therefore,
\begin{align*}
 & v_{t}\left(\delta_{1},H_{2}^{t-1},\delta^{(n)}\right)+\delta^{(n)}\mathbb{E}_{A}\left[\sum_{\tau=t+1}^{T}\Pi_{s=t+1}^{\tau-1}\tilde{\delta}_{s}v_{\tau}\left(\delta_{1},H_{2}^{t-1},\delta^{(n)},\tilde{H}_{t+1}^{\tau}\right)\right]\\
\geq & v_{t}\left(\delta_{1},H_{2}^{t-1},\delta^{(n^{\prime})}\right)+\delta^{(n)}\mathbb{E}_{A}\left[\sum_{\tau=t+1}^{T}\Pi_{s=t+1}^{\tau-1}\tilde{\delta}_{s}v_{\tau}\left(\delta_{1},H_{2}^{t-1},\delta^{(n^{\prime})},\tilde{H}_{t+1}^{\tau}\right)\right].
\end{align*}
This establishes that incentive constraints for $\delta^{\left(n\right)}$
are satisfied upwards, and by induction upwards incentive constraints
are satisfied for all types.

\end{proof}

\bigskip{}

\begin{proof}[Proof of Lemma~\ref{lem:UPWARD_BINDS_GEN}]

Suppose for a contradiction there is $\bar{t}\in\left\{ 2,\dots,T-1\right\} $,
and any history $\left(\bar{\delta}_{1},\bar{H}_{2}^{t-1}\right)$,
and some $\delta^{\left(\bar{n}\right)}$ with $\bar{n}<N$ such that
the local upward incentive constraint fails to hold with equality.
That is, 
\begin{align*}
 & v_{\bar{t}}\left(\bar{\delta}_{1},\bar{H}_{2}^{\bar{t}-1},\delta^{(\bar{n})}\right)+\delta^{(\bar{n})}\mathbb{E}_{A}\left[\sum_{\tau=\bar{t}+1}^{T}\Pi_{s=\bar{t}+1}^{\tau-1}\tilde{\delta}_{s}v_{\tau}\left(\bar{\delta}_{1},\bar{H}_{2}^{t-1},\delta^{(\bar{n})},\tilde{H}_{\bar{t}+1}^{\tau}\right)\right]\\
=x+ & v_{\bar{t}}\left(\bar{\delta}_{1},\bar{H}_{2}^{\bar{t}-1},\delta^{(\bar{n}+1)}\right)+\delta^{(\bar{n})}\mathbb{E}_{A}\left[\sum_{\tau=\bar{t}+1}^{T}\Pi_{s=\bar{t}+1}^{\tau-1}\tilde{\delta}_{s}v_{\tau}\left(\bar{\delta}_{1},\bar{H}_{2}^{\bar{t}-1},\delta^{(\bar{n}+1)},\tilde{H}_{\bar{t}+1}^{\tau}\right)\right]
\end{align*}
for some $x>0$. We can determine a new mechanism with utilities $\left(\check{v}_{t}\left(\bar{\delta}_{1},H_{2}^{t}\right)\right)_{t=1}^{T}$
such that the only changes to utility are at the history $\left(\bar{\delta}_{1},\bar{H}_{2}^{\bar{t}-1}\right)$:
\[
\check{v}_{\bar{t}}\left(\bar{\delta}_{1},\bar{H}_{2}^{\bar{t}-1},\delta^{\left(n\right)}\right)=v_{\bar{t}}\left(\bar{\delta}_{1},\bar{H}_{2}^{\bar{t}-1},\delta^{\left(n\right)}\right)+x\sum_{n=1}^{\bar{n}}q_{n}
\]
for $n>\bar{n}$, and 
\[
\check{v}_{\bar{t}}\left(\bar{\delta}_{1},\bar{H}_{2}^{\bar{t}-1},\delta^{\left(n\right)}\right)=v_{\bar{t}}\left(\bar{\delta}_{1},\bar{H}_{2}^{\bar{t}-1},\delta^{\left(n\right)}\right)-x\sum_{n=\bar{n}+1}^{N}q_{n}
\]
for $n\leq\bar{n}$. The only (adjacent) incentive constraints this
affects are: (i) the upward incentive constraint for $\delta^{\left(\bar{n}\right)}$
now holds with equality, and (ii) the downward incentive constraint
for $\delta^{\left(\bar{n}+1\right)}$ is relaxed. In particular,
note that incentive constraints are unaffected for all earlier types
(as expected continuation payoffs under truthful reporting are unaffected).
All other adjacent incentive constraints at history $\left(\bar{\delta}_{1},\bar{H}_{2}^{\bar{t}-1}\right)$
are unaffected as a constant is added to each side of the constraint.

Note also that, by incentive compatibility, $\check{v}_{\bar{t}}\left(\bar{\delta}_{1},\bar{H}_{2}^{\bar{t}-1},\delta^{\left(n\right)}\right)$
remains non-increasing in $n$, and so the consumption levels prescribed
by the new mechanism are non-negative.

Now consider the effect on firm profits. The change in the cost of
the contract to the firm is proportional to
\begin{align*}
 & \sum_{n=1}^{\bar{n}}p_{n}\left(\phi\left(v_{\bar{t}}\left(\bar{\delta}_{1},\bar{H}_{2}^{\bar{t}-1},\delta^{\left(n\right)}\right)-x\left(\sum_{m=\bar{n}+1}^{N}q_{m}\right)\right)\right)+\sum_{n=\bar{n}+1}^{N}p_{n}\left(\phi\left(v_{\bar{t}}\left(\bar{\delta}_{1},\bar{H}_{2}^{\bar{t}-1},\delta^{\left(n\right)}\right)+x\left(\sum_{m=1}^{\bar{n}}q_{m}\right)\right)\right)\\
 & -\sum_{n=1}^{N}p_{n}\left(\phi\left(v_{\bar{t}}\left(\bar{\delta}_{1},\bar{H}_{2}^{\bar{t}-1},\delta^{\left(n\right)}\right)\right)\right)\\
= & -\sum_{n=1}^{\bar{n}}p_{n}\left(\sum_{m=\bar{n}+1}^{N}q_{m}\right)\int_{0}^{x}\phi^{\prime}\left(v_{\bar{t}}\left(\bar{\delta}_{1},\bar{H}_{2}^{\bar{t}-1},\delta^{\left(n\right)}\right)-r\left(\sum_{m=\bar{n}+1}^{N}q_{m}\right)\right)dr\\
 & +\sum_{n=\bar{n}+1}^{N}p_{n}\left(\sum_{m=1}^{\bar{n}}q_{m}\right)\int_{0}^{x}\phi^{\prime}\left(v_{\bar{t}}\left(\bar{\delta}_{1},\bar{H}_{2}^{\bar{t}-1},\delta^{\left(n\right)}\right)+r\left(\sum_{m=1}^{\bar{n}}q_{m}\right)\right)dr\\
< & -\sum_{n=1}^{\bar{n}}p_{n}\left(\sum_{m=\bar{n}+1}^{N}q_{m}\right)\int_{0}^{x}\phi^{\prime}\left(\check{v}_{\bar{t}}\left(\bar{\delta}_{1},\bar{H}_{2}^{\bar{t}-1},\delta^{\left(\bar{n}\right)}\right)\right)dr\\
 & +\sum_{n=\bar{n}+1}^{N}p_{n}\left(\sum_{m=1}^{\bar{n}}q_{m}\right)\int_{0}^{x}\phi^{\prime}\left(\check{v}_{\bar{t}}\left(\bar{\delta}_{1},\bar{H}_{2}^{\bar{t}-1},\delta^{\left(\bar{n}+1\right)}\right)\right)dr\\
\leq & \left(\left(\sum_{m=1}^{\bar{n}}q_{m}\right)\left(\sum_{n=\bar{n}+1}^{N}p_{n}\right)-\left(\sum_{m=\bar{n}+1}^{N}q_{m}\right)\left(\sum_{n=1}^{\bar{n}}p_{n}\right)\right)\int_{0}^{x}\phi^{\prime}\left(\check{v}_{\bar{t}}\left(\bar{\delta}_{1},\bar{H}_{2}^{\bar{t}-1},\delta^{\left(\bar{n}\right)}\right)\right)dr.\\
\leq & 0.
\end{align*}
The first inequality uses the definition of $\check{v}_{\bar{t}}$,
that $\check{v}_{\bar{t}}\left(\bar{\delta}_{1},\bar{H}_{2}^{\bar{t}-1},\cdot\right)$
is non-increasing, and that $\phi$ is strictly convex. The second
inequality uses that $\check{v}_{\bar{t}}\left(\bar{\delta}_{1},\bar{H}_{2}^{\bar{t}-1},\cdot\right)$
is non-increasing, and that $\phi$ is convex. The final inequality
follows because the agent is weakly more optimistic about patience
than the firm in the sense of first-order stochastic dominance. But
we have thereby obtained a mechanism that is strictly more profitable
than the original, contradicting that the original solves Problem
IV. 

\end{proof}

\bigskip{}

\begin{proof}[Proof of Corollary~\ref{cor:WORSE_TERMS}]

Consider any date $t\in\left\{ 2,\dots,T-1\right\} $, any history
$\left(\delta_{1},H_{2}^{t-1}\right)$. It is enough to show that,
for any $n<N$, the continuation cost in Equation (\ref{eq:CON TINUATION_COST})
is strictly higher for $\delta_{t}=\delta^{\left(n+1\right)}$ than
for $\delta_{t}=\delta^{\left(n\right)}$ whenever the continuation
policy is distinct. Suppose not. By Lemma \ref{lem:UPWARD_BINDS_GEN},
type $\delta^{\left(n\right)}$ is willing to report $\delta^{\left(n+1\right)}$
(and report truthfully thereafter). This single change in reporting
provides a different reporting strategy for the agent, inducing different
outcomes --- this mapping from discount factors to utilities/consumption
can be induced by a revised direct mechanism. The revised mechanism
is distinct from the original, gives the agent the same expected payoff,
but weakly lowers the firm's expected cost. Hence, the new mechanism
respects all constraints and hence represents a distinct solution
to Problem IV. This contradicts uniqueness of the solution to Problem
IV. 

\end{proof}

\bigskip{}

\begin{proof}[Proof of Example~\ref{exa:Squareroot}]

Fix $\delta_{1}$ and write $v_{1}=v_{1}\left(\delta_{1}\right)$,
while we let $w_{n}=v_{2}\left(\delta_{1},\delta^{\left(n\right)}\right)$
and $z_{n}=v_{3}\left(\delta_{1},\delta^{\left(n\right)}\right)$.
We further introduce $U_{n}=w_{n}+\delta^{\left(n\right)}z_{n}$. 

The firm's break-even condition is
\[
\phi\left(v_{1}\right)+\sum_{n=1}^{N}p_{n}\left(\frac{\phi\left(w_{n}\right)}{R}+\frac{\phi\left(z_{n}\right)}{R^{2}}\right)\leq I_{3},
\]
while we use Lemma \ref{lem:UPWARD_BINDS_GEN} to substitute in the
incentive constraints. Given $U_{1}$ and $\left(z_{n}\right)_{n=2}^{N}$,
we use the $N-1$ upward incentive constraints holding with equality
to determine $\left(w_{n}\right)_{n=2}^{N}$. In particular, we observe
that for $k\in\left\{ 2,\dots,N\right\} $, 
\[
U_{k}=U_{1}+\sum_{n=2}^{k}\left(\delta^{\left(n\right)}-\delta^{\left(n-1\right)}\right)z_{n}
\]
and therefore 
\[
w_{k}=U_{1}+\sum_{n=2}^{k}\left(\delta^{\left(n\right)}-\delta^{\left(n-1\right)}\right)z_{n}-\delta^{\left(k\right)}z_{k}.
\]

Problem IV can then be written as maximising 
\[
v_{1}+\delta_{1}\sum_{n=1}^{N}q_{n}U_{n}=v_{1}+\delta_{1}\left(U_{1}+\sum_{n=2}^{N}\bar{Q}_{n}z_{n}\left(\delta^{\left(n\right)}-\delta^{\left(n-1\right)}\right)\right),
\]
where $\bar{Q}_{n}=\sum_{m=n}^{N}q_{m}$, subject to the firm's break-even
condition and non-negativity constraints. Rather than imposing the
negativity constraints, we study a relaxed program where all utilities
may be negative, but the implied payment for an arbitrary utility
$v$ is $\phi\left(v\right)=v^{2}$. We then confirm that, when $N$
is sufficiently large, all utilities are strictly positive and hence
the solution to the relaxed problem is the solution to the problem
of interest.

\textbf{Necessary conditions for an optimum. }Due to the firm's break-even
condition, the feasible utilities come from a compact set. The same
argument as in Proposition \ref{prop:EQUILIBRIUM_GENERAL} then implies
the existence of a unique solution to the relaxed problem. We can
then study a Lagrangian with only the firm's break-even condition
\begin{align}
{\cal L=} & v_{1}+\delta_{1}\left(w_{1}+\delta^{\left(1\right)}z_{1}+\sum_{n=2}^{N}\bar{Q}_{n}z_{n}\left(\delta^{\left(n\right)}-\delta^{\left(n-1\right)}\right)\right)\nonumber \\
 & +\lambda\left(I_{3}-\phi\left(v_{1}\right)-\sum_{n=1}^{N}p_{n}\left(\frac{\phi\left(w_{1}+\delta^{\left(1\right)}z_{1}+\sum_{i=2}^{n}\left(\delta^{\left(i\right)}-\delta^{\left(i-1\right)}\right)z_{i}-\delta^{\left(n\right)}z_{n}\right)}{R}+\frac{\phi\left(z_{n}\right)}{R^{2}}\right)\right).\label{eq:Lagrangean_Example}
\end{align}
The stationarity conditions are as follows. Optimality of $v_{1}$
yields the necessary condition
\[
1=\lambda\phi^{\prime}\left(v_{1}\right).
\]
Optimality of $w_{1}$ yields the necessary condition
\begin{equation}
\delta_{1}=\frac{\lambda}{R}\sum_{n=1}^{N}p_{n}\phi^{\prime}\left(w_{1}+\delta^{\left(1\right)}z_{1}+\sum_{i=2}^{n}\left(\delta^{\left(i\right)}-\delta^{\left(i-1\right)}\right)z_{i}-\delta^{\left(n\right)}z_{n}\right).\label{eq:OPTIMAL_W1}
\end{equation}
Then, optimality of $z_{n}$ for $n\geq2$ yields
\begin{align}
0= & \left(\delta^{\left(n\right)}-\delta^{\left(n-1\right)}\right)\left(\delta_{1}\bar{Q}_{n}-\sum_{m=n}^{N}p_{m}\frac{\lambda}{R}\phi'\left(U_{1}+\sum_{i=2}^{m}\left(\delta^{\left(i\right)}-\delta^{\left(i-1\right)}\right)z_{i}-\delta^{\left(n\right)}z_{n}\right)\right)\nonumber \\
 & +\frac{\lambda p_{n}\delta^{\left(n\right)}}{R}\phi^{\prime}\left(U_{1}+\sum_{i=2}^{n}\left(\delta^{\left(i\right)}-\delta^{\left(i-1\right)}\right)z_{i}-\delta^{\left(n\right)}z_{n}\right)-\frac{\lambda p_{n}}{R^{2}}\phi'\left(z_{n}\right),\label{eq:EULER}
\end{align}
while optimality of $z_{1}$ yields
\begin{equation}
\delta^{\left(1\right)}R\phi^{\prime}\left(w_{1}\right)=\phi^{\prime}\left(z_{1}\right).\label{eq:EFFICIENCY_BOTTOM}
\end{equation}
Given the zero-profit condition is an inequality constraint, we have
$\lambda\geq0$, while stationarity requires $\lambda>0$. Note that
Equation (\ref{eq:OPTIMAL_W1}) and $\phi\left(u\right)=2u$ imply
that, as $\lambda\rightarrow0$, mean date-2 utility diverges, i.e.
$\sum_{n=1}^{N}p_{n}w_{n}\rightarrow\infty$, and so there is a violation
of the firm's break-even condition. Hence, $\lambda$ remains bounded
away from zero as $N\rightarrow\infty$.

\textbf{Taking differences. }We now use the specific values of $q_{n}$,
$p_{n}$ and $\phi$. Considering $n\in\left\{ 1,\dots,N-1\right\} $,
taking differences of the optimality conditions for $z_{n}$ (the
condition for $z_{n+1}$ less the condition for $z_{n}$), we have
\begin{align}
0= & \frac{\bar{\delta}-\underline{\delta}}{N-1}\frac{1}{N}\left(-\delta_{1}+\frac{\lambda}{R}4\left(U_{1}+\sum_{i=2}^{n}\left(\delta^{\left(i\right)}-\delta^{\left(i-1\right)}\right)z_{i}-\delta^{\left(n\right)}z_{n}\right)\right)\nonumber \\
 & -\left(z_{n+1}-z_{n}\right)\left(2\frac{\lambda\frac{1}{N}}{R^{2}}+2\frac{\lambda\frac{1}{N}}{R}\delta^{\left(n+1\right)}\delta^{\left(n\right)}\right).\label{eq:FIRST_DIFFERENCE}
\end{align}
Considering $n\in\left\{ 1,\dots,N-2\right\} $ and taking a further
difference, yields
\begin{equation}
z_{n+2}-z_{n+1}=\left(z_{n+1}-z_{n}\right)\frac{\frac{1}{R}+\delta^{\left(n+1\right)}\delta^{\left(n\right)}-2\frac{\bar{\delta}-\underline{\delta}}{N-1}\delta^{\left(n\right)}}{\frac{1}{R}+\delta^{\left(n+2\right)}\delta^{\left(n+1\right)}}.\label{eq:DIFFERENCES}
\end{equation}
Given $N$ is sufficiently large, the signs of both sides are the
same: they are either both strictly positive, both strictly negative,
or both zero. That is, $z_{n}$ is either increasing, decreasing,
or constant.

\textbf{Proof that $z_{n}$ is strictly increasing. }We now show that
$z_{n}$ is increasing by assuming otherwise and reaching a contradiction
if $N$ is large. We will use that, for $n\in\left\{ 1,\dots,N-1\right\} $,
\begin{align*}
w_{n+1}-w_{n}= & U_{1}+\sum_{i=2}^{n+1}\left(\delta^{\left(i\right)}-\delta^{\left(i-1\right)}\right)z_{i}-\delta^{\left(n+1\right)}z_{n+1}\\
 & -\left(U_{1}+\sum_{i=2}^{n}\left(\delta^{\left(i\right)}-\delta^{\left(i-1\right)}\right)z_{i}-\delta^{\left(n\right)}z_{n}\right)\\
= & \delta^{\left(n\right)}\left(z_{n}-z_{n+1}\right).
\end{align*}
Since $z_{n}$ is weakly decreasing (either constant or decreasing)
$w_{n}$ is weakly increasing (constant or increasing). Therefore,
recalling Equation (\ref{eq:OPTIMAL_W1}),

\[
\delta_{1}\leq2\frac{\lambda}{R}\left(U_{1}+\sum_{i=2}^{N}\left(\delta^{\left(i\right)}-\delta^{\left(i-1\right)}\right)z_{i}-\delta^{\left(N\right)}z_{N}\right),
\]
with a strict inequality is $z_{n}$ is strictly decreasing. Therefore,
by Equation (\ref{eq:FIRST_DIFFERENCE}) evaluated at $n=N-1$,
\begin{align*}
0= & \frac{\bar{\delta}-\underline{\delta}}{N-1}\frac{1}{N}\left(\begin{array}{c}
-\delta_{1}+4\frac{\lambda}{R}\left(U_{1}+\sum_{i=2}^{N}\left(\delta^{\left(i\right)}-\delta^{\left(i-1\right)}\right)z_{i}-\delta^{\left(N\right)}z_{N}\right)\\
+4\frac{\lambda}{R}\delta^{\left(N-1\right)}\left(z_{N}-z_{N-1}\right)
\end{array}\right)\\
 & -\left(z_{N}-z_{N-1}\right)\left(2\frac{\lambda\frac{1}{N}}{R^{2}}+2\frac{\lambda\frac{1}{N}}{R}\delta^{\left(N\right)}\delta^{\left(N-1\right)}\right)\\
\geq & \frac{\bar{\delta}-\underline{\delta}}{N-1}\frac{1}{N}\delta_{1}+\left(z_{N}-z_{N-1}\right)\frac{\lambda}{RN}\left(4\delta^{\left(N-1\right)}\frac{\bar{\delta}-\underline{\delta}}{N-1}-\left(\frac{2}{R}+2\delta^{\left(N\right)}\delta^{\left(N-1\right)}\right)\right).
\end{align*}
Since $z_{N}\leq z_{N-1}$, the right side of this inequality is strictly
positive for $N$ sufficiently large. Hence, we obtain a contradiction
whenever $N$ is large enough.

\textbf{Proof that $z_{n}$ is strictly positive for all $n$. }Suppose
then that $z_{n}$ is strictly increasing. We have by Equation (\ref{eq:OPTIMAL_W1})
that 
\begin{equation}
\delta_{1}<2\frac{\lambda}{R}w_{1}=2\frac{\lambda}{R}\left(U_{1}-\delta^{\left(1\right)}z_{1}\right).\label{eq:OPTIMAL_W1_SIMPLIFIED}
\end{equation}
By Equation (\ref{eq:EFFICIENCY_BOTTOM}),
\[
U_{1}-\delta^{\left(1\right)}z_{1}=\frac{z_{1}}{R\delta^{\left(1\right)}}
\]
or
\[
z_{1}=\frac{U_{1}}{\delta^{\left(1\right)}+\frac{1}{R\delta^{\left(1\right)}}}.
\]
Plugging into Equation (\ref{eq:OPTIMAL_W1_SIMPLIFIED}) and rearranging,
we have
\[
U_{1}>\frac{R^{2}\left(\delta^{\left(1\right)}\right)^{2}+R}{2\lambda}\delta_{1}.
\]
Therefore, $z_{1}>0$, which shows that $z_{n}>0$ for all $n$.

\textbf{Proof that $w_{n}$ is strictly positive for all $n$. }Condition
(\ref{eq:EFFICIENCY_BOTTOM}) requires that utility is efficiently
distributed between date 2 and date 3 conditional on the lowest type
$\delta^{\left(1\right)}$, which implies the positivity of $w_{1}$.
However, $w_{n}$ is decreasing, so we need to check that $w_{N}>0$
also. Here, we use Equation (\ref{eq:EULER}), evaluated at $n=N$.
If the first term is sufficiently small, then this equation also states
efficient distribution of utility between date 2 and date 3 conditional
on the highest type $\delta^{\left(N\right)}$, which would be enough
to conclude the result. We therefore need to show that the first term
in Equation (\ref{eq:EULER}) vanishes sufficiently fast with $N$.
It is enough to find a lower bound on $w_{N}=U_{1}+\sum_{i=2}^{N}\left(\delta^{\left(i\right)}-\delta^{\left(i-1\right)}\right)z_{i}-\delta^{\left(N\right)}z_{N}$
that does not depend on $N$ -- doing this constitutes the bulk of
the argument.

We will use from the previous step the following observations. First,
\begin{equation}
z_{1}>\frac{\frac{R^{2}\left(\delta^{\left(1\right)}\right)^{2}+R}{2\lambda}\delta_{1}}{\delta^{\left(1\right)}+\frac{1}{R\delta^{\left(1\right)}}},\label{eq:LOWER_BOUND_Z1}
\end{equation}
and therefore, second, 
\begin{equation}
w_{1}=U_{1}-\delta^{\left(1\right)}z_{1}=\frac{z_{1}}{R\delta^{\left(1\right)}}>\frac{R}{2\lambda}\delta_{1}.\label{eq:LOWER_BOUND_W1}
\end{equation}

We are considering the maximiser of the Lagrangean in Equation (\ref{eq:Lagrangean_Example})
and therefore the following perturbation must not increase the value
in that problem: for all $n$, add/subtract a constant (independent
of $n$) from the date-3 utility $z_{n}$, all else fixed. Thus $w_{1}$
is unchanged, and the implied values $\left(w_{n}\right)_{n=2}^{N}$
are unchanged. That this perturbation does not raise the value in
Equation (\ref{eq:Lagrangean_Example}) requires
\[
\delta_{1}\sum_{n=1}^{N}\frac{1}{N}\delta^{\left(n\right)}-\sum_{n=1}^{N}\frac{1}{N}\frac{\lambda}{R^{2}}\phi'\left(z_{n}\right)=0.
\]
Since $z_{n}$ is increasing, this implies
\[
\frac{\lambda}{R^{2}}\phi'\left(z_{1}\right)<\delta_{1}\sum_{n=1}^{N}\frac{1}{N}\delta^{\left(n\right)}
\]
 or
\[
z_{1}<\frac{R^{2}\delta_{1}}{2\lambda}\sum_{n=1}^{N}\frac{1}{N}\delta^{\left(n\right)}.
\]
This implies that 
\begin{equation}
w_{1}=U_{1}-\delta^{\left(1\right)}z_{1}=\frac{z_{1}}{R\delta^{\left(1\right)}}<\frac{R\delta_{1}}{2\lambda\delta^{\left(1\right)}}\sum_{n=1}^{N}\frac{1}{N}\delta^{\left(n\right)}.\label{eq:UPPER_BOUND_W1}
\end{equation}

From Equation (\ref{eq:FIRST_DIFFERENCE}) evaluated at $n=1$, we
have 
\begin{align*}
z_{2}-z_{1}= & \frac{\frac{\bar{\delta}-\underline{\delta}}{N-1}\frac{1}{N}\left(-\delta_{1}+\frac{\lambda}{R}4\left(U_{1}-\delta^{\left(1\right)}z_{1}\right)\right)}{2\frac{\lambda\frac{1}{N}}{R^{2}}+2\frac{\lambda\frac{1}{N}}{R}\delta^{\left(2\right)}\delta^{\left(1\right)}}\\
< & \frac{\frac{\bar{\delta}-\underline{\delta}}{N-1}\frac{1}{N}\left(-\delta_{1}+\frac{2\delta_{1}}{\delta^{\left(1\right)}}\sum_{n=1}^{N}\frac{1}{N}\delta^{\left(n\right)}\right)}{2\frac{\lambda\frac{1}{N}}{R^{2}}+2\frac{\lambda\frac{1}{N}}{R}\delta^{\left(2\right)}\delta^{\left(1\right)}},
\end{align*}
where the inequality follows from Equation (\ref{eq:UPPER_BOUND_W1}).
This can be used to provide an upper bound on $z_{n}-z_{n-1}$ also
for $n\in\left\{ 3,\dots,N\right\} $. In particular, from Equation
(\ref{eq:DIFFERENCES}), we have for $N$ sufficiently large that
$z_{n}-z_{n-1}$ has the same (positive) sign as $z_{n-1}-z_{n-2}$,
but is smaller. Therefore, for $n\geq2$, 
\begin{align*}
 & w_{n}-w_{n-1}\\
= & U_{1}+\sum_{i=2}^{n}\left(\delta^{\left(i\right)}-\delta^{\left(i-1\right)}\right)z_{i}-\delta^{\left(n\right)}z_{n}-\left(U_{1}+\sum_{i=2}^{n-1}\left(\delta^{\left(i\right)}-\delta^{\left(i-1\right)}\right)z_{i}-\delta^{\left(n-1\right)}z_{n-1}\right)\\
\geq & -\delta^{\left(n-1\right)}\left(z_{2}-z_{1}\right)\\
> & -\delta^{\left(n-1\right)}\frac{\frac{\bar{\delta}-\underline{\delta}}{N-1}\left(-\delta_{1}+\frac{2\delta_{1}}{\delta^{\left(1\right)}}\sum_{n=1}^{N}\frac{1}{N}\delta^{\left(n\right)}\right)}{2\frac{\lambda}{R^{2}}+2\frac{\lambda}{R}\delta^{\left(2\right)}\delta^{\left(1\right)}}.
\end{align*}
And so,
\begin{align*}
 & w_{N}-w_{1}\\
= & U_{1}+\sum_{i=2}^{N}\left(\delta^{\left(i\right)}-\delta^{\left(i-1\right)}\right)z_{i}-\delta^{\left(N\right)}z_{N}-\left(U_{1}-\delta^{\left(1\right)}z_{1}\right)\\
\geq & -\sum_{n=2}^{N}\delta^{\left(n-1\right)}\left(z_{2}-z_{1}\right)\\
> & -\delta^{\left(N\right)}\frac{\left(\bar{\delta}-\underline{\delta}\right)\left(-\delta_{1}+\frac{2\delta_{1}}{\delta^{\left(1\right)}}\sum_{n=1}^{N}\frac{1}{N}\delta^{\left(n\right)}\right)}{2\frac{\lambda}{R^{2}}+2\frac{\lambda}{R}\delta^{\left(2\right)}\delta^{\left(1\right)}}.
\end{align*}
Recalling that $\lambda$ remains bounded away from zero as $N\rightarrow\infty$,
this provides a lower bound on $w_{N}$ that is invariant to $N$.
In particular, we have 
\begin{align*}
w_{N}= & U_{1}+\sum_{i=2}^{N}\left(\delta^{\left(i\right)}-\delta^{\left(i-1\right)}\right)z_{i}-\delta^{\left(N\right)}z_{N}\\
> & U_{1}-\delta^{\left(1\right)}z_{1}-\delta^{\left(N\right)}\frac{\left(\bar{\delta}-\underline{\delta}\right)\left(-\delta_{1}+\frac{2\delta_{1}}{\delta^{\left(1\right)}}\sum_{n=1}^{N}\frac{1}{N}\delta^{\left(n\right)}\right)}{2\frac{\lambda}{R^{2}}+2\frac{\lambda}{R}\delta^{\left(2\right)}\delta^{\left(1\right)}}\\
> & \frac{R}{2\lambda}\delta_{1}-\delta^{\left(N\right)}\frac{\left(\bar{\delta}-\underline{\delta}\right)\left(-\delta_{1}+\frac{2\delta_{1}}{\delta^{\left(1\right)}}\sum_{n=1}^{N}\frac{1}{N}\delta^{\left(n\right)}\right)}{2\frac{\lambda}{R^{2}}+2\frac{\lambda}{R}\delta^{\left(2\right)}\delta^{\left(1\right)}}\\
= & \frac{1}{\lambda}\left(\frac{R}{2}\delta_{1}-\delta^{\left(N\right)}\frac{\left(\bar{\delta}-\underline{\delta}\right)\left(-\delta_{1}+\frac{2\delta_{1}}{\delta^{\left(1\right)}}\sum_{n=1}^{N}\frac{1}{N}\delta^{\left(n\right)}\right)}{\frac{2}{R^{2}}+\frac{2}{R}\delta^{\left(2\right)}\delta^{\left(1\right)}}\right).
\end{align*}
We will then use the implication that
\begin{align}
 & \delta_{1}-\frac{\lambda}{R}\phi^{\prime}\left(U_{1}+\sum_{i=2}^{N}\left(\delta^{\left(i\right)}-\delta^{\left(i-1\right)}\right)z_{i}-\delta^{\left(N\right)}z_{N}\right)\nonumber \\
< & \delta_{1}-\frac{2\lambda}{R}\frac{1}{\lambda}\left(\frac{R}{2}\delta_{1}-\delta^{\left(N\right)}\frac{\left(\bar{\delta}-\underline{\delta}\right)\left(-\delta_{1}+\frac{2\delta_{1}}{\delta^{\left(1\right)}}\sum_{n=1}^{N}\frac{1}{N}\delta^{\left(n\right)}\right)}{\frac{2}{R^{2}}+\frac{2}{R}\delta^{\left(2\right)}\delta^{\left(1\right)}}\right)\nonumber \\
= & \delta^{\left(N\right)}\frac{\left(\bar{\delta}-\underline{\delta}\right)\left(-\delta_{1}+\frac{2\delta_{1}}{\delta^{\left(1\right)}}\sum_{n=1}^{N}\frac{1}{N}\delta^{\left(n\right)}\right)}{\frac{2}{R^{2}}+\frac{2}{R}\delta^{\left(2\right)}\delta^{\left(1\right)}}.\label{eq:CONSEQUENCE_BOUND}
\end{align}

Now note that the necessary condition in Equation (\ref{eq:EULER}),
evaluated at $n=N$, is given by 
\[
0=\begin{array}{c}
\frac{\bar{\delta}-\underline{\delta}}{N-1}\frac{1}{N}\left(\delta_{1}-\frac{\lambda}{R}\phi^{\prime}\left(U_{1}+\sum_{i=2}^{N}\left(\delta^{\left(i\right)}-\delta^{\left(i-1\right)}\right)z_{i}-\delta^{\left(N\right)}z_{N}\right)\right)\\
+\frac{\lambda\frac{1}{N}\delta^{\left(N\right)}}{R}\phi^{\prime}\left(U_{1}+\sum_{i=2}^{N}\left(\delta^{\left(i\right)}-\delta^{\left(i-1\right)}\right)z_{i}-\delta^{\left(N\right)}z_{N}\right)-\frac{\lambda\frac{1}{N}}{R^{2}}\phi'\left(z_{N}\right).
\end{array}
\]
By Equation (\ref{eq:CONSEQUENCE_BOUND}), we then have
\begin{align*}
0< & \frac{\bar{\delta}-\underline{\delta}}{N-1}\frac{1}{N}\delta^{\left(N\right)}\frac{\left(\bar{\delta}-\underline{\delta}\right)\left(-\delta_{1}+\frac{2\delta_{1}}{\delta^{\left(1\right)}}\sum_{n=1}^{N}\frac{1}{N}\delta^{\left(n\right)}\right)}{\frac{2}{R^{2}}+\frac{2}{R}\delta^{\left(2\right)}\delta^{\left(1\right)}}\\
 & +\frac{\lambda\frac{1}{N}\delta^{\left(N\right)}}{R}\phi^{\prime}\left(U_{1}+\sum_{i=2}^{N}\left(\delta^{\left(i\right)}-\delta^{\left(i-1\right)}\right)z_{i}-\delta^{\left(N\right)}z_{N}\right)-\frac{\lambda\frac{1}{N}}{R^{2}}\phi'\left(z_{N}\right).
\end{align*}
This implies, using that $z_{n}$ is increasing,
\begin{align}
 & \frac{1}{R\delta^{\left(N\right)}}\phi'\left(z_{1}\right)-\frac{\left(\bar{\delta}-\underline{\delta}\right)^{2}\left(-\delta_{1}+\frac{2\delta_{1}}{\delta^{\left(1\right)}}\sum_{n=1}^{N}\frac{1}{N}\delta^{\left(n\right)}\right)}{\lambda\left(N-1\right)\left(\frac{2}{R}+2\delta^{\left(2\right)}\delta^{\left(1\right)}\right)}\nonumber \\
< & \phi^{\prime}\left(U_{1}+\sum_{i=2}^{N}\left(\delta^{\left(i\right)}-\delta^{\left(i-1\right)}\right)z_{i}-\delta^{\left(N\right)}z_{N}\right)\nonumber \\
=2 & w_{N.}\label{eq:KEY_INEQUALITY}
\end{align}
By Equation (\ref{eq:LOWER_BOUND_Z1}) left-hand side of the inequality
in Equation (\ref{eq:KEY_INEQUALITY}) is greater than
\[
\frac{2}{R\delta^{\left(N\right)}}\frac{\frac{R^{2}\left(\delta^{\left(1\right)}\right)^{2}+R}{2\lambda}\delta_{1}}{\delta^{\left(1\right)}+\frac{1}{R\delta^{\left(1\right)}}}-\frac{\left(\bar{\delta}-\underline{\delta}\right)^{2}\left(-\delta_{1}+\frac{2\delta_{1}}{\delta^{\left(1\right)}}\sum_{n=1}^{N}\frac{1}{N}\delta^{\left(n\right)}\right)}{\lambda\left(N-1\right)\left(\frac{2}{R}+2\delta^{\left(2\right)}\delta^{\left(1\right)}\right)}
\]
which becomes strictly positive as $N\rightarrow\infty$, since the
second term vanishes while the first does not. This shows that $w_{N}$
is indeed strictly positive for large $N$ as required.

\end{proof}

\bigskip{}

\begin{proof}[Proof of Proposition~\ref{prop:BACKLOADING}]

The arguments in the main text imply that, for $n\geq2$, we have
$\delta^{\left(n\right)}R\phi^{\prime}\left(w_{n}\right)<\phi^{\prime}\left(z_{n}\right)$.
Consider reducing utility at date 3 by $\varepsilon$, and increasing
it at date 2 by $\delta^{\left(n\right)}\varepsilon$, thus leaving
discounted utility unchanged. Then this decreases the NPV of firm
costs in case $\delta_{2}=\delta^{\left(n\right)}$ by
\[
-\delta^{\left(n\right)}\phi^{\prime}\left(w_{n}\right)\varepsilon+\frac{\phi^{\prime}\left(z_{n}\right)}{R}\varepsilon+o\left(\varepsilon\right)
\]
which is strictly positive for $\varepsilon$ small enough. The additional
profit can be used to further increase the date-2 utility, i.e. a
profit-preserving change that increases the agent's payoff.

\end{proof}

\bigskip{}

\begin{proof}[Proof of Proposition~\ref{prop:CONSTANT_PERTURBATION}]

Consider equilibrium utility, and consider any $t\in\left\{ 2,\dots,T-1\right\} $,
any history $\left(\delta_{1},H_{2}^{t-1}\right)$ (the remaining
case that considers shifting utility between date 1 and date 2 is
analogous and omitted). Consider reducing date $t$ utility at this
history by a constant $\eta\in\mathbb{R}$ independent of $\delta_{t}$
and raising utility at $t+1$ by $\nu\in\mathbb{R}$. The change in
expected profits conditional on $\left(\delta_{1},H_{2}^{t-1}\right)$
is, to the first order
\[
\sum_{n^{\prime}=1}^{N}p_{n}\frac{\phi^{\prime}\left(v_{t}^{E,T}\left(\delta_{1},H_{2}^{t-1},\delta^{\left(n^{\prime}\right)}\right)\right)}{R^{t-1}}\eta-\sum_{n^{\prime\prime}=1}^{N}\sum_{m=1}^{N}p_{n^{\prime\prime}}p_{m}\frac{\phi^{\prime}\left(v_{t+1}^{E,T}\left(\delta_{1},H_{2}^{t-1},\delta^{\left(n^{\prime\prime}\right)},\delta^{\left(m\right)}\right)\right)}{R^{t}}\nu.
\]
In particular, for the change to leave firm profits unchanged requires
\[
\nu=\frac{\sum_{n=1}^{N}p_{n}\frac{\phi^{\prime}\left(v_{t}^{E,T}\left(\delta_{1},H_{2}^{t-1},\delta^{\left(n\right)}\right)\right)}{R^{t-1}}}{\sum_{n=1}^{N}\sum_{m=1}^{N}p_{n}p_{m}\frac{\phi^{\prime}\left(v_{t+1}^{E,T}\left(\delta_{1},H_{2}^{t-1},\delta^{\left(n\right)},\delta^{\left(m\right)}\right)\right)}{R^{t}}}\eta+o\left(\eta\right)
\]
where $o\left(\eta\right)$ represents terms that vanish faster than
$\eta$ as $\eta\rightarrow0$.

Note that because utility changes by a constant, independent of $\delta_{t}$
and $\delta_{t+1}$, the change leaves incentive constraints unaffected.
The date-$t$ value to the agent of the change is
\[
-\eta+\sum_{n=1}^{N}q_{n}\delta^{\left(n\right)}\nu=\eta\left[-1+\sum_{i=1}^{N}q_{i}\delta^{\left(i\right)}\frac{\sum_{n=1}^{N}p_{n}\frac{\phi^{\prime}\left(v_{t}^{E,T}\left(\delta_{1},H_{2}^{t-1},\delta^{\left(n\right)}\right)\right)}{R^{t-1}}}{\sum_{n=1}^{N}\sum_{m=1}^{N}p_{n}p_{m}\frac{\phi^{\prime}\left(v_{t+1}^{E,T}\left(\delta_{1},H_{2}^{t-1},\delta^{\left(n\right)},\delta^{\left(m\right)}\right)\right)}{R^{t}}}\right]+o\left(\eta\right).
\]
Since the change leaves both firm profits and incentive constraints
unaffected, it must not raise the agent's expected payoff. Therefore,
a necessary condition for optimality in Problem IV is 
\begin{align*}
\sum_{n=1}^{N}p_{n}\sum_{m=1}^{N}p_{m}\phi^{\prime}\left(v_{t+1}^{E,T}\left(\delta_{1},H_{2}^{t-1},\delta^{\left(n\right)},\delta^{\left(m\right)}\right)\right) & =R\sum_{i=1}^{N}q_{i}\delta^{\left(i\right)}\sum_{n=1}^{N}p_{n}\phi^{\prime}\left(v_{t}^{E,T}\left(\delta_{1},H_{2}^{t-1},\delta^{\left(n\right)}\right)\right).
\end{align*}
This establishes the first equality in the proposition. The equalities
relating to the efficient policy follow from analogous reasoning.

\end{proof}

\bigskip{}

\begin{proof}[Proof of Corollary~\ref{Cor:LOG-CASE}]

The corollary follows by taking expectations of the equalities in
the proposition over histories $\tilde{H}_{2}^{t-1}$, and using that
agent beliefs first-order stochastically dominate the firm beliefs
so that $\mathbb{E}_{A}\left[\tilde{\delta}_{t}\right]\geq\mathbb{E}_{F}\left[\tilde{\delta}_{t}\right]$,
with the distributions differing and hence a strict inequality if
the agent is strictly more optimistic than the firm.

\end{proof}
\end{document}